\documentclass[11pt]{article}
\usepackage{makeidx}  
\usepackage{fullpage}
\usepackage{times}
\usepackage{float}
\usepackage{subfigure}
\usepackage{color}
\usepackage{url}
\usepackage{graphicx}
\usepackage{amsmath}
\usepackage{amssymb}

\usepackage{tikz}
\usepackage{ifthen}
\usepackage{fp}
\usetikzlibrary{arrows,shapes}
\usepackage{algorithmic}
\usepackage[ruled]{algorithm2e}

\usepackage{bbm}

\newcommand{\qed}{\mbox{}\hspace*{\fill}\nolinebreak\mbox{$\rule{0.6em}{0.6em}$}}

\newcommand{\expect}{{\mathbb E}}
\newcommand{\prob}{{\mathbb{P}}}

\definecolor{gray}{rgb}{0.5,0.5,0.5}
\newcommand{\e}{{\epsilon}}

\newcommand{\B}{{\bf B}}

\newtheorem{theorem}{Theorem}
\newtheorem{lemma}[theorem]{Lemma}
\newtheorem{claimp}[theorem]{Claim}
\newtheorem{claim}[theorem]{Claim}

\newtheorem{definition}[theorem]{Definition}
\newtheorem{remark}[theorem]{Remark}

\newenvironment{proof}{{\bf Proof:}}{$\qed$\par}

\newenvironment{proofof}[1]{\noindent{\bf Proof of #1:}}{$\qed$\par}

\usepackage{hyperref}
\hypersetup{
bookmarksnumbered
}

\renewcommand{\b}{{\mbox {\bf b}}}

\newcommand{\I}{\mathbf I}
\renewcommand{\P}{\mathbf P}

\newcommand{\rs}{Ruzsa-Szemer\'edi~}

\newcommand{\bool}{\{0, 1\}}
\newcommand{\F}{{\mathcal F}}
\newcommand{\wt}[1]{\widetilde{#1}}

\textheight 9in 
\textwidth 6.6in

{\makeatletter
 \gdef\xxxmark{%
   \expandafter\ifx\csname @mpargs\endcsname\relax 
     \expandafter\ifx\csname @captype\endcsname\relax 
       \marginpar{xxx}
     \else
       xxx 
     \fi
   \else
     xxx 
   \fi}
 \gdef\xxx{\@ifnextchar[\xxx@lab\xxx@nolab}
 \long\gdef\xxx@lab[#1]#2{{\bf [\xxxmark #2 ---{\sc #1}]}}
 \long\gdef\xxx@nolab#1{{\bf [\xxxmark #1]}}
}

%

%
\begin{document}
\setcounter{page}{0}
\title{\Large Better bounds for matchings in the streaming model}
\date{}

\author{Michael Kapralov\\EPFL}

\maketitle              

\begin{abstract}
In this paper we present improved bounds for approximating maximum matchings in bipartite graphs in the streaming model. First, we consider the question of how well maximum matching can be approximated in a single pass over the input when $\tilde O(n)$ space is allowed, where $n$ is the number of vertices in the input graph. Two natural variants of this problem have been considered in the literature: (1) the edge arrival setting, where edges arrive in the stream and (2) the vertex arrival setting, where vertices on one side of the graph arrive in the stream together with all their incident edges. The latter setting has also been studied extensively in the context of {\em online algorithms}, where each arriving vertex has to either be matched irrevocably or discarded upon arrival.  In the online setting, the celebrated algorithm of Karp-Vazirani-Vazirani achieves a $1-1/e$ approximation by crucially using randomization (and using $\tilde O(n)$ space). Despite the fact that the streaming model is less restrictive in that  the algorithm is not constrained to match vertices irrevocably upon arrival,  the best known approximation in the streaming model with vertex arrivals and $\tilde O(n)$ space is the same factor of $1-1/e$.

 We show that no (possibly randomized) single pass streaming algorithm  constrained to use $\tilde O(n)$ space can achieve a better than $1-1/e$ approximation to maximum matching, even in the vertex arrival setting.  This leads to the striking conclusion that no single pass streaming algorithm can get any advantage over online algorithms unless it uses significantly more than $\tilde O(n)$ space. Additionally, our bound yields the best known impossibility result for approximating matchings in the {\em edge arrival} model (improving upon the bound of $2/3$ proved by Goel at al[SODA'12]).
 
 Second, we consider the problem of approximating matchings in multiple passes in the vertex arrival setting. We show that a simple fractional load balancing approach achieves approximation ratio $1-e^{-k}k^{k-1}/(k-1)!=1-\frac1{\sqrt{2\pi k}}+o(1/k)$ in $k$ passes using linear space. Thus, our algorithm achieves the best possible $1-1/e$ approximation in a single pass and improves upon the $1-O(\sqrt{\log\log k/k})$ approximation in $k$ passes due to Ahn and Guha[ICALP'11]. Additionally, our approach yields an efficient solution to the Gap-Existence problem considered by Charles et al[EC'10].
\end{abstract}

\newpage
\newcommand{\ifcomplete}[1]{}
\newcommand{\ifshort}[1]{#1}
\newcommand{\iflong}[1]{}
\newcommand{\duallabel}[1]{\label{#1}}
\newcommand{\dualref}[1]{\ref{#1}}
\newcommand{\dualeqref}[1]{\eqref{#1}}

\section{Introduction}

The need to process modern massive data sets necessitates rethinking classical solutions to many combinatorial optimization problems from the point of view of space usage and type of access to the data that algorithms assume. Applications in domains such as processing web-scale graphs, network monitoring or data mining among many others prohibit solutions that load the whole input into memory and assume random access to it. The streaming model of computation has emerged as a more realistic model for processing modern data sets. In this model the input is given to the algorithm as a stream, possibly with multiple passes allowed. The goal is to design algorithms that require small space and ideally one or a small constant number of passes over the data stream to compute a (often approximate) solution. For many problems with applications in network monitoring, it has been shown that space polylogarithmic in the size of the input is often sufficient to compute very good approximate solutions. On the other hand, even basic graph algorithms have been shown to require $\Omega(n)$ space in the streaming model\cite{FKMSZ05}, where $n$ is the number of vertices. A common relaxation is to allow $O(n\cdot \text{polylog}(n))$ space, a setting often referred to as the {\em semi-streaming} model. 

\subsection{Matchings in the streaming model} 
The problem of approximating maximum matchings in bipartite graphs has received significant attention recently, and very efficient small-space solutions are known when multiple passes are allowed\cite{feigenbaum,mcgregor,eggert,guha-ahn-1,kmm11}. The best known algorithm due to Ahn and Guha \cite{guha-ahn-1} achieves a $1-O(\sqrt{\log\log k/k})$ in $k$ passes for the weighted as well as the unweighted version of the problem using $\tilde O(kn)$ space.

All algorithms mentioned above require at least two passes to achieve a nontrivial approximation. The problem of approximating  matchings in  a single pass has recently received significant attention\cite{gkk:streaming-soda12, kmm11}. Two natural variants of this problem have been considered in the literature: (1) the edge arrival setting, where edges arrive in the stream and (2) the vertex arrival setting, when vertices on one side of the graph arrive in the stream together with all their incident edges. The latter setting has also been studied extensively in the context of {\em online algorithms}, where each arriving vertex has to either be matched irrevocably or discarded upon arrival.  

In a single pass, the best known approximation in the edge arrival setting is still $1/2$, achieved by simply keeping a maximal matching (this was improved to $1/2+\e$ for a constant $\e>0$ under the additional assumption of random edge arrivals~\cite{kmm11}).  It was shown in \cite{gkk:streaming-soda12} that no $\tilde O(n)$ space algorithm can achieve a better than $2/3$ approximation in this setting.

In the vertex arrival setting, the best known algorithms achieve an approximation of $1-1/e$. The assumption of vertex arrivals allows one to leverage results from online algorithms \cite{kvv,my11,kmt11}. In the online model vertices on one side of the graph are known, and vertices on the other side arrive in an adversarial order. The algorithm has to either match a vertex irrevocably or discard upon arrival. The celebrated algorithm of Karp-Vazirani-Vazirani achieves a $1-1/e$ approximation for the online problem by crucially using randomization (additionally, this algorithm only uses $\tilde O(n)$ space). A {\em deterministic} single pass $\tilde O(n)$ space $1-1/e$ approximation in the vertex arrival setting was given in \cite{gkk:streaming-soda12} (such a deterministic solution is provably impossible in the online setting). In \cite{gkk:streaming-soda12}, the authors also showed by analyzing a natural one-round communication problem that no  single-pass streaming algorithm that uses $\tilde O(n)$ space can obtain a better than $3/4$ approximation in the vertex arrival setting. They also provided a protocol for this communication problem that matches the $3/4$ approximation ratio, suggesting that new techniques would be needed to prove a stronger impossibility result.

\paragraph{Recent work.} The lower bound presented in this paper has recently been improved to $\frac1{1+\ln 2}\approx 0.591$ by~\cite{Kapralov21} for the more general edge arrival model, following exciting developments in online matching~\cite{WangW15,EpsteinLSW13,GamlathKMSW19}. A $2/3$-approximation to maximum matching in a single pass over a randomly ordered stream of edges  in $n\log^{O(1)} n$ space has recently been given by~\cite{Bernstein20} (reducing the space complexity of the approach of~\cite{AssadiBBMS19} from $\widetilde O(n^{3/2})$ to $n \log^{O(1)} n$), and very recently improved to $2/3+\Omega(1)$ by ~\cite{AssadiBehnezhad21}. A $1-O(1/\sqrt{k})$ approximation in $k$ passes using $n\log^{O(1)} n$ space was given by~\cite{AssadiLT21}.

\subsection{Our results}
In this paper, we improve upon the best known bounds for both the single pass and multi-pass settings. In the single pass setting, we prove an optimal impossibility result for vertex arrivals, which also yields the best known impossibility result in the edge arrival model. For the multipass setting, we give a simple algorithm that improves upon the approximation obtained by Ahn and Guha in the vertex arrival setting, as well as yields an efficient solution to the Gap-Existence problem considered by Charles et al\cite{dev10}.

\medskip
\noindent
{\bf Lower bounds.} Our main result is an optimal bound on the best approximation ratio that a single-pass $\tilde O(n)$ space streaming algorithm can achieve in the vertex arrival setting:
\begin{theorem}\duallabel{thm:main}
No (possibly randomized) one-pass streaming algorithm can obtain a $(1-1/e+c)$-approximation to the maximum matching with probability at least $3/4$ for any constant $c>0$, unless it uses at least $n^{1+\Omega_c(1/\log\log n)}$ space, even in the vertex arrival model.
\end{theorem}

\begin{remark}
In fact, we prove a more refined statement: for every integer $k\geq 2$ if the edge set is partitioned among $k$ players communicating in the number-in-hand model (with the $i$-th player sending a single message to the $(i+1)$-th after receiving a message from the $(i-1)$-th player) no algorithm can achieve a $1-(1-1/k)^k+\Omega(1)$ approximation to maximum matching unless it uses $n^{1+\Omega(1/\log\log n)}$ communication. 
\end{remark}
We note that this bound is matched by the randomized KVV algorithm\cite{kvv} for the online problem and the deterministic $\tilde O(n)$ space algorithm of \cite{gkk:streaming-soda12}. One striking consequence of our bound is that no single-pass streaming algorithm  can improve upon the more constrained {\em online} algorithm of KVV, which has to make irrevocable decisions, unless is uses significantly more than $\tilde O(n)$ space. Our bound also improves upon the best known bound of $2/3$ for small space one-pass streaming algorithms in the {\em edge arrival model}.

\medskip
\noindent
It was shown in \cite{gkk:streaming-soda12} via an analysis of the natural two-party communication problem that no one-pass streaming algorithm that uses $\tilde O(n)$ space can achieve approximation better than $2/3$ in the edge arrival setting and $3/4$ in the vertex arrival setting. Furthermore, the authors also gave a communication protocol that proves the optimality of both bounds for the communication problem, thus suggesting that a more intricate approach would be needed to prove better impossibility results.
While the lower bounds from \cite{gkk:streaming-soda12} follow from a construction of a distribution on inputs that consists of two parts and hence yields a two-party communication problem, here we obtain an improvement by constructing hard input sequences that consist of  $k$ parts instead of two, getting a lower bound that approaches $1-1/e$ for large $k$. 

\noindent{\bf Upper bounds.} We show that a simple algorithm based on fractional load balancing achieves the optimal $1-1/e$ approximation in a single pass and $1-\frac1{\sqrt{2\pi k}}+o(k^{-1/2})$ approximation in $k$ passes,  improving upon the best known algorithms for this setting:
\begin{theorem}\duallabel{thm:main-ubound}
There exists an algorithm for approximating the maximum matching $M$ in a bipartite graph $G=(P, Q, E)$ with the $P$ side arriving in the stream to factor $1-e^{-k}k^{k-1}/(k-1)!=1-\frac1{\sqrt{2\pi k}}+O(k^{-3/2})$ in $k$ passes using $O(|P|+|Q|)$ space.   The algorithm can be implemented to run in nearly linear time in the number of edges in the graph per pass, with space complexity $\widetilde{O}(|P|+|Q|)$.
\end{theorem}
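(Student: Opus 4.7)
The plan is to design a multi-pass fractional water-filling algorithm that iteratively refines a fractional matching across $k$ passes, then extract an integral matching at the end using integrality of the bipartite matching polytope. The algorithm maintains, after pass $i$, a fractional assignment $x^{(i)}_{pq} \geq 0$ with $\sum_{q \in N(p)} x^{(i)}_{pq} \leq 1$ for every $p$ and load $y^{(i)}_q := \sum_p x^{(i)}_{pq} \leq 1$ for every $q$. In pass $i{+}1$, for each arriving $p \in P$ (whose incident edges are enumerated via LIST-NEIGHBORS), we perform a local water-filling update of $x^{(i+1)}_{p,\cdot}$: raise the loads of the neighbors $N(p)$ toward a common threshold $T_p$ chosen so that either $\sum_q x^{(i+1)}_{pq}=1$ or $T_p=1$, thereby minimizing the maximum load on $N(p)$. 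Successive passes improve over a single pass because each pass propagates $Q$-side congestion information back to $P$-side decisions, letting pass $i{+}1$ correct the slack left by pass $i$.

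To analyze the approximation ratio, I would fix an optimal matching $M^*$ of size $\opt$ and, for $(p^*,q^*) \in M^*$, track the deficit $d_i(p^*,q^*) := \max(0, 1 - y^{(i)}_{q^*})$. The water-filling rule enforces an exchange inequality: whenever $d_{i+1}(p^*,q^*)>0$, the flow out of $p^*$ that does not reach $q^*$ must be absorbed by other neighbors of $p^*$ whose loads are at least as large as $y^{(i)}_{q^*}$. Summing this inequality over $M^*$ and double-counting over $P$ and $Q$ produces a recurrence $\overline d_{i+1} \leq \Phi(\overline d_i)$ for the average deficit $\overline d_i := \opt^{-1} \sum_{M^*} d_i$, where $\Phi$ is an explicit convex map coming from the water-filling tight-threshold equation.

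The main technical step, and the obstacle I expect to be the hardest, is showing that this recurrence unfolds to the sharp bound $\overline d_k \leq e^{-k} k^{k-1}/(k-1)!$ rather than to a loose exponential-in-$k$ bound. The coefficients $k^{k-1}/(k-1)!$ arise as the Taylor coefficients of the tree function $T$ defined implicitly by $T = z e^T$, via Lagrange inversion applied to the water-filling fixed-point equation; this is the same combinatorics that produces the Borel distribution and the total progeny of a Poisson branching process. Calibrating the deficit potential so that exactly these constants appear will require coupling $P$-side and $Q$-side slacks in the right way, rather than analyzing them independently pass by pass.

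Given the bound on $\overline d_k$, the fractional matching produced after $k$ passes has value at least $(1 - e^{-k} k^{k-1}/(k-1)!)\opt$. To finish, integrality of the bipartite matching polytope allows rounding this fractional matching to an integral one of the same value. On the bookkeeping side, only the loads $y_q$ and the active vertex's allocation $x_{p,\cdot}$ need to be stored (the edges incident to $p$ are supplied by the stream), yielding the $O(|P|+|Q|)$ space bound; since each water-filling step runs in time $O(|N(p)|)$, each pass costs $O(m)$ time, completing Theorem~\dualref{thm:main-ubound}.
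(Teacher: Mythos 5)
Your high-level plan — a fractional water-filling algorithm refined over $k$ passes, analyzed through a recurrence on how much $Q$-side load remains to be accumulated — points in the same direction as the paper's proof. But there are two genuine gaps, and one material difference in the algorithm, that prevent this from being a proof.

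First, the algorithm you describe caps $Q$-side loads at $1$ and recomputes the allocation $x^{(i+1)}_{p,\cdot}$ afresh in each pass. The paper's algorithm instead lets each $u\in P$ dispense one \emph{additional} unit of water per pass (so after pass $k$ each $u$ has dispensed $k$ units total), leaves $Q$-side loads uncapped during the stream, and only caps at level $k$ when extracting the matching via $\frac1k\sum_{v}\min\{k,l^k(v)\}$. These are not the same process: with a hard cap of $1$ during the stream, a vertex $p$ whose neighbors are already saturated cannot route any water, whereas in the paper's process it keeps pushing into the same vertices and the overflow is what drives the Gamma-type concentration. It is not evident that your variant attains the same ratio, and you offer no argument that it does.

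Second, and more centrally, you explicitly defer the heart of the proof. You posit a recurrence $\overline d_{i+1}\leq\Phi(\overline d_i)$ but neither derive $\Phi$ from the water-filling step nor solve it; you then gesture at the tree function, Lagrange inversion and the Borel distribution, which is a speculation about where the constants might come from, not a derivation. In the paper the analogous step is concrete: one defines $b^k(x)$ as the number of $Q$-vertices at load $\geq x$, proves the integro-differential inequality
\begin{equation*}
|M|-b^k(x)\;\leq\;\int_0^x\bigl(b^k(s)-b^{k-1}(s)\bigr)\,ds,
\end{equation*}
and solves it by induction on $k$, identifying the solution with the Gamma$(k,1)$ survival function $F^k$, whence $\frac1k\int_k^\infty F^k(x)\,dx=e^{-k}k^{k-1}/(k-1)!$. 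Nothing in your write-up replaces this chain of reasoning.

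Third, you fix an optimal matching $M^*$ and track deficits $d_i(p^*,q^*)$ along its edges. This is natural when $G$ has a perfect matching, but the paper spends real effort on the general case: it introduces the canonical decomposition $(S_j,T_j,\alpha_j)$ of $G$ and a ``shadow allocation'' with source weights $w_s$ and sink weights $w_t$ precisely because water that spills off $M^*$ into vertices of the ``wrong'' expansion class must be accounted for correctly. Your proposal does not address why tracking deficits only along $M^*$ remains tight when water flows to non-$M^*$ neighbors of varying multiplicity; without a reweighting in the style of the canonical decomposition, the exchange inequality you state does not obviously close into a recurrence over $M^*$-deficits alone. In short: right genre of algorithm, right target for the bound, but the recurrence is asserted rather than proved, its solution is not carried out, and the non-perfect-matching case is untreated.
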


\noindent{\bf The gap-existence problem.} In \cite{dev10} the authors give an algorithm for the closely related {\em gap-existence} problem. In this problem the algorithm is given a bipartite graph $G=(A, I, E)$, where $A$ is the set of advertisers with budgets $B_a, a\in A$ and $I$ is the set of impressions. The graph is lopsided in the sense that $|I|\gg |A|$. A matching $M$ is {\em complete} if $|M\cap \delta(i)|=1$ for all $i\in I$ and $|M\cap \delta(a)|=B_a$ for all $a\in A$.
The gap-existence problem consists of distinguishing between two cases:
\begin{description}
\item[(YES)] there exists a complete matching with budgets $B_a$;
\item[(NO)]  there does not exist a complete matching with budgets $\lfloor (1-\e)B_a\rfloor$.
\end{description}  

 The approach of \cite{dev10} is via sampling the $I$ side of the graph, and yields a solution that allows for non-trivial subsampling when the budgets are large. In particular, they obtain an algorithm with runtime $O\left(\frac{|A|\log |A|}{\e^2}\cdot \frac{|I|}{\min_a |B_a|}\right)$, which is sublinear in the size of the graph when all budgets are large. 
In Section~\dualref{sec:gap} we improve significantly upon their result, showing 
\begin{theorem}\duallabel{thm:gap}
Gap-Existence can be solved in $O(\log (\frac1{\e}\sum_{a\in B_a}B_a)/\e^2)$ passes using space $O(\sum_{a\in A}B_a/\e)$. The time taken for each pass is nearly linear in the representation of the graph.
\end{theorem}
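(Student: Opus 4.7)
The plan is to reduce Gap-Existence to estimating the fractional $b$-matching LP value of $G$ within a $(1-\Theta(\e))$ multiplicative factor. Since the bipartite $b$-matching polytope is integral, a complete matching with budgets $B_a$ exists iff the LP optimum equals $\sum_a B_a$, whereas in the NO case the LP optimum is strictly below $\sum_a \lfloor(1-\e)B_a\rfloor \leq (1-\e/2)\sum_a B_a$ (the additive $|A|$ slack is absorbed into $\e/2$ after handling the trivial case of very small budgets separately). These two ranges are separated by a multiplicative $(1-\Theta(\e))$ gap, so a $(1-\e/3)$-approximation of the LP value is enough to decide.

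I would obtain this approximation via a multiplicative-weights update (MWU) loop whose entire state lives on the advertiser side: a weight $w_a$ and a load counter $y_a$ per advertiser, each discretized to $O(\log(|I|\sum_a B_a))$ bits, so the total state occupies $O(\sum_a B_a/\e)$ words. In each pass I stream over the $P$-vertices using the LIST-NEIGHBORS oracle and, for each impression $i$, route a $1/T$-fraction of $i$'s unit flow to the minimum-weight neighbor $a^*(i) = \arg\min_{a \in N(i)} w_a$, incrementing $y_{a^*(i)}$ accordingly. Between passes, the weights update multiplicatively as $w_a \leftarrow w_a \cdot (1+\e)^{(y_a-B_a)/B_a}$ so that overloaded advertisers become less attractive. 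By the standard Plotkin--Shmoys--Tardos / Arora--Hazan--Kale analysis, $T = O(\log(|I|\sum_a B_a)/\e^2)$ rounds suffice to produce an averaged fractional matching whose value is $(1-\e)$-optimal and which satisfies all constraints within $(1\pm\e)$; each pass runs in time linear in the graph representation, as the min-weight lookup costs $O(|N(i)|)$ per impression once the weights are stored in an advertiser-indexed array. The YES/NO decision is then made by thresholding $\sum_a \min(\bar y_a, B_a)$ at $(1-\e/2)\sum_a B_a$.

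The main obstacle is controlling the \emph{width} of the LP so that the iteration count scales as $O(\log/\e^2)$ rather than worse: a naive per-impression routing that dumps all of $i$'s unit flow onto a single advertiser creates spikes in $w_a$ that force a smaller MWU step size and blow up the pass count. I would resolve this by capping each per-pass increment of $y_a$ at $O(\e B_a/T)$ and invoking a standard width-reduction / flow-splitting argument to show that the LP value is preserved up to a $(1+\e)$ factor. A secondary point is that the $\log|I|$ factor in $T$ absorbs both the MWU's $\log(\text{number of constraints})$ dependence and a binary-search doubling over the a~priori unknown LP value used to calibrate the MWU step size $\eta$; verifying that this doubling loop can be interleaved with the passes without blowing up space should be a routine bookkeeping exercise.
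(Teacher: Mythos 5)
Your reduction to a $(1-\Theta(\e))$-approximation of the fractional $b$-matching LP value, followed by thresholding, is not sound. The claim that in a NO instance the LP optimum with budgets $B_a$ lies strictly below $\sum_a\lfloor(1-\e)B_a\rfloor$ is false: take $A=\{a_1,a_2\}$ with $B_{a_1}=10$, $B_{a_2}=n$, where $a_1$ is isolated and $a_2$ is adjacent to all $n$ impressions. No complete matching with budgets $(\lfloor(1-\e)\cdot 10\rfloor,\lfloor(1-\e)n\rfloor)$ exists, since $a_1$ cannot be matched at all, so this is a valid NO instance; yet the LP optimum with budgets $(10,n)$ equals $n=\left(1-\frac{10}{n+10}\right)\sum_a B_a$, which exceeds your threshold $(1-\e/2)\sum_a B_a$ once $n$ is large. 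If one instead runs the LP with budgets $\lfloor(1-\e)B_a\rfloor$, the YES/NO separation becomes additive (here, exactly $9$) rather than a $(1-\Theta(\e))$ multiplicative factor, so again no constant-precision estimate can decide. The decision cannot be read off from any LP value; the paper's algorithm instead outputs a small-support fractional matching and decides by explicitly testing whether that \emph{support} $G'$ contains an integral complete matching with budgets $\lfloor(1-\e)B_a\rfloor$. The key enabling fact is Lemma~\dualref{lm:f2int}, which converts a $(1-(\sum_a B_a)^{-2})$-optimal fractional matching into an integral one on the same support.

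Two further problems prevent the MWU loop from substituting for the paper's water-filling even as a subroutine. First, the integrality step needs precision $1-(\sum_a B_a)^{-2}$, which would drive a standard width-controlled MWU to $\poly(\sum_a B_a)$ iterations; the paper's analysis gets this precision in only $O(\log(\cdot)/\e^2)$ passes because the Gamma-tail bound (Lemma~\dualref{lm:z}) decays \emph{exponentially} past level $k(1+\e)$, pushing the residual below $(\sum_a B_a)^{-2}$ without shrinking the step. Second, your per-impression routing to the minimum-weight advertiser needs, during a pass, a running minimum over $N(i)$ for every $i\in I$; but the stream is over the advertiser side, $|I|$ far exceeds your space budget, and the oracles only enumerate neighbors of advertisers, not of impressions. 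The paper sidesteps this by sending water in the other direction, from arriving advertisers into a dynamically grown pool $I^*$ of $O(\sum_a B_a/\e)$ impressions maintained via the NEW-NEIGHBOR oracle (Lemma~\dualref{lm:space}).
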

It should also be noted that the result of \cite{dev10} could be viewed as a single pass algorithm, albeit with the stronger assumption that the arrival order in the stream is random.

\medskip
\noindent
{\bf Organization:} We start by presenting a toy version of our lower bound construction in Section~\ref{sec:toy-construction}. The construction in Section~\ref{sec:toy-construction} does not give a strong streaming lower bound, but captures most of the properties of our hard input distribution, while at the same time being quite simple to describe. In Section~\dualref{sec:main} we give the actual lower bound construction and prove Theorem~\ref{thm:main}. Our basic multipass algorithm for approximating matchings is presented in Section~\dualref{sec:multipass}, and the algorithm for Gap-existence is given in Section~\dualref{sec:gap}. 

\section{A toy construction}\label{sec:toy-construction}
 In this section we show that for every integer $k\geq 2$ there exists a distribution ${\mathcal D}$ on input instances to the bipartite matching problem such that a graph $G$ with $N$ vertices sampled from distribution ${\mathcal D}$ has a nearly perfect matching with high probability, but any single-pass streaming algorithm that maintains a subset of edges of $G$ in memory and outputs a matching in the subset of edges retained cannot achieve a better than $1-(1-1/k)^k+\delta$ approximation for a constant $\delta>0$ unless it maintains $\Omega(N\log N)$ edges. 
 
 We define a family of graphs that forms the basis of our hard input instances in Section~\ref{sec:1}. In Section~\ref{sec:2} we define a hard input distribution based on these graphs, prove Theorem~\ref{thm:main-toy} (our main result in this section), which provides the $1-(1-1/k)^k+\delta$ upper bound on the approximation ratio that an algorithm that stores $o(n\log n)$ edges.

\newcommand{\E}{{\mathcal E}}

\subsection{Construction of the input family of graphs}\label{sec:1}

We construct bipartite graphs $G=(S, T, E)$, with $S$ and $T$ the two sides of the bipartition. 

\paragraph{Vertices of $G$: the $T$ side of the bipartition} Let $k\geq 2$ be a large constant integer. Let $m\geq 1$ a multiple of $k$ be a sufficiently large integer.  Let $T=[m]^n$, i.e. vertices in $T$ are vectors of dimension $n$, with each co-ordinate taking values in $[m]=\{1, 2,\ldots, m\}.$ This way we have $N:=|T|=m^n$, so $n=\Omega(\log N)$ for every constant $m$. The vertices on the $S$ side of the bipartition will also be associated with points on the hypercube $[m]^n$, as defined below.

\renewcommand{\I}{{\mathcal I}}

\paragraph{Vertices of $G$: the $S$ side of the bipartition} To define the vertices in the partition $S=S_0\cup S_1\cup \ldots \cup S_k$, we first partition the set of coordinates $[n]$ into $k$ equal size blocks $[n]=B_1 \cup \ldots \cup B_k$. 
Graphs $G=G(j_1,\ldots, j_k)$ will be parameterized by  a sequence $(j_1,\ldots, j_k)\in B_1\times\ldots \times B_k$ of coordinates. Also for each point $x\in [m]^n$ let $Z_x$ be an independent Bernoulli 0/1 random variable with expectation $1/k$ -- we will later choose some fixing of these random variables for the final construction. Then for every $i=0,\ldots, k$ we let 
\begin{equation}\label{eq:def-ti-si}
\begin{split}
T_i&=\left\{y\in [m]^n: y_{j_r}\in (m/k, m]\text{~for all~}r=1,\ldots, i\right\}\\
S_i&=\left\{x\in T_i: Z_x=1\right\}.\\
\end{split}
\end{equation}
Note that $T_0=T$, and for every $i=0,\ldots, k-1$ the set $S_i$ is a subsampling of $T_i$ at rate $1/k$. We also let, for every $i=0,\ldots, k-1$ and $j\in B_{i+1}$ 
\begin{equation*}
\begin{split}
T_i^j&=\left\{y\in T_i: y_{j}\in (m/k, m]\right\}\\
S_i^j&=\left\{x\in S_i: Z_x=1\text{~and}~x_j\in (m/k, m]\right\}.\\
\end{split}
\end{equation*}

We also define for each $i=0,\ldots, k-1$
\begin{equation}\label{eq:s-star-i-def}
\begin{split}
S^*_i&=\left\{x\in S_i: x_{j_r}\in (m/k, m]~\text{~for all~}r=i+1,\ldots, k\right\}.\\
\end{split}
\end{equation}

We will use
\begin{theorem}[Chernoff bound]\label{thm:chernoff-bounds}
Let $X_1,\ldots, X_n$ be independent Bernoulli random variables, let $\mu:=\expect[\sum_{i=1}^n X_i]$. Then 
for every $\delta\in (0, 1)$ one has 
$\prob[|\sum_{i=1}^n X_i-\mu|>\delta\mu]\leq 2e^{-\delta^2\mu/3}$.
\end{theorem}

We first note that 

\begin{lemma}
\label{lm:set-sizes}
For any $k\geq 2$ the following conditions hold. 
{\bf (1)} For every choice of $j_1,\ldots, j_k$ and every $i=0,\ldots, k$ one has $|T_i| = (1 - \frac{1}{k})^i|T|$.
For every $\eta\in (0, 1/2)$ there exists an event $\E_{set-sizes}$ that occurs with probability at least $1-k(\log N)^k e^{-\Omega(\eta^2 N/k)}$ over the random variables $Z_x, x\in [m]^n$ such that conditioned on $\E_{set-sizes}$ one has for every choice of $(j_1,\ldots, j_k)\in B_1\times \ldots \times B_k$ simultaneously for every $i=0,\ldots, k-1$ {\bf (2)}  $|S_i|=(1\pm \eta)|T_i|/k$, {\bf (3)}  $|S_i^j|=(1\pm O(\eta))(1-1/k)|S_i|$,  and {\bf (4)}  $|S^*_i|=(1\pm \eta) |T_k|/k$ (note that this quantity does not depend on $i$). 
\end{lemma}
\begin{proof}
{\bf (1)} follows directly by definition of $T_i$. For {\bf (2)} we first note that by an application of Chernoff bounds for a fixed collection $j_1,\ldots, j_k$ one has $|S_i|=(1\pm \eta)|T_i|/k$ with probability at least $1-e^{-\Omega(\eta^2 N/k)}$, where we used the fact that $(1-1/k)^i\geq (1-1/k)^k\geq (1-1/2)^2$ for every $i=0,\ldots, k$, since $k\geq 2$ by assumption of the lemma. A union bound over at most $(\log N)^k$ choices for $j_1,\ldots, j_k$ and $k$ choices for $i$ gives the result of the lemma. The third and fourth bound follow analogously.
\end{proof}

We need the following simple lemma:
\begin{lemma}\label{lm:typical-degrees}
For every $i=0,\ldots, k-1$, every $(j_1,\ldots, j_i)\in B_1\times \ldots\times B_i$ the following conditions hold. For every $j\in B_{i+1}$, every $z\in T_i^j$ let $\deg_j(z)$ denote the number of $j'\in B_{i+1}\setminus \{j\}$ such that $z\in T_i^{j'}$.  Let $\overline{\deg}_j(z)$ denote the number of $j'\in B_{i+1}\setminus \{j\}$ such that $z\in T_i\setminus T_i^j$. Then for every $\eta\in (0, 1/2)$ one has $\deg_j(z)\in (1\pm \eta) (1-1/k) (|B_{i+1}|-1)$ and $\overline{\deg}_j(z)\in (1\pm \eta) (|B_{i+1}|-1)/k$ for all but a $N^{-\Omega(\eta^2/m^2)}$ fraction of $z\in T$. The same bounds hold for $z\in S_i^j$.
\end{lemma}
\begin{proof}
Recall that $T_i=\left\{y\in [m]^n: y_{j_r}\in (m/k, m]\text{~for all~}r=1,\ldots, i\right\}$. We thus have
$$
T_i^j=\left\{y\in [m]^n: y_{j_r}\in (m/k, m]\text{~for all~}r=1,\ldots, i\text{~and~}y_{j}\in (m/k, m]\right\}
$$
and 
$$
T_i^{j'}=\left\{y\in [m]^n: y_{j_r}\in (m/k, m]\text{~for all~}r=1,\ldots, i\text{~and~}y_{j'}\in (m/k, m]\right\}.
$$
Since $j_r\in B_r$ for every $r=1,\ldots, i$, and $j, j'\in B_{i+1}$, and $B_1, \ldots, B_k$ are disjoint, we have that coordinate $y_{j'}$ is unconstrained in $T_i^j$, a uniformly random $z\in T_i^j$ satisfies $z_{j'}\in (m/k, m]$ with probability exactly $1-1/k$. Furthermore, these events are independent for different collections of coordinates in $B_{i+1}\setminus \{j\}$. Select $z\in T_i^j$ uniformly at random. For $j'\in B_{i+1}\setminus \{j\}$ let $F_{j'}=1$ if $z\in T_i^{j'}$ and $F_{j'}=0$ otherwise (note that $\expect[F_{j'}]=1-1/k$ for every $j'\in B_{i+1}\setminus \{j\}$). We now have by the Chernoff bound (Theorem~\ref{thm:chernoff-bounds}) that for every $\eta\in (0, 1/2)$ 
$$
\prob_{z\sim UNIF(T_i^j)}\left[\sum_{j'\in B_{i+1}\setminus \{j\}} F_{j'}\not \in (1\pm \eta) (1-1/k) (|B_{i+1}|-1)\right]\leq  2e^{-\Omega(\eta^2 |B_{i+1}|})=N^{-\Omega(\eta^2/m^2)},
$$
where we used the fact that $|B_{i+1}|=n/m=(\log_m N)/m$ in the last transition. This proves the first claim. The proof of the second and third claim is analogous.
\end{proof}

\paragraph{Edges of $G$.}

For each $i=0,\ldots, k-1$ edges of the subgraph $G=(P_i, Q, E_i)$ will be associated with coordinates in $B_{i+1}$, as we now describe. Specifically, each coordinate $j\in B_{i+1}$ will correspond to a set of edges in $G$ that form a rather large near-matching (of size $\Omega(N/k)$, as described below).

For each $i=0,\ldots, k-1$ the edge set $E_i\subseteq S_i\times T_i$ are defined as follows.  For each coordinate $j\in B_i$ for each $x\in [m]^n$ we let 
$$
\text{line}_j(x)=\{x'\in [m]^n: (x'-x)_s=0\text{~for all~}s\neq j\}
$$
denote the line through $x$ in coordinate direction $j$. Note that $|\text{line}_j(x)|=m$ for all $x$. Furthermore, we have 
\begin{lemma}\label{lm:line-properties}
For every $\eta\in (0, 1/2)$, if $C>0$ is a sufficiently large constant, then for $m\geq C\eta^{-2} k \log \eta^{-1}$ a multiple of $k$,  for every $i=0,\ldots, k-1$, every $(j_1,\ldots, j_i)\in B_1\times \ldots \times B_i$ for each $y\in T_i$ one has for each $j\in B_{i+1}$
\begin{itemize}
\item[(1)] $|\text{line}_j(y)|=m$ and $\text{line}_j(y)\subseteq T_i$;
\item[(2)] $|\text{line}_j(y)\setminus T_i^j|=m/k$;
\item[(3)] there exists an event $\E_{large-lines}(j_1,\ldots, j_i, j)$ that occurs with probability at least $1-e^{-\Omega(\eta^2 N/k)}$ such that conditioned on $\E_{large-lines}(j_1,\ldots, j_i, j)$ the number of $y\in T_i$ such that $|\text{line}_j(y)\cap S_i^j|\not \in (1\pm \eta)|\text{line}_j(y)|(1-1/k)/k$ is upper bounded by $\eta^2 |T_i|$.
\end{itemize}
In particular, there exists an event $\E_{large-lines}$ that occurs with probability at least $1-k(\log N)^{k} e^{-\Omega(\eta N/k)}$ such that for every $i=0,\ldots, k-1$, every collection $j_1,\ldots, j_i$, every $j\in B_{i+1}$ one has that 
the number of $y\in T_i$ such that $|\text{line}_j(y)\cap S_i|\not \in (1\pm \eta)|\text{line}_j(x)|/k$ is upper bounded by $2\eta^2 |T_i|$.
\end{lemma}
\begin{proof}
The first claim follows since, due to the assumption that $y\in T_i$ we have
\begin{equation*}
\begin{split}
\text{line}_j(y)&=\left\lbrace y'\in [m]^n: (y'-y)_s=0\text{~for all~}s\neq j \right\rbrace\\
&=\left\lbrace y'\in [m]^n: (y'-y)_s=0\text{~for all~}s\neq j, y'_{j_r}\in (m/k, m]\text{~for all~}r=1,\ldots, i\right\rbrace \\
&\subseteq T_i\\
\end{split}
\end{equation*}
since $j\neq j_1,\ldots, j_i$ due to the assumption that $j\in B_{i+1}$.

The second claim follows similarly. For the third claim note that 
\begin{equation*}
\begin{split}
\expect_Z\left[|\text{line}_j(y)\cap S_i^j|\right]&=\sum_{y'\in \text{line}_j(y) \cap T_i^j} \prob_Z[y\in S_i^j]\\
&=|\text{line}_j(x)|(1-1/k)/k,
\end{split}
\end{equation*}
where we used the fact that $|\text{line}_j(y)\cap T_i^j|=m/k$ for every $y\in T_i$ by {\bf (2)} and $|\text{line}_j(y)|=m$ by {\bf (1)}.
Since $m/k\geq C\eta^{-2}\log \eta^{-1}$ for a constant $C>0$ by assumption of the lemma, the claim follows by the Chernoff bound (Theorem~\ref{thm:chernoff-bounds}). The final claim follows by a union bound over all choices of $i, j_1,\ldots, j_i, j$.
\end{proof}

We now condition on the event $\E_{large-lines}$ from Lemma~\ref{lm:line-properties}, so that that $|\text{line}_j(x)\cap S_i^j|\not \in (1\pm \eta)|\text{line}_j(x)|/k$ for all $i=0,\ldots, k-1$, $j\in B_{i-1}$ and all but $2\eta^2  |T_i|$ choices of $x\in T_i$. 

\paragraph{Defining the edges induced by $T_i\cup S_i$.} We now define the edges of $G=G(j_1,\ldots, j_k)$ induced by $T_i\cup S_i$ (note that these edges are a function of the prefix $(j_1,\ldots, j_i)$ only). The edge set is a union of a large number of induced subgraphs of constant size. We will need 
\begin{definition}[Typical line]\label{def:typical-line-toy} For every $i=0,\ldots, k-1$, every $(j_1,\ldots, j_i)\in B_1\times \ldots \times B_i$, $j\in B_{i+1}$, for $z\in T_i$ we say that $\text{line}_j(z)$ is typical if $|\text{line}_j(z)\cap S_i^j|\in (1\pm \eta)|\text{line}_j(z)|(1-1/k)/k=(1\pm \eta)(1-1/k)m/k$ and atypical otherwise.
\end{definition} 

For every $y\in T_i$, if $\text{line}_j(y)$ is typical, let $\widetilde{\text{line}}_j(y)$ be an arbitrary subset of $\text{line}_j(y) \cap S_i^j$ of size $(1-\eta)|\text{line}_j(y)|(1-1/k)/k=(1-\eta)(1-1/k)\cdot m/k$, and let $\widetilde{\text{line}}_j(y):=\emptyset$ otherwise. \if 0 Now for every $i=0,\ldots, k-1$, every collection $(j_1,\ldots, j_i)\in B_1\times \ldots \times B_i$ and $j\in B_{i+1}$ we define matching $M_{i, j}\subseteq E_i$ as follows.  For each $y\in T_i$ we construct a matching of (most of) the set $\text{line}_j(y)\cap (T_i\setminus T_i^j)$ to (most of) the set $\text{line}_j(y)\cap S_i^j$.
 Note that, conditioned on $\E_{large-lines}$ (defined in Lemma~\ref{lm:line-properties}), one has 
$|\text{line}_j(y)\cap (T_i\setminus T_i^j)|=m/k$ and $|\text{line}_j(y)\cap S_i^j|\in (1\pm \eta)|\text{line}_j(y)|(1-1/k)/k=(1\pm \eta) (1-1/k) m/k$ for most $y\in T_i$. \fi
We now define the edge set of $E_i$. For every $j\in B_{i+1}$, every $y\in T_i$ include a complete bipartite graph between $\widetilde{\text{line}_j(y)}$ and $\text{line}_j(y)\cap (T_i\setminus T_i^j)$, i.e. 
\begin{equation}\label{eq:edges-ei-def}
E_i=\bigcup_{j\in B_{i+1}} E_i^j, \text{~where~}E_i^j=\bigcup_{y\in T_i} \widetilde{\text{line}_j(y)} \times (\text{line}_j(y) \cap (T_i\setminus T_i^j)).
\end{equation}

Note that for every $a\in \text{line}_j(y)\cap (T_i\setminus T_i^j)$ and $b\in \widetilde{\text{line}_j(y)}$ we have $(a-b)_{q}=0$ for all $q\neq j$, $a_j\in [1, m/k]$ and $b_j\in (m/k, m]$. 
We now prove that for every $j$ there exists a matching of (most of) $S_i$ to $T_i\setminus T_i^j$. 

First note that it follows immediately that there exists a matching of at least a $(1-1/k-O(\eta+\eta^2 k))$ fraction of $S_i$ to $T_i\setminus T_i^j$. Indeed, for every $y\in T_i\setminus T_i^j$ such that $\text{line}_j(y)$ is typical as per Definition~\ref{def:typical-line-toy} one can match $\widetilde{\text{line}_j(y)}$, which constitutes a $(1-\eta)(1-1/k)$ fraction of $\text{line}_j(y)$, to $\text{line}_j(y)\cap (T_i\setminus T_i^j)$ through the edges of the complete bipartite graph $\widetilde{\text{line}_j(y)} \times (\text{line}_j(y) \cap (T_i\setminus T_i^j))$.  At the same time the number of $y$'s that belong to atypical lines is at most $2\eta^2 |T_i|=O(\eta^2 k) |S_i|$ by conditioning on $\E_{large-lines}$ and the high probability event $\E_{set-sizes}$ from Lemma~\ref{lm:set-sizes}. While this would have sufficed for proving a $1-1/e$ lower bound, we would like to get a lower bound of $1-(1-1/k)^k$ for every $k\geq 2$. For that we need the slightly harder
\begin{lemma}\label{lm:match-local}
For every $\eta\in (0, 1/2)$, if $C>0$ is a sufficiently large constant, then for $m\geq C\eta^{-2} k \log \eta^{-1}$ a multiple of $k$,  conditioned on $\E_{large-lines}$ (defined in Lemma~\ref{lm:line-properties}) and $\E_{set-sizes}$ (defined in Lemma~\ref{lm:set-sizes}) for every $i=0,\ldots, k-1$, every $(j_1,\ldots, j_i)\in B_1\times \ldots \times B_i$ for each $j\in B_{i+1}$ there exists  a matching of at least $(1-O(\eta+\eta^2 k))|S_i|-N^{-\Omega(\eta^2/m^2)}$ nodes in $S_i$ to $T_i\setminus T_i^j$ for sufficiently large $N$.
\end{lemma}
\begin{proof}
Let $C>0$ be sufficiently large as prescribed by Lemma~\ref{lm:line-properties}.
We prove the existence of the required matching by exhibiting a fractional matching of appropriate size, which implies the result by the integrality of the bipartite matching polytope. The construction proceeds over three steps.

\noindent {\bf Step 1} For every $x\in S_i$ such that $\text{line}_j(x)$ is typical put fractional mass $k/m$ on every edge in $\widetilde{\text{line}_j(x)} \times (\text{line}_j(x) \cap (T_i\setminus T_i^j))$. 
Since $|\text{line}_j(x) \cap (T_i\setminus T_i^j)|=m/k$ by Lemma~\ref{lm:line-properties}, {\bf (2)}, this places a unit of mass on the neighborhood of every vertex in $\widetilde{\text{line}_j(x)}$.
Since $|\widetilde{\text{line}_j(x)}|=(1-\eta)(1-1/k)\cdot m/k$ by definition, this places fractional mass $(1-\eta)(1-1/k)$ on every $y\in \text{line}_j(x)\cap (T_i\setminus T_i^j)$, leaving at least $1/k$ capacity on each such $y$. We assign more fractional mass to use the remaining $1/k$ mass up to an $O(\eta)$ term in step 2.

\noindent {\bf Step 2} For every $x\in S_i\setminus S_i^j$ put fractional mass 
\begin{equation}\label{eq:def-eps}
\e:=\frac1{(m/k)\cdot (1+\eta)(1-1/k) (|B_j|-1)}
\end{equation} on every edge connecting $x$ to $y\in T_i$.  Note that these edges correspond to coordinates $j'\in B_{i+1}\setminus \{j\}$. In particular, if $(x, y)$ is an edge corresponding to coordinate $j'$, then we have $y_q=x_q$ for all $q\neq j'$, and in particular it must be that $y\in T_i\setminus T_i^j$. 

\noindent{\bf Step 3} Let $\deg_j(x)$ denote the number of $j'\in B_{i+1}\setminus \{j\}$ such that $x\in T_i^{j'}$, and let $\overline{\deg_j}(y)$ denote the number of $j'\in B_{i+1}\setminus \{j\}$ such that $y\in T_i\setminus T_i^j$. We now remove all fractional mass assigned to vertices  $x\in S_i$ with $\overline{\deg_j}(x)\not \in (1\pm \eta) \frac1{k} (|B_{i+1}|-1)$ and vertices $y\in T_i$ with $\deg_j(z)\not \in (1\pm \eta) (1-1/k) (|B_{i+1}|-1)$. We refer to such nodes as {\em atypical}.

We now prove upper and lower bounds on the fractional mass assigned by this rule to every $x\in S_i^j, y\in T_i\setminus T_i^j$. This establishes feasibility of the fractional solution and lower bounds its value respectively.

\paragraph{Upper bounding load (feasibility).} For every $j'\in B_{i+1}\setminus \{j\}$ every vertex $x$ is either connected to exactly $|\text{line}_j(x)\cap (T_i\setminus T_i^j)|=m/k$ nodes in $T_i\setminus T_i^j$ with edges in $E_i^{j'}$ or zero nodes (when $x$ belongs to an atypical line in direction $j'$). In the former case coordinate $j'$ contributes exactly $\e\cdot (m/k)$ fractional mass (where $\e$ is defined in~\eqref{eq:def-eps}), and in the latter it contributes $0$. We now get that the total mass contributed to $x$ by directions $j'\neq j$ is no larger than $\deg_j(x)\cdot (m/k)\cdot \e=\deg_j(x)\cdot (m/k)\cdot \frac1{(1+\eta)(m/k)\cdot (1-1/k) (|B_j|-1)}$.  By Lemma~\ref{lm:typical-degrees} for all but $N^{1-\Omega(\eta^2/m^2)}$ of $x\in S_i$ one has
\begin{equation}\label{eq:deg-x-lub}
\deg_j(z)\in (1\pm \eta) (1-1/k) (|B_{i+1}|-1).
\end{equation}
We call such $x$ typical.
We  thus get that the total mass assigned to edges incident on typical $x\in S_i\setminus S_i^j$ is upper bounded by $(1+\eta) (1-1/k) (|B_{i+1}|-1)\cdot (m/k)\cdot \frac1{(m/k)(1+\eta)\cdot (1-1/k) (|B_j|-1)}\leq 1$, and the fractional assignment is feasible for all but $N^{1-\Omega(\eta^2/m)}$ nodes (i.e. for all typical nodes as per definition above). 

Similarly, get by Lemma~\ref{lm:typical-degrees} for all but $N^{1-\Omega(\eta^2/m^2)}$  of $y\in T_i\setminus T_i^j$ one has 
\begin{equation}\label{eq:deg-y-lub}
\overline{\deg_j}(y)\in (1\pm \eta) \frac1{k}(|B_{i+1}|-1).
\end{equation}
Now note that $|\widetilde{\text{line}_j}(x)|=(1-\eta)m(1-1/k)/k$ for every $x$ such that the corresponding line is typical.  The degree in $E_i^j$ of a vertex $y\in T_i\setminus T_i^j$  such that $\text{line}_{j'}(y)$  is thus exactly $(1-\eta)m(1-1/k)/k$ if the corresponding line is typical, and is zero otherwise. The amount of mass assigned to $y$ is thus $\overline{\deg_j}(y)\cdot ((1-\eta)m (1-1/k)/k)\cdot \frac1{(1+\eta)(m/k)\cdot (1-1/k)(|B_j|-1)}\leq (1-\eta)/k\leq 1/k$. Thus, together with the amount of mass assigned in {\bf Step 1} to vertices $y\in T_i\setminus T_i^j$, our assignment is feasible for all but $N^{1-\Omega(\eta^2/m)}$ nodes (i.e. for all typical nodes as per definition above).  

\paragraph{Lower bounding fractional matching size.}  In Step 1 we assigned $(1-\eta)(1-1/k)$ to every node in $y\in T_i\setminus T_i^j$ that belongs to a typical line in direction $j$. The number of such nodes is at least $(1-O(\eta^2 k))|S_i|$ by Lemma~\ref{lm:line-properties}, {\bf (3)} together with Lemma~\ref{lm:set-sizes}, since we condition on $\E_{large-lines}$ and $\E_{set-sizes}$. In Step 2 we assigned $\e:=\frac1{(m/k)\cdot (1+\eta)(1-1/k) (|B_j|-1)}$ mass to every edge from $x\in S_i\setminus S_i^j$ to $y\in T_i\setminus T_i^j$ along some direction $j'\in B_{i+1}\setminus \{j\}$ if the corresponding line is typical.  Thus, for every $j'$ we assigned $\e \cdot (m/k)$ mass to every $x$ that belonged to a typical line in direction $j'$ (all but $O(\eta^2 k) |S_i|$ such $x$ for every direction $j'$ by conditioning on $\E_{large-lines}$). Altogether $x\in S_i\setminus S_i^j$ thus contributed at least 
\begin{equation*}
\begin{split}
&\sum_{j'\in B_{i+1}\setminus \{j\}} \sum_{\substack{x\in S_i\setminus S_i^j: x \text{~typical~and~}\\\text{line}_{j'}(x)~\text{~typical}}} \e\cdot (m/k)\\
&\geq \sum_{j'\in B_{i+1}\setminus \{j\}} \left(-\e\cdot (m/k)\cdot \eta^2 |T_i|+\sum_{x\in S_i\setminus S_i^j: x\text{~typical}} \e\cdot (m/k)\right)\\
&=-\eta^2 \e (m/k)\cdot |B_{i+1}|\cdot |T_i|+ \sum_{x\in S_i\setminus S_i^j: x\text{~typical}} \e\cdot (m/k)\cdot \deg_j(x)\\
&=-O(\eta^2) |T_i|+ \sum_{x\in S_i\setminus S_i^j: x\text{~typical}} \e\cdot (m/k)\cdot \deg_j(x),\\
\end{split}
\end{equation*}
where we used the fact that, conditioned on $\E_{large-lines}$, by Lemma~\ref{lm:line-properties}, {\bf (3)} for every $i$ and every $j\in B_{i+1}$ all but $\eta^2 |T_i|$ belong to typical lines in direction $j$, as well as the definition of $\e$ in~\eqref{eq:def-eps}. We now lower bound the second term:

\begin{equation*}
\begin{split}
\sum_{x\in S_i\setminus S_i^j: x\text{~typical}} \e\cdot (m/k)\cdot \deg_j(x)&\geq \sum_{x\in S_i\setminus S_i^j: x\text{~typical}} \e\cdot (m/k) (1- \eta) (1-1/k) (|B_{i+1}|-1)\\
&\geq  \sum_{x\in S_i\setminus S_i^j: x\text{~typical}} (1-O(\eta))\\
&\geq (1-O(\eta)) |S_i\setminus S_i^j|-N^{-\Omega(\eta^2/m^2)},
\end{split}
\end{equation*}
where the first transition is by definition of typical $x$, and the second is by Lemma~\ref{lm:typical-degrees}. Putting the bounds above together shows that we constructed a fractional matching of size at least $(1-O(\eta+\eta^2 k))|S_i|-N^{-\Omega(\eta^2/m^2)}$, as required.
\end{proof}




\subsection{Hard input distribution and its analysis}\label{sec:2}
\paragraph{Hard input distribution.} First select values of random variables $\{Z_x\}_{x\in [m]^n}$ so that $\E_{set-sizes}$ and $\E_{large-lines}$ occur (we will verify that this is feasible later in the proof of Theorem~\ref{thm:main}, where we set parameters).  The input graph is generated as follows. First for every $i=0,\ldots, k-1$ let $j_i$ be uniformly random in $B_i$. Then for each $i=0,\ldots, k-1$ the edges of the graph induced by $S_i\cup T_i$, namely $E_i$ (defined in~\eqref{eq:edges-ei-def})  arrive in the stream in an arbitrary order.  Finally, a perfect matching of $T_k$ to a fresh set $S_k$ of vertices on the $S$ side arrives.  We denote this distribution over input graphs by ${\mathcal D}$. In this section we are assuming a stylized model, where after every stage the algorithm must select $s=o(N\log N)$ edges to keep in memory, and at the end of the stream must output a  matching in the subgraph that it maintained.  We show in Theorem~\ref{thm:main} that no such algorithm can achieve a better than $1-1/e$ approximation to maximum matching. More specifically, we show that no algorithm can achieve a significantly better than factor $1-(1-1/k)^k$ approximation on a $k$-stage input instance for every constant $k\geq 2$.

\paragraph{Intuition for the construction and lower bound.} We will show in that in order to have performance better than $1-(1-1/k)^k+\delta$ on our instance the algorithm needs to store at least $\Omega(\delta N/k)$ edges from at least one of the sets $E_i^{j_{i+1}}$ (see~\eqref{eq:edges-ei-def}), for some $i=0,\ldots, k-1$. However, since at each step $j_{i+1}$ is uniformly random in $B_{i+1}$  this is impossible if the algorithm can only store $s=o(N\log N)$ edges (i.e. any sublinear fraction of the total number of edges in the graph).

The analysis relies on the several auxiliary lemmas. First, we show that the input graph contains a large matching:
\begin{lemma}\label{lm:matching-size-lb}
For every $\eta\in (0, 1/2)$, if $C>0$ is a sufficiently large constant, then for $m\geq C\eta^{-2} k \log \eta^{-1}$ a multiple of $k$,  conditioned on $\E_{large-lines}$ (defined in Lemma~\ref{lm:line-properties}) and $\E_{set-sizes}$ (defined in Lemma~\ref{lm:set-sizes}), every $(j_1,\ldots, j_k)\in B_1\times \ldots \times B_k$ the graph $G=G(j_1,\ldots, j_k)$ contains a matching of size at least $(1-O(\eta+\eta^2 k))|S|$ if $N$ is sufficiently large.
\end{lemma}
\begin{proof}
Let $C>0$ be sufficiently large as dictated by Lemma~\ref{lm:match-local}. Now by Lemma~\ref{lm:match-local} for every $i=0,\ldots, k-1$ match at least $(1-O(\eta+\eta^2 k))|S_i|-N^{-\Omega(\eta^2/m^2)} |T|$ of $S_i$ to $T_i\setminus T_i^j$. Then match $S_k$ to $T_k$. For every fixed $k, \eta, m$, if $N$ is sufficiently large (i.e. if $n=\log_m N$ is sufficiently large), one has $N^{-\Omega(\eta^2/m^2)}<\eta/k$ and is thus absorbed in the $O(\eta)$ error term.
\end{proof}

\if 0 The basic idea to prove that no online algorithm can give a better than $1-\frac{1}{e}$ approximation, is that the online algorithm can not pick many edges from the set $M_{i,j_{i+1}}$, since only after the edges arrive in the $i^{th}$ stage, is the value of $j_{i+1}$ revealed.
Hence, the algorithm can only get $\frac1{|B_{i+1}|}=\frac{1}{\frac{n}{k}}$ edges of this matching in expectation, which can be arbitrarily small. We then argue that at least some of the edges of the matchings $M_{i, j_{i+1}}$ are critical to constructing a matching of size larger than $(1-1/e)$ fraction of the optimum.
\fi

The following lemma is the source of hardness of our input instance:
\begin{lemma}\label{lm:sparse-cut}
For every $k\geq 2$, every $\eta\in (0, 1/2)$, every integer $m$ a multiple of $k$, every $(j_1,\ldots, j_k)\in B_1\times \ldots \times B_k$, for every $i=0,\ldots, k-1$ for every edge $(x, y)$, $x\in S^*_i$ either $y\in T_k$ or $(x, y)\in E_i^{j_{i+1}}$.
\end{lemma}
\begin{proof}
Consider a edge $(x,y)$ with $x \in S_i^*$ that is not in $E_i^{j_{i+1}}$. We  now show that $y \in T_k$, proving the lemma.
Let $j\neq j_{i+1}\in B_{i+1}$ be such that $(x, y)\in E_i^j$ -- such a $j$ exists by definition of the edge set $E_i$ (recall~\eqref{eq:edges-ei-def}).  This in particular means that $j_r \neq j ~\text{~for all~}r=1,\ldots, k$, since $j\in B_{i+1}$ and the blocks $B_r, r=1,\ldots, k$ are disjoint. By definition of $E_i^j$ we have $y \in T_i\setminus T_i^j$ and $x \in S_i^j$. Furthermore,  we have $(x-y)_s = 0$ for all $s \neq j$.  We thus have $x_{j_{i+1}} = y_{j_{i+1}}$ for all $i=0,\ldots, k-1$. But since $x_{j_{i+1}} > \frac{m}{k}$ for all $i=0,\ldots, k-1$ (by definition of $S^*_i$ in~\eqref{eq:s-star-i-def} and assumption that $x \in S_i^*$), this implies $y_{j_{i+1}} > \frac{m}{k}$ for all $i=0,\ldots, k-1$, so $y\in T_k$ (by definition of $T_k$, see~\eqref{eq:def-ti-si}).\\
\end{proof}

\begin{lemma}\label{lm:cut}
For every $k\geq 2$, $\eta\in (0,1/2)$, if $m$ is an integer multiple of $k$ such that $m\geq C\eta^{-2} k \log \eta^{-1}$ for a sufficiently large constant $C>0$, and if the input graph $G=G(j_1,\ldots, j_r)$ is selected according to the input distribution ${\mathcal D}$ defined above, the following conditions hold. If the streaming algorithm, after being presented with edges revealed in the $i$-th stage for $i=0,\ldots, k-1$, must store a number of edges after each phase, with the overall set of edges remembered over all stages denoted by $E'$, then any matching $M_{ALG}$ contained in $E'$ satisfies
$$
|M_{ALG}|\leq \left(1-(1-1/k)^k\right)|T|+\sum_{i=0}^{k-1} |E_i^{j_{i+1}}\cap E'|+O(\eta)|T|.
$$
\end{lemma}
\begin{proof}
Let the constant $C>0$ be sufficiently large as dictated by Lemmas~\ref{lm:line-properties} and~\ref{lm:match-local}.
We consider the standard reduction of bipartite matching to max-flow (i.e. connect source $s$ to $S$, sink $t$ to $T$) and exhibit a cut in the graph $(S\cup \{s\}, T\cup \{t\}, E')$ of value at most 
$\left(1-(1-1/k)^k\right)|T|+\sum_{i=0}^{k-1} |E_i^{j_{i+1}}\cap E'|+O(\eta)|T|$. By max-flow/min-cut theorem this gives the result.

We now exhibit a cut in this graph and upper bound its size. The source side of the cut is $\{s\}\cup S_k\cup T_k\cup \bigcup_{i=0}^{k-1} S^*_i$. By Lemma~\ref{lm:sparse-cut} edges incident on $S^*_i, i=0,\ldots, k-1$ either belong to the matching $M_i^{j_{i+1}}$ or go to $T_k$, so edges incident on $S^*_i, i=0,\ldots, k-1$ contribute at most $\sum_{i=0}^{k-1} |E_i^{j_{i+1}}\cap E'|$ to the cut value. We thus have that the value of the cut is bounded by
\begin{equation}\label{eq:cut-value}
|T_k|+\sum_{i=0}^{k-1} |S_i\setminus S^*_i|+\sum_{i=0}^{k-1}|E_i^{j_{i+1}}\cap E'|.
\end{equation}

It remains to bound the size of $T_k$, as well as the sizes of $S_i\setminus S^*_i$. We condition on the event $\E_{set-sizes}$ and Lemma~\ref{lm:set-sizes}. Conditioned on this event we have $|T_k|=(1-1/k)^k |T|$ and 
$$
|S^*_i|=(1\pm \eta) |T_k|/k=(1+O(\eta))(1-1/k)^k/k.
$$
Similarly, we have by Lemma~\ref{lm:set-sizes}, {\bf (1)}  that $|T_i|=(1-1/k)^i|T|$, and thus by Lemma~\ref{lm:set-sizes}, {\bf (2)} that $|S_i|=(1\pm O(\eta)) (1-1/k)^i |T|/k$. Using these bounds we get

\begin{equation*}
\begin{split}
\sum_{i=0}^{k-1} |S_i\setminus S^*_i|&=\sum_{i=0}^{k-1} |S_i|-\sum_{i=0}^{k-1} |S^*_i|\\
&=(1\pm O(\eta))\sum_{i=0}^{k-1} (1-1/k)^i |T|/k-(1\pm O(\eta)) (1-1/k)^k |T|\\
&=(1\pm O(\eta)) (1-(1-1/k)^k)|T|-(1\pm O(\eta)) (1-1/k)^k |T| \text{~~~~(by summing the geometric series)}\\
&=(1\pm O(\eta)) (1-2(1-1/k)^k)|T|
\end{split}
\end{equation*}

Putting the bounds above together with~\eqref{eq:cut-value}, we thus have that the size of the cut is bounded by 
\begin{equation*}
\begin{split}
&|T_k|+\sum_{i=0}^{k-1} |S_i\setminus S^*_i|+\sum_{i=0}^{k-1}|E_i^{j_{i+1}}\cap E'|\\
&=(1-1/k)^k |T|+(1\pm O(\eta)) (1-2(1-1/k)^k)|T|+\sum_{i=0}^{k-1}|E_i^{j_{i+1}}\cap E'|\\
&=(1\pm O(\eta))(1-(1-1/k)^k) |T|+\sum_{i=0}^{k-1}|E_i^{j_{i+1}}\cap E'|\\
&=\left(1-(1-1/k)^k\right)|T|+\sum_{i=0}^{k-1} |E_i^{j_{i+1}}\cap E'|+O(\eta)|T|
\end{split}
\end{equation*}
as required.
\end{proof}

We now prove
\begin{theorem}\label{thm:matching-ub-toy}
For every $k\geq 2$, for any $\eta\in (0,1/k^3)$, if $m\geq C\eta^{-2} k\log \eta^{-1}$ for a sufficiently large absolute constant $C>0$ is a multiple of $k$, then if the graph $G=G(j_1,\ldots, j_k)$ is selected according to the input distribution ${\mathcal D}$ defined above, and the algorithm, after being presented with edges revealed in the $i$-th stage, stores $s=o(\log N)$ edges, the following conditions hold. If $M_{ALG}$ is the maximum matching in the set of edges $E'$ that the algorithm stored over all the stages, one has
$$
|M_{ALG}|\leq \left(1-(1-1/k)^k\right)|T|+O(\eta)|T|
$$
with probability at least $99/100$. 

\end{theorem}
\begin{proof}
Denote the set of edges that the algorithm commits to after seeing the subgraph $S_i\times T_i$ by $\tilde E_i$. By Lemma~\ref{lm:cut} the size of the matching that the algorithm outputs at the end is upper bounded by
$$
(1-1/k)^k |T|+\sum_{i=0}^{k-1} |E_i^j\cap \widetilde E_i|.
$$
We will show that with high probability $\sum_{i=0}^{k-1} |E_i^j\cap \widetilde E_i|\leq \sum_{i=0}^{k-1} s /|B_{i+1}|=O(k^2\cdot s/n)$, where $s$ is the number of edges that the algorithm stores at every step. Recall that for each $i=0,\ldots, k-1$, conditioned on $(j_1,\ldots, j_i)$, the special index $j_{i+1}$ is chosen uniformly at random in $B_{i+1}$, implying that 
$$
\expect_{j_{i+1}}\left[\left.\left| E_i^{j_{i+1}} \cap \widetilde E_i\right|\right| j_1,\ldots, j_i\right]=\frac1{|B_{i+1}|}\sum_{j\in B_{i+1}} |E_i^j \cap \widetilde E_i|=\frac{|\widetilde E_i|}{|B_{i+1}|}=s/|B_{i+1}|.
$$
Summing over all $i=0,\ldots, k-1$, we get
$$
\expect\left[\sum_{i=0}^{k-1} |E_i^j\cap \widetilde E_i|\right]=\sum_{i=0}^{k-1} s /|B_{i+1}|=O(k^2 \cdot s/n)=o(k^2 |T|)=o(|T|),
$$
since $s=o(\log N)$ by assumption of the theorem and $\log N=\Theta(n)$ (as $m$ is a constant). The result now follows by Markov's inequality.
\end{proof}

\begin{theorem}\label{thm:main-toy}

For every $k\geq 2$, every $\delta\in (0, 1)$, there exists an input distribution ${\mathcal D}$  on bipartite graphs such that any streaming algorithm that stores $s=o(N\log N)$ edges achieves an approximation ratio of at most $1-(1-1/k)^k+\delta$.
\end{theorem}
\begin{proof}
Consider the distribution ${\mathcal D}$ with $\eta=c\delta/k^3$ for a sufficiently small constant $c>0$ and $m\geq C\eta^{-2} k \log \eta^{-1}$ a multiple of $k$ for the constant $C>0$ from Lemma~\ref{lm:line-properties}. Then by Theorem~\ref{thm:matching-ub-toy} one has 
$$
|M_{ALG}|\leq \left(1-(1-1/k)^k\right)|T|+O(\eta)|T|\leq \left(1-(1-1/k)^k\right)|T|+(\delta/2)|T|=\left(1-(1-1/k)^k+\delta/2\right)N
$$
with probability at least $99/100$.  At the same time by Lemma~\ref{lm:matching-size-lb}, conditioned on conditioned on $\E_{large-lines}$ (defined in Lemma~\ref{lm:line-properties}) and $\E_{set-sizes}$ (defined in Lemma~\ref{lm:set-sizes}), every $(j_1,\ldots, j_k)\in B_1\times \ldots \times B_k$ the graph $G=G(j_1,\ldots, j_k)$ contains a matching of size at least $(1-O(\eta+\eta^2 k))|S|\geq (1-\delta/10)N$ if $N$ is sufficiently large. Thus, the approximation ratio achieved by the algorithm is at most 
$$
\frac{1-(1-1/k)^k+\delta/2}{1-\delta/10}\leq 1-(1-1/k)^k+\delta,
$$
as required.

It remains to note that by Lemma~\ref{lm:line-properties} the event $\E_{large-lines}$ occurs with probability at least $1-k(\log N)^{k} e^{-\Omega(\eta N/k)}\geq 1-o(1)$ over the choice of $Z_x$'s since $k$ and $\eta$ are independent of $N$ by our setting of parameters. Similarly, by Lemma~\ref{lm:set-sizes} the event $\E_{set-sizes}$ occurs with probability at least 
$1-k(\log N)^{k} e^{-\Omega(\eta N/k)}\geq 1-o(1)$. Thus, the upper bound on the approximation ratio achieve by the algorithm holds with probability at least $99/100-o(1)\geq 98/100$, as required.

\end{proof}

\section{Single pass streaming lower bound}\duallabel{sec:main}
\renewcommand{\S}{\mathcal{S}}
\renewcommand{\E}{{\mathcal E}}

\newcommand{\T}{\mathcal{T}}
\renewcommand{\u}{\mathbf{u}}
\renewcommand{\v}{\mathbf{v}}
\newcommand{\w}{\mathbf{w}}
\renewcommand{\P}{\mathcal{P}}

\if 0 We first recall the definition of induced matchings and $\e$-\rs graphs.
\begin{definition}
Let $G=(P, Q, E$) denote a bipartite graph. A matching $F\subseteq E$ that matches a set $A\subseteq P$ to a subset $B\subseteq Q$ is {\em induced} if $E\cap (A\times B)=F$.
\end{definition}
Our hard input instance construction relies on idea developed in constructing \rs graphs, i.e. dense graphs whose edge set is a union of induced matchings:
\begin{definition}
A bipartite graph $G=(P, Q, E)$ with $|P|=|Q|=n$ is an $\e$-\rs graph if one can write $E=\bigcup_{i=1}^k M_i$, where each $M_i$ is an induced matching and $|M_i|=\e n$ for all $i$.
\end{definition}
We will use the techniques pioneered in \cite{montest}, where the authors construct $\e$-\rs graphs with constant $\e<1/3$, and the extensions developed in \cite{gkk:streaming-soda12}, where it is proved that 
\begin{theorem}\cite{gkk:streaming-soda12}
For any constant $\delta\in (0, 1/2)$ there exist bipartite $(1/2-\delta)$-\rs graphs on $2n$ nodes with $n^{1+\Omega(1/\log\log n)}$ edges.
\end{theorem}
\fi
\if 0 We start by providing intuition for our distribution. It is useful to first recall how the best known lower bound of $3/4$ for the same setting is proved in \cite{gkk:streaming-soda12}.  The stream in \cite{gkk:streaming-soda12} consists of two `phases'. In the first phase, the algorithm is presented with a graph $G=(P, Q, E)$ such that $|P|=n, |Q|=2n$ and the edge set $E$ can be represented as a union of induced $2$-matchings $M_i$, $i=1,\ldots, k, k=n^{\Omega(1/\log\log n)}$, where $M_i$  matches a subset $A_i\subseteq P$ such that $|A_i|\geq (1/2-\delta)n$ to a subset $B_i\subseteq Q, |B_i|=(1\pm \delta) n$. Then an index $i$  is chosen uniformly at random from $[1:k]$, and in the second part of the stream a matching arrives that matches a new set of vertices $P^*$ to $Q^*=Q\setminus B_i$, making the edges of the (uniformly random) matching $M_i$ crucial for constructing a better than $3/4$ approximation to the maximum matching in the whole instance. It is then shown, using an additional randomization trick, that the algorithm essentially needs to store $\Omega(1)$ bits for each edge in each induced matching $M_i$ if it beats the $3/4$ approximation ratio. 

We generalize this approach by constructing hard distributions on inputs that consist of {\em multiple phases}, for which any algorithm that achieves a better than $1-1/e$ approximation is essentially forced to remember $\Omega(1)$ bits per edge of the input graph.\fi 

In the rest of the section we define a distribution on input instances for our problem of approximating maximum matchings in a single pass in the streaming model. 
 Our construction follows its simple version presented in Section~\ref{sec:toy-construction}. A major difference is that we replace coordinate directions with an exponential size family of nearly orthogonal vectors, thereby achieving a lower bound of $n^{1+\Omega(1/\log\log n)}$ on the space complexity of obtaining a better than $1-1/e$ approximation in a single pass. This approach is inspired by techniques for constructing \rs graphs pioneered in \cite{montest} and extensions developed in \cite{gkk:streaming-soda12}.

\subsection{Construction of host graphs $G(\u_1,\ldots, \u_k)$}\label{sec:host-graph-construction}
We first introduce  notation.  Each graph in our family of host graphs will be indexed by a $k$-tuple of vectors $(\u_1,\ldots, \u_k)\in \F_1\times \ldots \times \F_k$, where $\F_j, j=1,\ldots, k$ are families of vectors in $\bool^m$. We choose $\F_j$ so that vectors in $\bigcup_{j=1}^k \F_j=:\F$ are of equal Hamming weight and nearly orthogonal. Specifically, the following lemma guarantees the existence of a large family $\F$ such that for every $\u, \v \in \F$ it holds that $|\u|=|\v|=w$ and 
$(\u, \v)\leq \e w$, where $w=\Theta(\e^2 m)$. Here for a vector $\u\in \bool^m$ we write $|\u|$ to denote the Hamming weight of $\u$.
We assume from now on that $|\F_1|=|\F_2|=\ldots=|\F_k|=d$ for a parameter $d$. The lemma below shows that we can have $d=2^{\Omega (\e^2 m)}$. The specific form of the dependence of the exponent on $\e$ will not be important for the qualitative nature of our results, however, as we will ultimately set $\e$ to be a small constant.

We will use the following standard lower bound on the size of such families:
\begin{lemma}\label{lem:code}
For any $\e\in (0, 1)$, any integers  $m\geq 1$ and $w=(\e/2)m$, there exists a collection $\mathcal F_{m, w, \e} \subset \bool^m$  of vectors of Hamming weight $w$ with $\log |\mathcal{F}_{m, w,\e}| = \Omega(\e^2 m)$ such that for all $\u\neq \u'\in \mathcal F_{w,\e}$, $(\u, \u') < \e w$.
\end{lemma}
\begin{proof}
The proof is via the probabilistic method. Partition $[m]$ into $w$ subsets $I_1, \ldots, I_w$, with $|I_s|=m/w$ for $s=1,\ldots, w$. We pick $\u_1,\ldots,\u_N$ independently as follows. For every $j=1,\ldots, N$, the vector $\u_j$ includes exactly one random element of $I_s$ for each $s=1,\ldots, w$. This ensures that the Hamming weight of each $\u_j$ is exactly $w$.  

We now show that the vectors have small intersection size with high probability. Fix $i\neq j\in[N]$. Imagine $\u_i$ being fixed and picking the $w$ elements of $\u_j$ one by one. Let $X_s$ denote the indicator random variable for the event that the $s$th element of $\u_j$ (picked from $I_s$) is also in $S_i$. Then $(\u_i, \u_j) = \sum_{s=1}^w X_k$, and we set $\mu:= \expect[(\u_i, \u_j)]$. Note that $\mu=(w/m)\cdot w$, since for every $s=1,\ldots, w$ the vector $\u_i$ has exactly one nonzero coordinate in $I_s$, and the probability that $\u_j$ chooses the same coordinate is $1/|I_s|=w/m$. 
We have $\prob[(\u_i, \u_j) \ge \e w] = \prob[\sum_{s=1}^w X_s \ge 2\mu]$  The random variables $X_s$ are independent and thus the Chernoff bound yields
$$
\prob[(\u_i, \u_j) \geq  2\mu) \le \left(\frac{e}{4}\right)^\mu \le e^{-\Omega((w/m) w)}\leq e^{-c\e^2 m}
$$
for a constant $c>0$. 
Setting $N = 2^{(\ln_2 e) c \e^2 m/2}$ so that ${N \choose 2}<N^2=2^{(\ln_2 e) c \e^2 m}=e^{c \e^2 m}$, by a union bound with positive probability $|\u_i\cap \u_j| < \e w$ for all $i\neq j$, simultaneously, as desired. Note for this choice of $N$, we have $\log|\mathcal \F_{m, w,\e}| = \log N = \Theta(\e^2 m)$.
\end{proof}

 We also associate with each $\u\in \F_j, j=1,\ldots, k$ a random variable $U_{\u}$ that is uniformly distributed over the integers 
\begin{equation}\label{eq:shifts-range} 
\{0,1,\ldots, k/\theta-1\}\cdot W\cdot (\theta/k),
\end{equation} 
where $\theta\in (0, 1)$ is a parameter that we will set to a small constant times $1/\text{poly}(k)$, and $W$ is a parameter that will later set to $\text{poly}(k)\cdot w$ (where $w$ is the Hamming weight of the vectors in the collection $\F$). The variables $U_{\u}$ and $U_{\v}$ are independent for $\u\neq \v$.


 As before, the sides of the bipartition of the graph $G(\u_1,\ldots, \u_k)$ that we need to construct are denoted by $T$ and $S=S_0\cup\ldots\cup S_{k}$, where $S_0\cup S_1\cup \ldots \cup S_k$ is a partition of $S$. We  use the notation $[a]=\{1, \ldots, a\}$ for integer $a\geq 1$. In our construction the $T=T^0$ side of the graph is identified with  a hypercube $[m^4]^m$ for a value of $m$ to be chosen later, and each set $S_i, i=0,\ldots, k-1$ is identified with a subsampled hypercube $[m^4]^m$. The vertices of the last set $S_k$ do not have any special structure. 
 Vertices $x\in T$ or $y\in S_i$ will often be treated as points $x, y\in [m^4]^m$.  For $x\in T$ and $\u\in \F$ we use the dot product notation $ (x, \u)=\sum_{i=1}^m x_i\cdot \u_i \in \mathbb{Z}$. For an interval $[a, b]$ and a number $W$ we will write $[a, b]\cdot W$ to denote the set of integers belonging to the interval $[a\cdot W, b\cdot W]$. Finally, for an integer $i$ and an integer $W$ we will write $i \mod W$ to denote the residue of $i$ modulo $W$ that belongs to $[0, W-1]$.

\subsubsection{Defining the vertex set of the graph $G(\u_1,\ldots, \u_k)$}
We will use
\begin{definition}[Ground sets $X, Y$]\label{def:xy}
Let $X^*=Y=[m^4]^m$ for some integer $m>0$. Let $X$ be a random subset of $X^*$ where each point of $X^*$ appears independently with probability $1/k$.
\end{definition}
We will refer to vertices in $X$ and $Y$ as points in $[m^4]^m$. The host graph $G$ is generated by first selecting a $k$-tuple $(\u_1,\ldots, \u_k)\in \F_1\times \ldots \times \F_k$, and then defining the vertex and edge set as we describe below.  Before proceeding with the construction, we list relevant parameters here. 

\paragraph{Parameters of the construction} 
\begin{itemize}
\item $k$ -- the number of phases in the hard input distribution;
\item $w$ --- Hamming weight of binary vectors in $\F$;
\item $\e$ -- upper bound on the maximum dot product of any pair of distinct vectors from $\F$, normalized by their Hamming weight $w$;
\item $\eta$ -- small constant governing separation of red and blue vertices  -- see~\eqref{eq:strips-1} and~\eqref{eq:strips-2}.
\end{itemize}

\paragraph{Sets of red, white and blue vertices $R, W, B$.} Consider fixed $\w\in \bool^m$, and let
\begin{equation}\duallabel{eq:strips-1}
\begin{split}
R^Y(\w)&=\{y\in Y: ((y,  \w)+U(\w))\mod  W\in [0, 1/k)\cdot W\}\\
&\text{(red vertices with respect to $\w$)}\\
W^Y(\w)&=\{y\in Y: ((y,  \w)+U(\w)) \mod W\in ([1/k,1/k+\eta)\cup [1-\eta, 1))\cdot W\}\\
&\text{(white vertices with respect to $\w$)}\\
B^Y(\w)&=\{y\in Y: ((y,  \w)+U(\w))\mod W\in [1/k+\eta, 1-\eta)\cdot W\}\\
&\text{(blue vertices with respect to $\w$)}\\
\end{split}
\end{equation}
It is convenient to also define

\begin{equation}\duallabel{eq:strips-2}
\begin{split}
R^{X^*}(\w)&=\{x\in X^*: ((x,  \w)+U(\w))\mod  W\in [0, 1/k)\cdot W\}\\
W^{X^*}(\w)&=\{x\in X^*: ((x,  \w)+U(\w)) \mod W\in ([1/k,1/k+\eta)\cup [1-\eta, 1))\cdot W\}\\
B^{X^*}(\w)&=\{x\in X^*: ((x,  \w)+U(\w))\mod W\in [1/k+\eta, 1-\eta)\cdot W\},\\
\end{split}
\end{equation}
as well as let
\begin{equation}\duallabel{eq:strips-3}
\begin{split}
R^X(\w)=R^{X^*}(\w)\cap X, ~~W^X(\w)=W^{X^*}(\w)\cap X, ~~~B^X(\w)=B^{X^*}(\w)\cap X.
\end{split}
\end{equation}

The intuition for these sets is simple: ideally, we would like to partition vertices $y\in T_i$ into two classes, depending on whether their dot product with $\w$ is in $[0, 1/k)\cdot W$ (red points) or $[1/k, 1)\cdot W$(blue points; we ignore the shift $U(\w)$ for this intuitive discussion), and then match points in one color class in $X$ to points in the other color class in $Y$. This would work fine if the set $\F$ of vectors that we use contained orthogonal vectors only, as in our toy construction in Section~\ref{sec:toy-construction}. Since the family of vectors $\F$ that we use consists of vectors with small (constant) dot products, we need a `buffer' between the two classes above, provided by the set of white vertices $W^Y$.

\paragraph{Nested sequence of sets $T_i$ and sets $S_i$.} For all $i=0,\ldots, k$  let
\begin{equation}\duallabel{eq:ti-def}
\begin{split}
T_i&=\{y\in Y : ((y, \u_j)+U(\u_j)) \mod W\in [1/k, 1)\cdot W,\text{~for all~} j\in [1:i]\}\\
S_i&=\{x\in X : ((x, \u_j)+U(\u_j)) \mod W\in [1/k, 1)\cdot W,\text{~for all~} j=[1:i]\},
\end{split}
\end{equation}
so that $T_0=Y$ and $S_0=X$. For every $i=0,\ldots, k-1$ and $\w\in \F_{i+1}$ we let 
\begin{equation}\duallabel{eq:tiw-def}
\begin{split}
T_i^\w&=\{y\in Y : ((y, \u_j)+U(\u_j)) \mod W\in [1/k, 1)\cdot W,\text{~for all~} j\in [1:i]\\
&~~~~~~~~~~~~~~~~~~\text{~~~~and~}\\
&~~~~~~~~~~~~~~~~~~~~~((y, \w)+U(\w)) \mod W\in [1/k, 1)\cdot W\}\\
S_i^\w&=\{x\in X : ((x, \u_j)+U(\u_j)) \mod W\in [1/k, 1)\cdot W,\text{~for all~} j=[1:i]\\
&~~~~~~~~~~~~~~~~~~\text{~~~~and~}\\
&~~~~~~~~~~~~~~~~~~~~~((x, \w)+U(\w)) \mod W\in [1/k, 1)\cdot W\}
\end{split}
\end{equation}
Also, let 
\begin{equation}\duallabel{eq:siu-def}
\begin{split}
S_i^*=\{x\in S_i &: ((x, \u_j)+U(\u_j)) \mod W\in [1/k, 1)\cdot W,\\
&\text{~for all~} j\in [1:k]\}.\\
\end{split}
\end{equation}
Note that the sets $S_i^*$ are obtained from $S_i$ by adding extra constraints on dot products with vectors $\u_j$, namely for $j=i+1,\ldots, k$ (this is because all vertices in $S_i$ already satisfy the constraints above for $j=1:i$ by definition of $S_i$). 

\paragraph{Vertex set of $G(\u_1,\ldots, \u_k)$.} The graph $G(\u_1, \ldots, \u_k)$ whose edges we define shortly will be a bipartite graph with the sides of the bipartition given by $T=T_0$ and $S=S_0 \cup \ldots \cup S_k$, where $T_0$ and $S_i, i=0,1,\ldots, k$ are as defined above.  Note that the union of $S_i$'s in the definition of $S$ is understood as a disjoint union. In other words, vertices in both $S$ and $T$ are naturally labelled with points on the hypercube in $[m^4]^m$. These labels are distinct for vertices in $T$, but not for vertices in $S$. However, such labels are distinct for vertices in $S_i$ for every $i=0,1,\ldots, k$. We denote the number of vertices on the $Q$ side of the bipartition, i.e., in $T$, by $n$. We will have $|S|=O(|T|)$, so that the total number of vertices in our instance is $O(n)$.

\newcommand{\ind}{\mathbf{1}}
\renewcommand{\w}{\mathbf{w}}

\if 0
Consider fixed $i$ between $0$ and $k-1$. For all $\w\in \F_{i+1}$ let 
\begin{equation}\duallabel{eq:strips-1}
\begin{split}
R^Y_i(\w)&=\{y\in T^{i}: ((y,  \w)+U(\w))\mod  W\in [0, 1/k]\cdot W\}\\
&\text{(red vertices with respect to $\w$)}\\
W^Y_i(\w)&=\{y\in T^{i}: ((y,  \w)+U(\w)) \mod W\in ([1/k,1/k+\eta]\cup [1-\eta, 1))\cdot W\}\\
&\text{(white vertices with respect to $\w$)}\\
B^Y_i(\w)&=\{y\in T^{i}: ((y,  \w)+U(\w))\mod W\in [1/k+\eta, 1-\eta]\cdot W\}\\
&\text{(blue vertices with respect to $\w$)}\\
\end{split}
\end{equation}
Define $R^X_i(\w), W^X_i(\w), B^X_i(\w)$ for $\w \in \F_{i+1}$ similarly (note that these sets are defined only for $S_i$):
\begin{equation}\duallabel{eq:strips-2}
\begin{split}
R^X_i(\w)&=\{x\in S_{i}: ((x,  \w)+U(\w))\mod  W\in [0, 1/k]\cdot W\}\\
W^X_i(\w)&=\{x\in S_{i}: ((x,  \w)+U(\w)) \mod W\in ([1/k,1/k+\eta]\cup [1-\eta, 1))\cdot W\}\\
B^X_i(\w)&=\{x\in S_{i}: ((x,  \w)+U(\w))\mod W\in [1/k+\eta, 1-\eta]\cdot W\}\\
\end{split}
\end{equation}

Note that the sets $R^Y_i(\w), W^Y_i(\w)$ and $B^Y_i(\w)$ depend on the index $i$, but we will omit the subscript to simplify notation when $i$ is fixed.
\fi

\renewcommand{\S}{\mathcal{S}}
\paragraph{Estimates on the size of $T_i, S_i, T_i^\w, S_i^\w, R, B, W$.} 

We will need the following lemma, whose proof is given in Appendix~\ref{app:aux}
\begin{lemma}\label{lm:intersection-size}
For every $m\geq 2$, integer $W\geq 1$ and $\delta'\in (0, 1)$ such that $1/\delta'$ is an integer, if $Y=[m^4]^m$ and the set $\S$ is defined by 
$$
\S=\{y\in Y: (y, \u) +\Delta_\u \mod W\in [a_\u, b_\u)\cdot W, \text{~for all~}\u\in \mathcal{U}\},
$$ where $\mathcal{U}$ is a collection of binary vectors of fixed length $w$ and $a_\u, b_\u\in [0, 1]$  are constant integer multiples of $1/L$ for an integer $L$, the following conditions hold if $W$ is an integer multiple of $w\cdot \text{lcm}(L, 1/\delta')$, $\Delta_\u/W$ are multiples of $1/L$  and $m$ is sufficiently large. 

 If $\max_{\substack{\u\in \mathcal{U}, \v\in \mathcal{U}\\\u\neq \v}} (\u,\v)/|\v|\leq \delta'$, then 
$$
\left||\S|-|Y|\cdot \prod_{\u\in \mathcal{U}} (b_\u-a_\u)\right|\leq |\mathcal{U}|^2(6L\delta'  +4/m)\cdot |Y|.
$$
\end{lemma}

We now apply Lemma~\ref{lm:intersection-size} to bound the size of various relevant subsets of $Y$ and $X$. We gather the resulting bound in the following 
\begin{lemma}\label{lm:bounds-on-set-sizes}
There exists an event $\E$ over $X$ that occurs with probability at least $99/100$ such that the following bounds hold conditioned on $\E$.

For every $\e\in (0, 1)$, every integer $k\geq 2$, every choice of shifts $U(\u) \in \F$, every $(\u_1,\ldots, \u_k)\in \F_1\times \F_2\times\ldots\times \F_k$, if $W$ (see~\eqref{eq:ti-def} and~\eqref{eq:siu-def}) is an integer multiple of $w\cdot k/(\e\cdot \theta)$ (see~\eqref{eq:shifts-range}), then for sufficiently large integer $m$
\begin{description}
\item[(1)] $|T_i|=(1-1/k)^i |Y|+\Delta_i$, $|\Delta_i|=O(k^3 \e/\theta)\cdot |Y|$ for every $i\in \{0, 1, 2, \ldots, k\}$;
\item[(2)] $|S_i|=\frac1{k}((1-1/k)^i |Y|+\Delta_i$, $|\Delta_i|= O(k^3 \e/\theta)\cdot |Y|$ for every $i\in \{0, 1, 2,\ldots, k\}$;
\item[(3)] $|S_i^*|=\frac1{k}(1-1/k)^k |Y|+\Delta_i$, $|\Delta_i|=O(k^3 \e/\theta)\cdot |Y|$;
\item[(4)] $|T_i^\w|=(1-1/k)^{i+1} |Y|+\Delta_i$, $|\Delta_i|=O(k^3 \e/\theta)\cdot |Y|$  for every $i\in \{0, 1, 2, \ldots, k-1\}$ and $\w \in \F_{i+1}$;
\item[(5)] $|S_i^\w|=\frac1{k}(1-1/k)^{i+1} |Y|+\Delta_i$, $|\Delta_i|=O(k^3 \e/\theta)\cdot |Y|$  for every $i\in \{0, 1, 2, \ldots, k-1\}$ and $\w \in \F_{i+1}$.
\end{description}
\end{lemma}
\begin{proof}
We fix the values of the shifts $U(\u), \u\in \F$ as well as the sequence $(\u_1,\ldots, \u_k)\in \F_1\times \F_2\times \ldots \times \F_k$, and take a union bound over such fixings later (this is important for establishing the bounds on various subsets of $S$, as those depend on the random choice of $X$; see~\dualeqref{eq:ti-def}, ~\dualeqref{eq:tiw-def} and~\dualeqref{eq:siu-def}).  By~\eqref{eq:ti-def} we have
\begin{equation*}
\begin{split}
T_i&=\{y\in Y : ((y, \u_j)+U(\u_j)) \mod W\in [1/k, 1)\cdot W,\text{~for all~} j\in [1:i]\}\\
S_i&=\{x\in X : ((x, \u_j)+U(\u_j)) \mod W\in [1/k, 1)\cdot W,\text{~for all~} j=[1:i]\}.
\end{split}
\end{equation*}

We start with $T_i$, where we apply Lemma~\ref{lm:intersection-size} with ${\mathcal U}=\{\u_1,\ldots, \u_i\}$. We thus have $a_\u=\{1/k+U(\u)/W\}$, $b_\u=\{1+U(\u)/W\}$ for all $\u\in {\mathcal U}$ (where $\{\cdot \}$ stands for the fractional part of the argument). Since $U(\w)$ are integer multiples of $\theta W/k$ by definition (see ~\eqref{eq:shifts-range}), we get that setting $L=k/\theta$ ensures that $a_\u, b_\u$ are multiples of $1/L$. Recall that vectors 
$\u_1,\ldots, \u_k\in \{0, 1\}^m$ have Hamming weight $w$ and $\max_{s\neq t} (\u_s, \u_t)/w\leq \e$ by assumption of the lemma. Since further $W$ is an integer multiple of $w\cdot k/(\e\cdot \theta)$, we get that indeed $W$ is an integer multiple of $w\cdot \text{lcm}(L, 1/\e)$, and hence the preconditions of Lemma~\ref{lm:intersection-size} are satisfied.
We now get by Lemma~\ref{lm:intersection-size}, using the fact that $\prod_{\u\in \mathcal{U}} (b_\u-a_\u)=(1-1/k)^i$ by our setting of parameters, that 
\begin{equation*}
\begin{split}
\left||T_i|-(1-1/k)^i |Y|\right|&\leq |\mathcal{U}|^2(6L\e +4/m)\cdot |Y|\\
&\leq k^2(6(k/\theta)\e +4/m)\cdot |Y|\\
&=O(k^3 \e/\theta)\cdot |Y|,
\end{split}
\end{equation*}
where in the last step we used the assumption of our lemma that $m$ if sufficiently large as a function of  $k$ and $\e$.
This proves {\bf (1)}. The proof of {\bf (4)} is similar, with $\mathcal{U}=\{\u_1,\ldots, \u_i, \w\}$.

Similarly, since every element of $X^*=Y$ appears in $X$ independently with probability $1/k$ (see Definition~\ref{def:xy}), an application of Lemma~\ref{lm:intersection-size} as above shows that
$$
\expect_{X}[|S_i|]=\frac1{k}((1-1/k)^i |Y|\pm O(k^3 \e/\theta)\cdot |Y|).
$$
We thus also get by an application of Chernoff bounds (Theorem~\ref{thm:chernoff-bounds}) we get that for every $i=0,\ldots, k$
\begin{equation}\label{eq:ei-prob}
\prob\left[\left||S_i|-\expect[|S_i|]\right|>\e\cdot |Y|\right]<2e^{-\e^2 |Y|/(3k)}.
\end{equation}
The bound above is for a fixed choice of the shifts $U(\u), \u\in \F$. The number of such choices is bounded by $(m^5)^{2^m}\leq e^{|Y|^{1/2}}$ when $m$ is larger than a constant. The number of choice of $(\u_1,\ldots, \u_k)\in \F_1\times \F_2\times \ldots\times \F_k$ is bounded by $2^{k 2^m}\leq e^{|Y|^{1/2}}$ as well, and therefore we have  $\left||S_i|-\expect[|S_i|]\right|\leq \e\cdot |Y|$ for every choice of shifts  with probability at least 
$1-2e^{-\e^2 |Y|/(6k)}$. Denote the success event by $\E_i$.  Conditioned on $\bigcap_{i=0}^{k-1} \E_i$ one has $\left||S_i|-\frac1{k}((1-1/k)^i |Y|\right|=O(k^2 \e/\theta)\cdot |Y|$, proving {\bf (2)}. 

Similarly, since every element of $X^*=Y$ appears in $X$ independently with probability $1/k$ (see Definition~\ref{def:xy}), an application of Lemma~\ref{lm:intersection-size} as above shows that for every $\w\in \F_{i+1}$
$$
\expect_{X}[|S_i^\w|]=\frac1{k}((1-1/k)^{i+1} |Y|\pm O(k^3 \e/\theta)\cdot |Y|).
$$
We thus also get by an application of Chernoff bounds (Theorem~\ref{thm:chernoff-bounds}) we get that for every $i=0,\ldots, k-1$ and $\w\in \F_{j+1}$
\begin{equation}\label{eq:eiw-prob}
\prob\left[\left||S_i^\w|-\expect[|S_i^\w|]\right|>\e\cdot |Y|\right]<2e^{-\e^2 |Y|/(3k)}.
\end{equation}
Similarly to the above, we take a union bound over all fixings of shifts $U(\u), \u\in \F$ and choices of $(\u_1, \u_2,\ldots, \u_k)\in \F_1\times \F_2\times\ldots\times \F_k$, getting an upper bound of $2e^{-\e^2 |Y|/(6k)}$ on the probability of the failure event. Denote the success event by $\E_{i, \w}$.  Conditioned on $\bigcap_{i=0}^{k-1}\bigcap_{\w \in \F_{i+1}} \E_{i, \w}$ one has $\left||S_i^\w|-\frac1{k}((1-1/k)^{i+1} |Y|\right|=O(k^3 \e/\theta)\cdot |Y|$, proving {\bf (5)}.

Finally, recall that by~\eqref{eq:siu-def} one has for every $i\in 0, 1, \ldots, k-1$
\begin{equation*}
\begin{split}
S_i^*&=\{x\in S_i : ((x, \u_l)+U(\u_l)) \mod W\in [1/k, 1)\cdot W,\text{~for all~} l\in [1:k]\}\\
&=\{x\in [m^4]^m\cap X : ((x, \u_l)+U(\u_l)) \mod W\in [1/k, 1)\cdot W,\text{~for all~} l\in [1:k]\}\\
&=T_k\cap X.
\end{split}
\end{equation*}

We now have by {\bf (1)} that 
\begin{equation*}
\begin{split}
\left||T_k|-(1-1/k)^k |Y|\right| &=O(k^3 \e/\theta)\cdot |Y|.
\end{split}
\end{equation*}

We thus get, since $X$ contains every element of $[m^4]^m$ independently with probability $1/k$ by Definition~\ref{def:xy}, that $\expect_X[|S_i^*|]=|T^k|/k$, and thus by Chernoff bounds (Theorem~\ref{thm:chernoff-bounds})
\begin{equation}\label{eq:eij-prob}
\prob\left[\left||S_i^*|-\expect[|S_i^*|]\right|>\e\cdot |Y|\right]<2e^{-\e^2 |Y|/(3k)}.
\end{equation}
Similarly to the above, we take a union bound over all fixings of shifts $U(\u), \u\in \F$ and choices of $(\u_1, \u_2,\ldots, \u_k)\in \F_1\times \F_2\times\ldots\times \F_k$, getting an upper bound of $2e^{-\e^2 |Y|/(6k)}$ on the probability of the failure event.  Denote the success event by $\E_{i, *}$. Conditioned on $\bigcap_{i=0}^{k-1}\E_{i, *}$ one has 
$\left||S_i^*|-\frac1{k}((1-1/k)^* |Y|\right|=O(k^3 \e/\theta)\cdot |Y|$, as required.

We now let $\E=\left(\bigcap_{i=1}^k \E_i\right)\cap \left(\bigcap_{i=1}^k \E_{i, *}\right)\cap \left(\bigcap_{i=0}^{k-1}\bigcap_{\w \in \F_{i+1}} \E_{i, \w}\right)$. Using a union bound together with~\eqref{eq:ei-prob},~\eqref{eq:eiw-prob} and~\eqref{eq:eij-prob} we get
\begin{equation*}
\begin{split}
\prob[\E]&\geq 1-\sum_{i=0}^{k-1} \prob[\bar \E_i]-\sum_{i=0}^{k-1} \prob[\bar \E_{i, *}]-\sum_{i=0}^{k-1}\sum_{\w \in \F_{i+1}} \prob[\bar \E_{i, \w}]\\
&\geq 1-2k 2^m 2e^{-\e^2 |Y|/(6k)}\\
&\geq 1-2k 2^m 2e^{-\e^2 m^{4m}/(6k)}\\
&\geq 99/100\\
\end{split}
\end{equation*}
as long as $m$ is sufficiently large as a function of $\e$ and $k$.
\end{proof}

\newcommand{\A}{{\mathcal{A}}}
\renewcommand{\B}{{\mathcal{B}}}

\if 0
\begin{figure}
\begin{center}
\tikzstyle{vertex}=[circle,fill=blue!100, minimum size=15pt,inner sep=1pt, shading=ball,ball color      = blue!100]
\tikzstyle{svertex}=[circle,fill=black!100, minimum size=5pt,inner sep=1pt]
\tikzstyle{evertex}=[circle,draw=none, minimum size=25pt,inner sep=1pt]
\tikzstyle{edge} = [draw,-, color=red!100, very  thick]
\tikzstyle{bedge} = [draw,-, color=green!100, very  thick]

\begin{tikzpicture}[scale=1, auto,swap]

\draw (-7,0) -- (7, 0);
\draw[very thick] (-2*3,-0.2) -- (-2*3, +0.2);
\draw[very thick] (-3,-0.2) -- (-3, +0.2);
\draw[very thick] (0,-0.2) -- (0, +0.2);
\draw[very thick] (3,-0.2) -- (3, +0.2);
\draw[very thick] (2*3,-0.2) -- (2*3, +0.2);

\draw[very thick] (-2*3+0.7,-0.1) -- (-2*3+0.7, +0.1);
\draw[very thick] (-3+0.7,-0.1) -- (-3+0.7, +0.1);
\draw[very thick] (0+0.7,-0.1) -- (0+0.7, +0.1);
\draw[very thick] (3+0.7,-0.1) -- (3+0.7, +0.1);

\draw[very thick] (-2*3+0.5+0.5,-0.1) -- (-2*3+0.5+0.5, +0.1);
\draw[very thick] (-3+0.5+0.5,-0.1) -- (-3+0.5+0.5, +0.1);
\draw[very thick] (0+0.5+0.5,-0.1) -- (0+0.5+0.5, +0.1);
\draw[very thick] (3+0.5+0.5,-0.1) -- (3+0.5+0.5, +0.1);

\draw[very thick] (-2*3+0.5+0.5,-0.1-3) -- (-2*3+0.5+0.5, +0.1-3);
\draw[very thick] (-3+0.5+0.5,-0.1-3) -- (-3+0.5+0.5, +0.1-3);
\draw[very thick] (0+0.5+0.5,-0.1-3) -- (0+0.5+0.5, +0.1-3);
\draw[very thick] (3+0.5+0.5,-0.1-3) -- (3+0.5+0.5, +0.1-3);

\draw (-5.6, 0.1+0.2+0.2) node {{\scriptsize $R^X$}};
\draw (-6+1, 0.1+0.2+0.2) node {{\scriptsize $W^X$}};
\draw (-4, 0.1+0.2+0.2) node {{\scriptsize $B^X$}};

\draw (-5.6+3, 0.1+0.2+0.2) node {{\scriptsize $R^X$}};
\draw (-6+1+3, 0.1+0.2+0.2) node {{\scriptsize $W^X$}};
\draw (-4+3, 0.1+0.2+0.2) node {{\scriptsize $B^X$}};

\draw (-5.6+3*2, 0.1+0.2+0.2) node {{\scriptsize $R^X$}};
\draw (-6+1+3*2, 0.1+0.2+0.2) node {{\scriptsize $W^X$}};
\draw (-4+3*2, 0.1+0.2+0.2) node {{\scriptsize $B^X$}};


\draw (-5.6, -3-0.1-0.2-0.2) node {{\scriptsize $R^Y$}};
\draw (-6+1, -3-0.1-0.2-0.2) node {{\scriptsize $W^Y$}};
\draw (-4, -3-0.1-0.2-0.2) node {{\scriptsize $B^Y$}};

\draw (-5.6+3, -3-0.1-0.2-0.2) node {{\scriptsize $R^Y$}};
\draw (-6+1+3, -3-0.1-0.2-0.2) node {{\scriptsize $W^Y$}};
\draw (-4+3, -3-0.1-0.2-0.2) node {{\scriptsize $B^Y$}};

\draw (-5.6+3*2, -3-0.1-0.2-0.2) node {{\scriptsize $R^Y$}};
\draw (-6+1+3*2, -3-0.1-0.2-0.2) node {{\scriptsize $W^Y$}};
\draw (-4+3*2, -3-0.1-0.2-0.2) node {{\scriptsize $B^Y$}};

\draw (-7,-3) -- (7, -3);
\draw[very thick] (-2*3,-3-0.2) -- (-2*3, -3+0.2);
\draw[very thick] (-3,-3-0.2) -- (-3, -3+0.2);
\draw[very thick] (0,-3-0.2) -- (0, -3+0.2);
\draw[very thick] (3,-3-0.2) -- (3, -3+0.2);
\draw[very thick] (2*3,-3-0.2) -- (2*3, -3+0.2);

\draw[very thick] (-2*3+0.5,-0.1-3) -- (-2*3+0.5, +0.1-3);
\draw[very thick] (-3+0.5,-0.1-3) -- (-3+0.5, +0.1-3);
\draw[very thick] (0+0.5,-0.1-3) -- (0+0.5, +0.1-3);
\draw[very thick] (3+0.5,-0.1-3) -- (3+0.5, +0.1-3);

\draw(-3*2+0.1,-3) -- (-3*2+1+0.2, 0);
\draw(-3*2+0.1,-3) -- (-3*2+1+0.4, 0);
\draw(-3*2+0.1,-3) -- (-3*2+1+0.6, 0);

\draw(-3*2+0.3,-3) -- (-3*2+1+0.2+1, 0);
\draw(-3*2+0.3,-3) -- (-3*2+1+0.4+1, 0);
\draw(-3*2+0.3,-3) -- (-3*2+1+0.6+1, 0);


\draw(-3*2+0.1+3,-3) -- (-3*2+1+0.2+3, 0);
\draw(-3*2+0.1+3,-3) -- (-3*2+1+0.4+3, 0);
\draw(-3*2+0.1+3,-3) -- (-3*2+1+0.6+3, 0);

\draw(-3*2+0.3+3,-3) -- (-3*2+1+0.2+1+3, 0);
\draw(-3*2+0.3+3,-3) -- (-3*2+1+0.4+1+3, 0);
\draw(-3*2+0.3+3,-3) -- (-3*2+1+0.6+1+3, 0);




\end{tikzpicture}
\end{center}
\caption{Illustration of the sets of red, blue and white vertices along a fixed direction $\w$, and the edges of the graphs $H^\w_i$.}\label{fig:1}
\end{figure}
\fi

\subsubsection{Defining the edge set of the graph $G(\u_1,\ldots, \u_k)$}


First, the only edges incident on vertices in $S_k$ are the edges of a perfect matching between $S_k$ and $T^k$. In the rest of the section we define edges incident on $S_i$, $i=0,\ldots, k-1$. The following definition will be useful in the analysis. Let $\text{Bad}\subseteq [m^4]^m$ be defined by 
\begin{equation}\label{eq:bad-set}
\begin{split}
\text{Bad}:=\{y\in [m^4]^m: \exists i\text{~s.t.~}y_i<m^2\text{~or}~y_i>m^4-m^2\}.
\end{split}
\end{equation}
Note that $|\text{Bad}|\leq 2m^3/m^4=o(1)$ by a union bound.

For each $i=0,\ldots, k-1$ we will have $\Gamma_G(S_i)\subseteq T_i$ (see~\eqref{eq:ti-def} for the definitions of $S_i$ and $T_i$). The edges incident to $S_i$ can be partitioned into an induced union of nearly regular constant degree subgraphs. Each such subgraph is indexed by a vector $\w \in \F_{i+1}$, and is denoted by $H^{\w}_i$. We now give the construction of these subgraphs.


The graph $H^\w_i$ is a disjoint union of constant size complete bipartite graphs, where each such constant size graph corresponds to a set of points on the integer lattice that lie on a short line segment in direction $\w$ (recall that $\w\in \bool^m$). In what follows we first define the relevant lines (Sets $\mathcal{L}^Y$ and $\mathcal{L}^X$), and then define the edges of $H^\w_i$.

\paragraph{Defining sets of lines $\mathcal{L}^Y$ and $\mathcal{L}^X$.}  For an arbitrary  $y\in R^Y$
let 
$$
\ell(y):=\left\lfloor \frac{(y, \w)+U(\w)}{W}\right\rfloor,
$$
and 
define 
\begin{equation}\label{eq:line-y-def}
\text{line}^Y(y, \w):=\left\{y'\in R^Y: y'=y+\lambda\cdot \w \text{~~such that~~}\ell(y')=\ell(y)\right\}.
\end{equation}

Similarly, for $x\in B^{X^*}$ let
\begin{equation*}
\text{line}^{X^*}(x, \w):=\left\{x'\in B^{X^*}: x'=x+\lambda\cdot \w\text{~~such that~~}\ell(x')=\ell(x)\right\}.
\end{equation*}
and  for $x\in B^X$ let
\begin{equation}\label{eq:line-x-def}
\text{line}^X(x, \w):=\left\{x'\in B^X: x'=x+\lambda\cdot \w\text{~~such that~~}\ell(x')=\ell(x)\right\}.
\end{equation}

Note that for every fixed $\w$ and every pair $x_1, x_2\in X\setminus \text{Bad}$ one has either $\text{line}^{X^*}(x_1, \w)=\text{line}^{X^*}(x_2, \w)$ or $\text{line}^{X^*}(x_1, \w)\cap \text{line}^{X^*}(x_2, \w)=\emptyset$. Analogous properties hold for $\text{line}^X$ and $\text{line}^Y$. Let 
$$
\mathcal{L}^X(\w)=\bigcup_{x\in X\setminus \text{Bad}} \text{line}^X(x, \w)\text{~~and~~}\mathcal{L}^Y(\w)=\bigcup_{y\in Y\setminus \text{Bad}} \text{line}^Y(y, \w).
$$

Note that for every $y\in Y\setminus \text{Bad}$ and every $\w$ there exists a unique line $L^X\in \mathcal{L}^X(\w)$ such that for every $x\in L^X$ and $y\in \text{line}(y, \w)=:L^Y$ one has $x-y=\lambda \w$ for some integer $\lambda$ and $\ell(x)=\ell(y)$. We call $L^X$ the pair of $L^Y$. We denote the function mapping $y$-lines to their corresponding pair $x$-lines by $\pi_\w: \mathcal{L}^Y(\w)\to \mathcal{L}^X(\w)$. Let $\mathcal{L}^{X^*}(\w)$ and $\pi_\w^*$ be defined analogously.

We now give bounds on the size of lines. We start with
\begin{claim}[Size of $\text{line}^Y(y, \w)$]\label{cl:y-line-size}
If $m\geq W/|\w|$ and $W/|\w|$ is an integer multiple of $k$, then for all $y \in R^Y\setminus \text{Bad}$, $\w \in \bigcup_{i=1}^k \F_i$  one has $|\text{line}^Y(y, \w)|=W/(k|\w|)$.
\end{claim}
\begin{proof}
First note that for every integer $\lambda$, $\w \in \bigcup_{i=1}^k \F_i$ and every $y\in Y$ one has, letting $y'=y+\lambda \w$,
$$
(y', \w)+U(\w)=((y+\lambda\cdot \w,  \w)+U(\w))=((y,  \w)+U(\w))+\lambda |\w|.
$$
Note that every $\lambda$ that results in $\ell(y+\lambda \w)=\ell(y)$ satisfies $|\lambda|\leq W/|\w|$, and thus by our assumption $y\in Y\setminus \text{Bad}$ we have 
$y+\lambda \w\in Y$
since $m\geq W/|\w|$ by assumption of the claim.

Recall that $y'\in \text{line}^Y(y, \w)$ amounts to  two conditions: $y'\in R^Y$ and $\ell(y')=\ell(y)$, where the former constraint is 
\begin{equation}\duallabel{eq:923t32ddsf}
((y', \w)+U(\w)) \pmod{W}\in [0, 1/k)\cdot W.
\end{equation}
Letting $z:=\left\lfloor\frac{((y,  \w)+U(\w))\text{~mod~} W}{|\w|}\right\rfloor$, we note that the set of values  of $\lambda$ results in~\dualeqref{eq:923t32ddsf} being satisfied at the same time as $\ell(y')=\ell(y)$ is exactly  $\{-z, -z+1,\ldots, -z+W/(k|\w|)-1\}$. We thus have that
$|\text{line}^Y(y, \w)|=W/(k|\w|)\text{~~~for all~}y\in Y\setminus \text{Bad}$, as required.  
\end{proof}
\if 0
Let $z:=\left\lfloor \frac{(y,  \w)+U(\w)}{|\w|}\right\rfloor-\ell(y)\cdot \frac{W}{|\w|}$. We now show that the set of values of $\lambda$ that result in $y'=y+\lambda\cdot \w\in \text{line}^Y(y, \w)$  is exactly the set $\{-z, -z+1,\ldots, -z+W/(k|\w|)-1\}$. Indeed, since
\begin{equation}\label{eq:claim-y-line-expr}
\begin{split}
\ell(y')&=\left\lfloor \frac{(y', \w)+U(\w)}{W}\right\rfloor=\left\lfloor \frac{(y, \w)+\lambda |\w|+U(\w)}{W}\right\rfloor,\\
\end{split}
\end{equation}
where the last equality is due to the fact that $\w$ divides $W$ by assumption of the claim.

The rhs is minimized over $\lambda \in \{-z, -z+1,\ldots, -z+W/(k|\w|)-1\}$ when $\lambda=-z$, yielding 
\begin{equation*}
\begin{split}
\ell(y')&=\left\lfloor \frac{(y, \w)-z |\w|+U(\w)}{W}\right\rfloor\\
&=\left\lfloor \frac{(y, \w)-\left(\left\lfloor \frac{(y,  \w)+U(\w)}{|\w|}\right\rfloor-\ell(y)\cdot \frac{W}{|\w|}\right) |\w|+U(\w)}{W}\right\rfloor\\
&=\ell(y)+\left\lfloor \frac{(y, \w)+U(\w)-\left\lfloor \frac{(y,  \w)+U(\w)}{|\w|}\right\rfloor\cdot |\w|}{W}\right\rfloor=\ell(y),
\end{split}
\end{equation*}
and
\begin{equation*}
\begin{split}
(y', \w)+U(\w)&=(y, \w)+U(\w)-z\cdot |\w|\\
&=(y, \w)+U(\w)-\left(\left\lfloor \frac{(y,  \w)+U(\w)}{W}\cdot\frac{W}{|\w|}\right\rfloor-\ell(y)\cdot \frac{W}{|\w|}\right)\cdot |\w|\\
\end{split}
\end{equation*}

At the same time the rhs of~\eqref{eq:claim-y-line-expr} is maximized when $\lambda=-z+W/(k|\w|)-1$, yielding
\begin{equation*}
\begin{split}
\ell(y')&=\left\lfloor \frac{(y, \w)+(-z+W/(k|\w|)-1)|\w|+U(\w)}{W}\right\rfloor\\
&=\left\lfloor \frac{(y, \w)-z |\w|+(W/(k|\w|-1)|\w|+U(\w)}{W}\right\rfloor\\
&=\ell(y)+\left\lfloor \frac{(y, \w)+U(\w)-\left\lfloor \frac{(y,  \w)+U(\w)}{|\w|}\right\rfloor\cdot |\w|+(W/k-|\w|)}{W}\right\rfloor\\
&\leq \ell(y)+\left\lfloor \frac{|\w|-1+(W/k-|\w|)}{W}\right\rfloor\\
&\leq \ell(y)+\left\lfloor \frac{W/k-1}{W}\right\rfloor\\
&=\ell(y).
\end{split}
\end{equation*}
This shows that for all $\lambda\in \{-z, -z+1,\ldots, -z+W/(k|\w|)-1\}$ one has $\ell(y')=\ell(y)$.  At the same time
$$
(y', \w)+U(\w)=((y+\lambda\cdot \w,  \w)+U(\w))=((y,  \w)+U(\w))+\lambda |\w|,
$$
\fi

We have 
\begin{claim}[Size of $\text{line}^X(x, \w)$]\label{cl:x-line-size}
For every $\e\in (0, 1)$, every $\eta \in (0, 1/10)$ such that $1/\eta$ is an integer, if $m\geq W/|\w|$ is sufficiently large and $W/|\w|$ is an integer multiple of $\text{lcm}(k, 1/\eta)$, $W/|\w|\geq \frac{48k^2\ln(1/\eta)}{\e^2}$, the following conditions hold.

With probability at least $99/100$ over the choice of $X\subseteq X^*$  for every setting of $U(\w), \w\in \bigcup_{i=1}^k \F_i,$  for all but $2\eta^2 |Y|$ points $x \in B^X\setminus \text{Bad}$, every $\w \in \bigcup_{i=1}^k \F_i$  one has $|\text{line}^X(x, \w)|\geq \frac1{k}(1-1/k)(1-4\eta-\e) W/|\w|$ for sufficiently large $m$ as a function of $\eta, k, W/|\w|$ and $\e$. 
\end{claim}
\begin{proof}
For every integer $\lambda$ and every $x\in X^*\setminus \text{Bad}$ one has 
$$
((x+\lambda\cdot \w,  \w)+U(\w))=((x,  \w)+U(\w))+\lambda |\w|.
$$

Letting $z:=\left\lfloor\frac{((y,  \w)+U(\w))\text{~mod~} W}{|\w|}\right\rfloor$, we note that the set of values  of $\lambda$ results in the equation above being satisfied at the same time as $\ell(y')=\ell(y)$ is exactly  $\{-z+\frac{W}{|\w|}\left(\frac1{k}+\eta\right), \ldots, -z+\frac{W}{|\w|}(1-\eta)-1\}$. 
We thus get that the set of values of $\lambda$ that result in $x'=x+\lambda\cdot \w\in \text{line}^{X^*}(x, \w)$ has size $(1-1/k-2\eta)W/|\w|$, as required. 

Note that every $\lambda$ that results in $\ell(x+\lambda \w)=\ell(x)$ satisfies $|\lambda|\leq W/|\w|$, and thus by our assumption $x\in X^*\setminus \text{Bad}$ we have 
$x'+\lambda \w\in X^*$ since $m\geq W/|\w|$ by assumption of the claim. In order to establish the claim, it suffices to analyze the sampling process involved in constructing $X$ from $X^*$.

Since for every $L^{X^*}$ the set $L^X$ is a random subsampling of  $L^{X^*}$, where each element of $X^*$ is included in $X$ independently with probability $1/k$, we have 
$\expect\left[|L^X|\right]=\frac1{k}|L^{X^*}|=\frac1{k}(1-1/k-2\eta)W/|\w|$ for every $y\in Y\setminus \text{Bad}$. Since $X$ is obtained from $X^*$ by independent sampling at rate $1/k$, we get by the Chernoff bound (Theorem~\ref{thm:chernoff-bounds}) 
\begin{equation*}
\prob\left[|L^X|\not \in (1\pm \e) \frac1{k}\left(1-1/k-2\eta\right) |L^{X^*}|\right]\leq 2e^{-\e^2 |L^{X^*}|/(12k)},
\end{equation*}
where we used the fact that $1-1/k-2\eta\geq 1-1/2-1/5\leq 1/4$, since $\eta<1/10$ by assumption of the claim. Since 
\begin{equation*}
\begin{split}
(1-\e)\left(1-1/k-2\eta\right)-(1-\e-4\eta)(1-1/k)&=(1-\e)(1-1/k)-(1-\e)2\eta\\
&-(1-\e)(1-1/k)+4\eta(1-1/k)\\
&=2\eta(-1+\e+2(1-1/k))\\
&\geq 2\eta \e\geq 0\text{~~~~(since $k\geq 2$)},
\end{split}
\end{equation*}
we in particular have 
\begin{equation}\duallabel{eq:prob-bad}
\prob\left[|L^X|< (1-\e-4\eta) \frac1{k}\left(1-1/k\right) |L^{X^*}|\right]\leq 2e^{-\e^2 |L^{X^*}|/(12k)}.
\end{equation}

Now as long as $|L^{X^*}|>48k\ln(1/\eta)/\e^2$, we have that the rhs above is upper bounded by $\eta^2$. Since $|L^{X^*}|=W/(k|\w|)$ by~Claim~\ref{cl:y-line-size}, this follows since $W\geq \frac{48k^2\ln(1/\eta)}{\e^2}\cdot |\w|$ by assumption of the claim.

Finally, note that for every $\w$ we just showed that a single line $L^X$ deviates from expectation with probability at most $\eta^2$. Since distinct lines do not overlap, an application of Chernoff bounds shows that for every $\w$ the probability that the number of lines $L^X\in \mathcal{L}^X(\w)$ that deviate from expectation is at most $2\eta^2 |\mathcal{L}^X(\w)|$ with probability at least 
$$
1-2e^{-\Omega(\eta^2 |\mathcal{L}^{X^*}|)}=1-2 e^{-\Omega(\eta^2 m^{4m} /(W/|\w|))}.
$$
A union bound over at most $2^m$ vectors $\w\in \F$ and at most $(m^5)^{2^m}$ choices for the shifts $U(\w), \w\in \F,$ yields failure probability at most $2 \cdot 2^m \cdot (m^5)^{2^m} \cdot e^{-\Omega(\eta^2 m^{4m} /(k W/|\w|))} <1/100$ as long as $m$ is sufficiently large as a function of $\eta, k$ and $W/|\w|$. 
\end{proof}


\paragraph{Defining bipartite cliques induced by $L^Y\cup \pi^*(L^Y)$.}  We start with
\begin{definition}[Typical lines]\label{def:typical-line}
For $\w \in \bigcup_{i=1}^k \F_i$ and $L^Y\in \mathcal{L}^Y(\w)$ we say that $L^Y$ and its pair $\pi_\w(L^Y)$ are {\em typical} if $|\pi_\w(L^Y)|\geq \frac1{k}(1-1/k)(1-4\eta-\e) W/|\w|$.
\end{definition}

If $L^Y$ is typical as per Definition~\ref{def:typical-line}, let $\tilde L^Y$ denote an arbitrary subset of $L^Y$ of cardinality $(1-4\eta-\e)|L^Y|$. Similarly, let $\tilde L^X$ denote an arbitrary subset of $\pi_\w(L^Y)$ of cardinality $(1-4\eta-\e)(1-1/k)|L^Y|$. Our parameter setting will ensure that $|L^Y|=W/(k|\w|)$ is an integer multiple of $lcm(1/\eta, 1/\e, k)$, so this is feasible. For convenience let $\tilde L^X=\tilde L^Y=\emptyset$ for lines that are not typical. We thus have $|\tilde L^X|=(1-1/k)|\tilde L^Y|$. Now let
\begin{equation}\label{eq:e-ly}
\tilde E(L^Y):=\tilde L^Y\times \tilde L^X.
\end{equation}

Note that for a typical line $L^Y\in \mathcal{L}^Y(\w)$ the degree of a vertex $y\in L^Y$ in $\tilde E(L^Y)$ is either zero or $(1-4\eta-\e)(1-1/k)|L^Y|=(1-4\eta-\e)(1-1/k)\cdot \frac{W}{|\w|}=:(1-1/k)\gamma$, where we let 
\begin{equation}\label{eq:gamma-def}
\gamma:=(1-4\eta-\e)\frac{W}{|\w|},
\end{equation}
 and the degree of a vertex $x\in \pi(L^Y)$ is either zero or $(1-4\eta-\e)|L^Y|=(1-4\eta-\e)\frac{W}{|\w|}=\gamma$.
Also note that all edges in the graph that we just defined are of the form $(c, d)$, where 
\begin{equation}\duallabel{eq:edges}
c=d+\lambda \cdot \w, |\lambda|\leq \frac{W}{|\w|}.
\end{equation}

\paragraph{Defining the edges of $H^\w_i, \w\in \F_{i+1}$.} Let 
$$
\tilde E^\w_i:=\bigcup_{y\in T_i\setminus T_i^\w} \tilde E(\text{line}^Y(y, \w)),
$$
where $\tilde E(\text{line}^Y(y, \w))$ is defined in~\eqref{eq:e-ly}, $T_i$ is defined in~\eqref{eq:ti-def} and $T_i^\w$ is defined in~\eqref{eq:tiw-def}. Note that since our vectors are only nearly orthogonal, for a $y\in T_i$ we do not necessarily have $\text{line}^Y(y, \w)\subseteq T_i$. We now let 
\begin{equation}\label{eq:ewi}
E^\w_i:=\tilde E^\w_i\cap (T_i\times S_i).
\end{equation}

\subsubsection{Induced property of subgraphs $H^{\w}_i$}
We now show that the graphs $H^\w_i$ constructed above are induced for each $i$ and $\w\in \F_i$. The argument is similar to \cite{montest,gkk:streaming-soda12}.  

\begin{claim}\label{cl:induced}
For every $\eta\in (0, 1)$, integer $k\geq 2$, if $\e\in (0, 1)$ is smaller than $\eta$, then for every $i=0,\ldots, k-1$, the edge set $E^\w_i$ (defined in~\eqref{eq:ewi}) is an induced union of subgraphs $H_i^\w$.
\end{claim}
\begin{proof}
Recall that $\F_{i+1}$ was chosen as a family of binary vectors of fixed weight with small intersections, namely
for every $\w, \w'\in \F_{i+1}, \w\neq \w'$ one has
\begin{equation}\duallabel{eq:orthogonal}
(\w, \w')\leq \e |\w|.
\end{equation} 
Suppose that an edge $(c, d)\in E(H^\w_i), c\in X, d\in Y$  is induced by 
$H^{\w'}_i$ for $\w'\neq \w$. 
Since edges of $H^{\w'}_i$ connect red points in $Y$ with respect to $\w'$ to blue points in $X$ with respect to $\w'$ (see~\eqref{eq:line-y-def}, ~\eqref{eq:line-x-def} and the definition of edges in $H^\w_i$ in ~\eqref{eq:e-ly} and~\eqref{eq:ewi}; see also~\eqref{eq:line-y-def} and~\eqref{eq:line-x-def}), it must be that $d\in R^Y(\w')$ and $c\in B^X(\w')$,  so
\begin{equation}\duallabel{eq:induced-1}
|(c-d, \w')|\geq \eta\cdot W.
\end{equation}
However, by \dualeqref{eq:orthogonal} together with \dualeqref{eq:edges} one has 
$$
|(c-d, \w')|=|\lambda| \cdot (\w, \w')\leq \frac{W}{|\w|} (\w, \w')\leq \frac{W}{|\w|} \e |\w|=  \e W<\eta W, 
$$
since $\e<\eta$ by assumption of the claim. This yields a contradiction with \dualeqref{eq:induced-1}, and hence $H^\w_i$ are induced. 
\end{proof}

\renewcommand{\ind}{\mathbf{1}}

\subsubsection{Existence of a large matching in the host graph}

We now show that with high probability over the choice of the random shifts $U(\v), \v\in \bigcup_{i=1}^k \F_i$, for any $i=0,\ldots, k-1$ any collection $\u_1,\ldots, \u_{i-1}, \u_s\in \F_s, s=1,\ldots, i-1$ and $\w\in \F_{i+1}$ there exists a matching of $1-O(k^3 \e/\eta)$ fraction of $S_{i}$ to $T_i\setminus T_i^{\w}$. Formally we prove

\begin{claim}\label{cl:large-matching}
For every integer $k\geq 2$, sufficiently small $\eta\in (0, 1)$ such that $1/\eta$ is an integer,  $\e \in (0, c\cdot \eta^2/k^6)$ for a sufficiently small constant $c>0$ such that $1/\e^{1/2}$ is an integer, if $\theta=\eta$ (see~\dualeqref{eq:shifts-range})  and $W/w$ is an integer multiple of $k/(\e\cdot \theta)$, the following conditions hold for sufficiently large $m$.

There exists an event $\E_{balanced-degrees}$ that occurs with probability at least $99/100$ over the choice of random shifts $U(\v), \v\in \bigcup_{j=1}^k \F_j$, for every $i=0,\ldots, k-1$, every collection $\u_1,\ldots, \u_s\in \F_s, s=1,\ldots, i$ and every $\w\in \F_{i+1}$ such that conditioned on $\E_{balanced-degrees}$ and the event $\E$ from Lemma~\ref{lm:bounds-on-set-sizes} there exists a matching of $1-O(k^3 \e^{1/2}/\eta)$ fraction of $S_{i}$ to $T_i\setminus T_i^{\w}$.  
\end{claim}
\begin{proof}
We will do this by exhibiting a fractional matching of appropriate size. Recall that a fractional matching is an assignment of non-negative weights $z_e$ to edges $e$ of the graph such that for every vertex $v$ of the graph one has $\sum_{e\in \delta(v)} z_e\leq 1$.  We now exhibit a fractional matching in the graph in three steps. 

First, for every typical line $L^Y\in \mathcal{L}^Y(\w)$ that touches $T_i\setminus T^\w_i$ we assign weights to every edge of $\tilde E(L^Y)$ in such a way that every vertex in $L^Y$ that has nonzero degree in $\tilde E(L^Y)$ receives $1-1/k$ fractional mass, and every vertex in $\pi(L^Y)$ that has a nonzero degree receives mass $1$. Then we assign fractional mass uniformly to edges incident on vertices in $S_i\setminus S_i^\w$ to ensure that these vertices contribute the missing $1/k$ fraction of mass to vertices in $L^Y$, up to a small error term that is independent of $k$, the number of rounds in the game, and can be made arbitrarily small by choosing the maximum dot product $\e$ between vectors in $\bigcup_{j=1}^k \F_j$ small, and making the `buffer' between red and blue vertices appropriately small (this mass is assigned to edges in lines $L^Y\in \mathcal{L}^Y(\v)$ for $\v\in \mathcal{F}_{i+1}\setminus \{\w\}$). This ensures that the matching supported by the lines that touch  $T_i\setminus T^\w_i$ is about the size of $S_i$. The only problem is that this matching uses edges outside of $T_i\setminus T^\w_i$ and $S_i$. We then show that pruning to edges contained in $(T_i\setminus T^\w_i)\times S_i$ only affects matching size by a small error term, completing the proof.

\paragraph{Step 1: weights on edges of $H^\w_i$.} Recalling that for a typical line $L^Y\in \mathcal{L}^Y(\w)$ the degree of every vertex in $L^Y$ is either zero or $(1-1/k)\gamma$ (where $\gamma$ is defined in~\eqref{eq:gamma-def}), we put weight $1/\gamma$ on every edge of $\tilde E(L^Y)$. This way every vertex of nonzero degree in $L^Y$ gets fractional mass $1-1/k$, and every vertex of nonzero degree in $\pi(L^Y)$ gets fractional mass $1$.

\paragraph{Step 2: weights on edges of $H^\v_i$ for $\v \neq \w$.} We start by showing that for a fixed $\v$ and for every $y\in Y$ one has that $\prob_{U(\v)}[y\in R^Y(\v)]$ is very close to $\frac1{k}$. Indeed, recall that $U(\v)$ is uniformly random over the set 
$$
\{0, \ldots, k/\theta-1\}\cdot W\cdot (\theta/k),
$$
where $\theta\in (0, 1)$ is a parameter that by assumptions of the lemma is equal to $\eta$ (see~\eqref{eq:shifts-range}).   Using the definition of $R^Y(\v)$ (see~\eqref{eq:strips-1}) we can now bound
\begin{equation*}
\prob_{U(\v)}[y\in R^Y(\v)]=\prob_{U(\v)}[((y,  \v)+U(\v))\mod  W\in [0, 1/k)\cdot W].
\end{equation*}
Writing $(y, \v)=\left\lfloor \frac{(y, \v)}{W\cdot (\theta/k)}\right\rfloor \cdot W\cdot (\theta/k)+((y, \v) \text{~mod~} (W\cdot (\theta/k)))$ and recalling that $U(\v)$ is uniformly random in $\{0, \ldots, k/\theta-1\}\cdot W\cdot (\theta/k)$ by definition as well as that $1/\theta$ is an integer, we get that 
$$
\prob_{U(\v)}[((y,  \v)+U(\v))\mod  W\in [0, 1/k)\cdot W]=(1/\theta)/(k/\theta)=1/k,
$$
as required. A similar argument shows that for every $x\in S_i$ one has $\prob_{U(\v)}[x\in B^{X^*}(\v)]=1-1/k-2\eta$. Indeed, this is because
$$
\prob_{U(\v)}[((x,  \v)+U(\v))\mod  W\in [1/k+\eta, 1-\eta)\cdot W]=(k/\theta)(1-1/k-2\eta)/(k/\theta)=1-1/k-2\eta,
$$
where we used the assumption that $\theta=\eta$.

 Next note that each vertex $y\in R^Y(\w)\setminus \text{Bad}$ has degree $(1-1/k)\gamma$ or $0$ in $H^\v_i$, and for every $\v$ the fraction of vertices that have degree $0$ in $H^\v_i$ is at most $2\eta^2 |Y|$ by Claim~\ref{cl:x-line-size}. Furthermore, since the random shifts $U(\v)$ are independent for distinct $\v$, we obtain using Chernoff bounds (Theorem~\ref{thm:chernoff-bounds}) for $\delta\in (0, 1)$ that for  every $\w\in \F_{i+1}$ and every $y\in T_i\setminus T^\w_i$
\begin{equation*}
\prob_{\{U(\v)\}_{\v\in \F_{i+1}, \v\neq \w}}\left[\sum_{\v\in \F_{i+1}, \v\neq \w} {\bf I}[y\in R^Y(\v)]\not \in (1\pm \delta)d/k\right]\leq 2e^{-\delta^2 d/(3k)}.
\end{equation*}
Similarly we have for every $\w\in \F_{i+1}$ and  $x\in S_i\setminus S^\w_i$ and $\delta\geq 4\eta$
\begin{equation*}
\prob_{\{U(\v)\}_{\v\in \F_{i+1}, \v\neq \w}}\left[\sum_{\v\in \F_{i+1}, \v\neq \w} {\bf I}[x\in B^X(\v)]\not \in (1\pm \delta)d(1-1/k)\right]\leq 2e^{-\delta^2 d/24}.
\end{equation*}

Let $\E_{balanced-degrees}$ denote the event that for every every collection $\u_1,\ldots, \u_{i-1}, \u_s\in \F_s, s=1,\ldots, i-1$, every $\w\in \F_{i+1}$, every $y\in T_i$ one has 
$\sum_{\v\in \F_{i+1}, \v\neq \w} {\bf I}[y\in R^Y(\v)] \in (1\pm \delta)d/k$ (i.e. $y$ is a red vertex with respect to about the expected number of vectors $\v$) and $\sum_{\v\in \F_{i+1}, \v\neq \w} {\bf I}[x\in B^X(\v)]\not \in (1\pm \delta)d(1-1/k)$ (i.e. $x$ is a blue vertex with respect to about the expected number of vectors $\v$).
Since there are only $O(m^{4m})$ vertices in $Y$ and $X$, and $\left|\bigcup_i \F_i\right|\leq 2^{m}$, and $d=2^{\Omega(m)}$, for any constant $k,\eta, \delta$ and sufficiently large $m$ a union bound shows that $\E_{balanced-degrees}$ occurs with probability at least $99/100$.

The assignment of fractional weights on edges incident to vertices in $S_i\setminus S_i^\w$ is as follows: we put weight $\frac1{(1-1/k)\gamma\cdot  (1+\delta)d}$ on each edge of $\tilde E(L^Y)$ for $L^Y\in \mathcal{L}^Y(\v)$ that is incident on $y\in T_i\setminus T_i^\w$. We now verify feasibility of this solution in the presence of weights assigned in step 1, and then compute the size of the matching.

To verify feasibility, note that, conditioned on $\E_{balanced-degrees}$, the contribution of this assignment to any vertex in $S^i\setminus S^i(\w)$ is at most
$$
\frac1{ (1-1/k)\gamma\cdot  (1+\delta)d}\cdot (1+\delta) d (1-1/k)\cdot \gamma=1,
$$
where we used the fact that the degree of a vertex in $S_i\setminus S_i^\w$ in a subgraph induced by a typical line is at most $\gamma$. Contribution to any vertex in $T_i\setminus T_i^\w$ is at most 
$$
\frac1{(1-1/k)\gamma\cdot  (1+\delta)d}\cdot (1+ \delta) d/k \cdot (1-1/k) \gamma=1/k,
$$
where we used the fact that the degree of a vertex in $T_i\setminus T_i^\w$ in a subgraph induced by a typical line is at most $(1-1/k)\gamma$. Thus, the total mass assigned to vertices in $T_i\setminus T_i^\w$ as well as $S_i\setminus S_i^\w$ is upper bounded by $1$.

To lower bound the value of the fractional solution, first note that by Claim~\ref{cl:x-line-size} with high probability over the choice of $X$ for every $\w$ at most $2\eta^2 |Y|$ points belong to atypical lines (see Definition~\ref{def:typical-line}), which corresponds to a loss of at most $2\eta^2 |Y|$ in matching size. Now recall that a line is called typical (see Definition~\ref{def:typical-line}) if $|\pi_\w(L^Y)|\geq (1-1/k-4\eta-\e) W/|\w|$. Thus, at most a $4\eta+\e$ fraction of mass assigned is lost due to this. Since this applies to every $\v \in \F_{i+1}, \v\neq \w$, as well, we get that the constructed fractional matching is feasible, and its size is at least $(1-O(\delta+\e)) |S_i|-O(\eta^2) |Y|=(1-O(\eta k))|S_i|$, where we set $\delta=4\eta$ and used the fact that conditioned on the event $\E$ from Lemma~\ref{lm:bounds-on-set-sizes} one has $|S_i|=\Omega(1/k)|Y|$.

\paragraph{Step 3: bounding effect of truncation of $\tilde E^\w_i$ to $E^\w_i$} 
In steps 1 and 2 we showed that the mass we assigned to $\tilde E^\w_i$ corresponds to a feasible matching of size at least $(1-O(\eta k))|S_i|$. We now show that truncating $\tilde E^\w_i$ to $E^\w_i$ (see~\eqref{eq:ewi}) does not lead to a significant loss in matching size. Recall that by~\eqref{eq:ti-def}
\begin{equation*}
\begin{split}
T_i&=\{y\in Y : ((y, \u_j)+U(\u_j)) \mod W\in [1/k, 1)\cdot W,\text{~for all~} j\in [1:i]\}\\
S_i&=\{x\in X : ((x, \u_j)+U(\u_j)) \mod W\in [1/k, 1)\cdot W,\text{~for all~} j=[1:i]\}\\
\end{split}
\end{equation*}

For every $y\in T_i$ one has by~\eqref{eq:ti-def} that $((y, \u_j)+U(\u_j)) \mod W\in [1/k, 1)\cdot W$ for all  $j=1,\ldots, i$, and hence for every $\lambda\in (0, W/|\w|]$ and every $j=1,\ldots, i$
$$
|(y+\lambda \w, \u_j)-(y, \u_j)|\leq \lambda (\w, \u_j)\leq \e \lambda |\w|\leq \e W
$$
by choice of the family $\F_1,\ldots, \F_k$. We thus get that $y+\lambda \w$ belongs to the set
$$
\hat T_i:=\{y\in Y : ((y, \u_j)+U(\u_j)) \mod W\in ([0, \e)\cup [1/k-\e, 1))\cdot W,\text{~for all~} j\in [1:i]\},
$$
i.e. the result of relaxing the constraints that define $T_i$ by a $\e$ in every direction. At the same time
\begin{equation*}
\begin{split}
\hat T_i\setminus T_i&\subseteq \bigcup_{j=1}^i \{y\in Y : ((y, \u_j)+U(\u_j)) \mod W\in ([0, \e)\cup [1/k-\e, 1/k))\cdot W\}\\
&\subseteq \bigcup_{j=1}^i \{y\in Y : ((y, \u_j)+U(\u_j)) \mod W\in ([0, \e^{1/2})\cup [1/k-\e^{1/2}, 1/k))\cdot W\}
\end{split}
\end{equation*}
and thus 
\begin{equation*}
\begin{split}
|\hat T_i\setminus T_i|&\leq \sum_{j=1}^i |\{y\in Y : ((y, \u_j)+U(\u_j)) \mod W\in ([0, \e^{1/2})\cup [1/k-\e^{1/2}, 1/k))\cdot W\}|\\
&\leq 2k(6(k/\theta)\cdot \e^{1/2}+4/m)|Y|\text{~~~~~~(by Lemma~\ref{lm:intersection-size})}
\end{split}
\end{equation*}
To obtain the last bound, we applied Lemma~\ref{lm:intersection-size} for each $j=1,\ldots, i$ with $\mathcal{U}=\{\u_j\}$, $\delta'=\e$ and $L=(k/\theta)\cdot \e^{-1/2}$.
Now recalling that by~\eqref{eq:ewi} one has $E^\w_i:=\tilde E^\w_i\cap (T_i\times S_i)$ and that $S_i$ is obtained by intersecting $T_i$ with $X$, 
we get that the edges pruned from $\tilde E^\w_i$ by restricting to $T_i\times S_i$ as above are incident on as set of vertices of size at most 
\begin{equation}
2|\hat T_i\setminus T_i|\leq 4k(6(k/\theta)\cdot \e^{1/2}+4/m)|Y|.
\end{equation}
Since every vertex received at most $1$ unit of fractional mass, the size of the matching supported by $E^\w_i$ is thus at least the size of the matching supported by $\tilde E^\w_i$ minus $2k\cdot (6(k/\theta)\cdot \e^{1/2}+4/m)|Y|=O(k^2 \e^{1/2}/\eta) |Y|=O(k^3\e^{1/2}/\theta)|S_i|$ giving the result. In the last transition we used the fact that $|S_i|=\Omega(|Y|/k)$ when $\e<c\cdot \theta^2/k^6$ for a sufficiently small constant $c>0$ (by Lemma~\ref{lm:bounds-on-set-sizes}), as well as the assumption that $\theta=\eta$.

\end{proof}

\subsubsection{Existence of a sparse directed cut}

 Define
\begin{equation}\duallabel{eq:z-def}
 Z:=\{y\in Y:((y, \u_j)+U(\u_j)) \mod W\in ([1/k-\e, 1/k)\cup[0, \e))\cdot W\text{~for some $j\in [1:k]$}\}.
 \end{equation}
 
 We prove 
 \begin{claim}\label{cl:small-cut}
For every integer $k\geq 2$, $\e\in (0, 1)$ such that $1/\e^{1/2}$ is an integer, if $W/w$ is an integer multiple of $k/(\e\cdot \theta)$, the following conditions hold for sufficiently large $m$.

For every $i=0,1,\ldots, k-1$ the subgraph $H^*$ induced by $(T_i\setminus (T_k\cup Z))\cup S_i^*$ only contains the edges of $E^{\u_{i+1}}_i$.  In addition, one has $|Z|\leq 2k^2(6\e^{1/2}/\theta+4/m)|Y|$.
 \end{claim}
 \begin{proof}
Recall that the sets $S_i^*$ are defined in~\eqref{eq:siu-def}. First note that if an edge $(c, d), c\in P, d\in Q$ belongs to $H^*$, then $c\in S_i^*$ and $d\in T^i$, so $(c, d)$ necessarily belongs to some graph $H^\w_i$, where $\w\in \F_{i+1}$. Then  we have by \dualeqref{eq:edges} that 
 $$
d-c=\lambda\cdot \w, \text{~where~} |\lambda|\leq W/|\w|.
 $$
 
Thus, we have for all $j=1,\ldots, k$ using the orthogonality condition \dualeqref{eq:orthogonal}
\begin{equation}\duallabel{eq:orth}
\left|(c-d, \u_j)\right|\leq \frac{W}{|\w|}\left|(\w, \u_j)\right|\leq \e W.
\end{equation}
Now recall that $c\in S_i^*$ by assumption, so by \dualeqref{eq:ti-def} and \dualeqref{eq:siu-def}
\begin{equation*}
(c, \u_j)+U(\u_j) \mod W\in [1/k, 1)\cdot W,\forall j=1,\ldots, k.
\end{equation*}
Thus, by \dualeqref{eq:orth} one has
\begin{equation*}
(d, \u_j)+U(\u_j) \mod W\in ([1/k-\e, 1]\cup [0, \e))\cdot W,\text{~~for all~} j=1,\ldots, k,
\end{equation*}
i.e. $d\in Z\cup T^k$. Thus, the subgraph $H^*$ contains only edges of $E^{\u_{i+1}}_i$ for every $i=0,\ldots, k-1$, as required.

It remains to bound the size of $Z$. We first note that 
\begin{equation}\label{eq:z-union-bound}
\begin{split}
 Z&=\{y\in Y:((y, \u_j)+U(\u_j)) \mod W\in ([1/k-\e, 1/k)\cup[0, \e))\cdot W\text{~for some $j\in [1:k]$}\}\\
 &\subseteq \bigcup_{j=1}^k \{y\in Y:((y, \u_j)+U(\u_j)) \mod W\in ([1/k-\e, 1/k)\cup[0, \e))\cdot W\}\\
  &\subseteq \bigcup_{j=1}^k \{y\in Y:((y, \u_j)+U(\u_j)) \mod W\in ([1/k-\e^{1/2}, 1/k)\cup[0, \e^{1/2}))\cdot W\}.
\end{split}
\end{equation}

For each $j=1,\ldots, k$ we now use Lemma~\ref{lm:intersection-size} with $\mathcal{U}=\{\u_j\}$, $a_{\u_j}=\frac1{k}-\e, b_{\u_j}=\frac1{k}$, and then again with $a_{\u_j}=0, b_{\u_j}=\e$. In both cases we set $L=(k/\theta)\cdot 1/\e^{1/2}$. We thus get 
\begin{equation*}
\begin{split}
&|\{y\in Y:((y, \u_j)+U(\u_j)) \mod W\in ([1/k-\e^{1/2}, 1/k)\cup[0, \e^{1/2}))\cdot W\}|\\
&\leq 2(6L\e +4/m)\cdot |Y|\\
&\leq 2(6 k \e^{1/2}/\theta +4/m)\cdot |Y|.
\end{split}
\end{equation*}

Using this together with ~\eqref{eq:z-union-bound} yields $|Z|\leq 2k^2(6\e^{1/2}+4/m)|Y|$, as required.

\end{proof}

\subsection{Distribution over inputs}

\newcommand{\dtree}
{
\begin{center}
\tikzstyle{vertex}=[circle,draw, fill=none, minimum size=15pt,inner sep=1pt]
\tikzstyle{svertex}=[circle,fill=none, draw, minimum size=15pt,inner sep=2pt]
\tikzstyle{evertex}=[circle,draw=none, minimum size=25pt,inner sep=1pt]
\tikzstyle{edge} = [draw,-, color=red!100, very  thick]
\tikzstyle{bedge} = [draw,-, color=green!100, very  thick]
\begin{tikzpicture}[scale=0.4, auto,swap]

    \draw (0, 9)  node {$d$-ary tree $\mathcal{T}$};
    
    \node[vertex](u) at (0, 7) {};
    \node[svertex](u) at (0, 7) {$u_0$};    
    
    \node[vertex](v1) at (-6, 4) {};    
    \node[svertex](v1) at (-6, 4) {$u_1$};
    \node[vertex](v2) at (0, 4) {};
    \node[vertex](v3) at (6, 4) {};

    \node[vertex](v11) at (-4, 1) {};
    \node[svertex](v11) at (-4, 1) {$u_2$};        
    \node[vertex](v12) at (-6, 1) {};
    \node[vertex](v13) at (-8, 1) {$w$};
    
    \node[evertex](h) at (-10, 4) {$H^{w}_2$};    
    \node[evertex](hp) at (-6.7, 2) {};        
    \path[draw, line width=1pt, ->] (h) -- (hp);

    \node[vertex](v21) at (-2, 1) {};
    \node[vertex](v22) at (0, 1) {};
    \node[vertex](v23) at (+2, 1) {};
    
    \node[vertex](v31) at (4, 1) {};
    \node[vertex](v32) at (6, 1) {};
    \node[vertex](v33) at (8, 1) {};

    \path[draw, line width=4pt, -] (u) -- (v1);    
    \path[draw, line width=2pt, -, dashed] (u) -- (v2);    
    \path[draw, line width=2pt, -, dashed] (u) -- (v3);    

    \path[draw, line width=4pt, -] (v1) -- (v11);    
    \path[draw, line width=2pt, -, dashed] (v1) -- (v12);    
    \path[draw, line width=2pt, -, dashed] (v1) -- (v13);    
    
    \path[draw, line width=1pt, -] (v2) -- (v21);    
    \path[draw, line width=1pt, -] (v2) -- (v22);    
    \path[draw, line width=1pt, -] (v2) -- (v23);    
    
    \path[draw, line width=1pt, -] (v3) -- (v31);    
    \path[draw, line width=1pt, -] (v3) -- (v32);    
    \path[draw, line width=1pt, -] (v3) -- (v33);

\end{tikzpicture}    
\end{center}
 }

\newcommand{\packing}[1]
{
     \draw[fill=gray!50,opacity=1] (23,7) ellipse (7 and 1);
     \draw(18, 9) node {$T^{u_0}$};     
     
     \draw[fill=white,opacity=1] (24.5,7) ellipse (5 and 1);     
     \draw(22, 9) node {$T^{u_1}$};          

     \draw[fill=none,opacity=1] (20,1) ellipse (3 and 1);          
     \ifthenelse{\lengthtest{#1 pt < 4 pt}} {     \draw(18, -1) node {$S_0$}; }{};
     
     \draw[fill=gray!50,opacity=1] (21,1) ellipse (1.8 and 0.8); 
     
     \draw[fill=none,opacity=1] (22,1) ellipse (0.8 and 0.5);      
     
    \node[evertex](h) at (15, 4) {$H^{u_1}_0$};    
    \node[evertex](hp) at (20.5, 3) {};        
    \path[draw, line width=1pt, ->] (h) -- (hp);

     \draw(21, -1) node {$S_0^{u_1}$}; 
     \draw(22.5, -1) node {$S_0^{u_2}$};      
     
     \draw[color=black!50, line width=1 pt] (18, 1) -- (20, 7);
     \draw[color=black!50, line width=1 pt] (18, 1) -- (17, 7);          

     \draw[color=black!100, line width=3 pt] (20, 1) -- (18, 7);
     \draw[color=black!50, line width=1 pt] (20, 1) -- (25, 7);          

     \draw[color=black!100, line width=3 pt] (21, 1) -- (19, 7);
     \draw[color=black!50, line width=1 pt] (21, 1) -- (25, 7);

     \draw[fill=none,opacity=1] (26,1) ellipse (2 and 1);              
\ifthenelse{\lengthtest{#1 pt < 4 pt}} {    \draw(25, -1) node {$S_1$};                  }{};

     \draw[fill=none,opacity=1] (26.8,1) ellipse (1.1 and 0.7);              
     \draw(27, -1) node {$S_1^{u_2}$};

     \draw[color=black!50, line width=1 pt] (25, 1) -- (21, 7);
     \draw[color=black!50, line width=1 pt] (25, 1) -- (26, 7);          

     \draw[color=black!50, line width=1 pt] (27, 1) -- (24, 7);
     \draw[color=black!50, line width=1 pt] (27, 1) -- (22, 7);

\ifthenelse {\lengthtest{#1 pt > 2 pt}}      
{
     \draw[fill=none,opacity=1] (26,7) ellipse (3 and 0.9);          
     \draw(26, 9) node {$T^{u_2}$};
     
     \draw[fill=none,opacity=1] (32, 1) ellipse (3 and 0.9);             
      \draw(32, -1) node {$S_2$}; 
    
     \draw[color=black!50, line width=1 pt] (30, 1) -- (24, 7);
     \draw[color=black!50, line width=1 pt] (32, 1) -- (26, 7);     
     \draw[color=black!50, line width=1 pt] (34, 1) -- (28, 7);

}
{};

\ifthenelse {\lengthtest{#1 pt > 3 pt}}      
{
     \draw[fill=none,opacity=1] (21,1) ellipse (2 and 0.9);      
     \draw[fill=none,opacity=1] (21.5,1) ellipse (1.3 and 0.8);      
          
     \draw[fill=none,opacity=1] (26.5,1) ellipse (1.3 and 0.9);

     \draw(18, -1) node {$S_0^{u_0}$};     
     \draw(20, -1) node {$S_0^{u_1}$};                    
     \draw(22, -1) node {$S_0^{u_2}$};

     \draw(25, -1) node {$S_1^{u_1}$};     
     \draw(27, -1) node {$S_1^{u_2}$};                    
     }{};
}

\newcommand{\dpacking}[1]
{
\begin{center}
\tikzstyle{vertex}=[circle,fill=blue!50, minimum size=15pt,inner sep=1pt]
\tikzstyle{svertex}=[circle,fill=red!100, minimum size=15pt,inner sep=1pt]
\tikzstyle{evertex}=[circle,draw=none, minimum size=25pt,inner sep=1pt]
\tikzstyle{edge} = [draw,-, color=red!100, very  thick]
\tikzstyle{bedge} = [draw,-, color=green!100, very  thick]
\begin{tikzpicture}[scale=0.4, auto,swap]

\packing{#1}
     
\end{tikzpicture}
\end{center}
}


\newcommand{\dtreeT}
{
\tikzstyle{vertex}=[circle,draw, fill=none, minimum size=15pt,inner sep=1pt]
\tikzstyle{svertex}=[circle,fill=none, draw, minimum size=15pt,inner sep=2pt]
\tikzstyle{evertex}=[circle,draw=none, minimum size=25pt,inner sep=1pt]
\tikzstyle{edge} = [draw,-, color=red!100, very  thick]
\tikzstyle{bedge} = [draw,-, color=green!100, very  thick]
\begin{tikzpicture}[scale=0.35, auto,swap]

    \draw (0, 9)  node {$d$-ary tree $\mathcal{T}$};
    
    \node[vertex](u) at (0, 7) {};
    \node[svertex](u) at (0, 7) {$u_0$};    
    
    \node[vertex](v1) at (-6, 4) {};    
    \node[svertex](v1) at (-6, 4) {$u_1$};
    \node[vertex](v2) at (0, 4) {};
    \node[vertex](v3) at (6, 4) {};

    \node[vertex](v11) at (-4, 1) {};
    \node[svertex](v11) at (-4, 1) {$u_2$};        
    \node[vertex](v12) at (-6, 1) {};
    \node[vertex](v13) at (-8, 1) {$w$};
    
    \node[evertex](h) at (-10, 4) {$H^{w}_2$};    
    \node[evertex](hp) at (-6.7, 2) {};        
    \path[draw, line width=1pt, ->] (h) -- (hp);

    \node[vertex](v21) at (-2, 1) {};
    \node[vertex](v22) at (0, 1) {};
    \node[vertex](v23) at (+2, 1) {};
    
    \node[vertex](v31) at (4, 1) {};
    \node[vertex](v32) at (6, 1) {};
    \node[vertex](v33) at (8, 1) {};

    \path[draw, line width=4pt, -] (u) -- (v1);    
    \path[draw, line width=2pt, -, dashed] (u) -- (v2);    
    \path[draw, line width=2pt, -, dashed] (u) -- (v3);    

    \path[draw, line width=4pt, -] (v1) -- (v11);    
    \path[draw, line width=2pt, -, dashed] (v1) -- (v12);    
    \path[draw, line width=2pt, -, dashed] (v1) -- (v13);    
    
    \path[draw, line width=1pt, -] (v2) -- (v21);    
    \path[draw, line width=1pt, -] (v2) -- (v22);    
    \path[draw, line width=1pt, -] (v2) -- (v23);    
    
    \path[draw, line width=1pt, -] (v3) -- (v31);    
    \path[draw, line width=1pt, -] (v3) -- (v32);    
    \path[draw, line width=1pt, -] (v3) -- (v33);

\end{tikzpicture}    
 }

\newcommand{\dpackingT}[1]
{
\tikzstyle{vertex}=[circle,fill=blue!50, minimum size=15pt,inner sep=1pt]
\tikzstyle{svertex}=[circle,fill=red!100, minimum size=15pt,inner sep=1pt]
\tikzstyle{evertex}=[circle,draw=none, minimum size=25pt,inner sep=1pt]
\tikzstyle{edge} = [draw,-, color=red!100, very  thick]
\tikzstyle{bedge} = [draw,-, color=green!100, very  thick]
\begin{tikzpicture}[scale=0.35, auto,swap]

\packing{#1}
     
\end{tikzpicture}
}

\if 0
\begin{figure}
\dpacking{3}
\caption{Subgraphs $(T^{u_i}, S_i)$ that arrive in the stream. The edges of induced near-regular subgraph $H^{u_1}_0$  induced by $(T^{u_0}\setminus T^{u_1})\cup S_{0}^{u_1}$ are shown in bold.}\duallabel{fig:2}
\end{figure}
\fi

We now formally define our hard input distribution. The input graph $G'$ is generated as follows. First sample $X\subseteq \{0, 1\}^m$ as in Definition~\ref{def:xy}, then for every $\w \in \bigcup_{j=1}^k \F_j$ sample the shift $U(\w)$ independently as per~\eqref{eq:shifts-range}. Finally, sample $\u_s\in \F_s, s=1,\ldots, k$ independently and uniformly at random, and let $G:=G(\u_1,\ldots, \u_k)$ denote the host graph as constructed in Section~\ref{sec:host-graph-construction}. For every $\v \in \F_{i+1}$ and $y\in T_i\setminus T_i^\v$ let $b_y^\v\in \{0, 1\}$ denote a Bernoulli random variable with expectation $1-\xi$ for a small $\xi\in (0, 1)$ that we will set later. The variables $b_y^\v$ are independent conditioned on 
$$
\sum_{y\in T_i\setminus T_i^\v} b_y^\v=\lceil (1-\xi) |T_i\setminus T_i^\v|\rceil
$$ for every $\v\in \F_{i+1}$. In other words, $b^\v$ encodes a uniformly random subset of $T_i\setminus T_i^\v$ of size $\lceil (1-\xi)T_i\setminus T_i^\v\rceil$. For every $i=0,\ldots, k-1$ let $B_i:=\{b_y^\v\}_{\v \in \F_{i+1}, y\in T_i\setminus T_i^\v}$. 
\begin{definition}[Subsampling of the host graph]\label{def:g-prime}
For $i=0,\ldots, k-1$ let the graph $G'(\u_{1:i}; B_{0:i})$ be formed by including, for every $\v\in \F_{i+1}$ and $y\in T_i\setminus T_i^\v$ all edges incident on $y$ in $H_i^\v$ if $b_y^\v=1$ and none of these edges otherwise. For $i=k$, let $G'(\u_{1:k}; B_{0:k-1})$ contain all edges incident on $S_k$. Let 
$$
G':=\left(\bigcup_{i=0}^{k-1} G'(\u_{1:i}; B_{0:i})\right)\cup G'(\u_{1:k}; B_{0:k-1}).
$$
\end{definition} 
The stream  consists of $k+1$ {\em phases}: for each $i=0,\ldots, k$  the vertices and edges of $G'(\u_{1:i}, B_{0:i})$  incident on $S_i$ arrive in phase $i$ in an arbitrary order.

We start with the following  claim
\begin{claimp}\duallabel{claim:large-matching}
For every $\xi\in (0, 1)$, every integer $k\geq 2$, sufficiently small $\eta\in (0, 1)$ such that $1/\eta$ is an integer,  $\e \in (0, c\cdot \eta^2/k^6)$ for a sufficiently small constant $c>0$ such that $1/\e^{1/2}$ is an integer, if $\theta=\eta$ (see~\dualeqref{eq:shifts-range})  and $W/w$ is an integer multiple of $k/(\e\cdot \theta)$, the following conditions hold for sufficiently large $m$.

There exists an event $\E_{large-matching}$ that occurs with probability at least $97/100$ over the choice of random shifts $U(\v), \v\in \bigcup_{j=1}^k \F_j$ and choice of $\bigcup_{i=0}^{k-1} \{b_y^\v\}_{\v\in \F_{i+1}, y\in T_i\setminus T_i^\v}$,  the graph $G'$ contains a matching of size at least $(1-O(\xi+k^4\e^{1/2}/\eta))|Y|$.
\end{claimp}
\begin{proof}
By Claim~\ref{cl:large-matching} there exists an event $\E_{balanced-degrees}$ that depends only on the choice of $X$ and the shifts $U(\w), \w\in \bigcup_{j=1}^k \F_j$ such that conditioned on $\E_{balanced-degrees}$ and event $\E$ from Lemma~\ref{lm:bounds-on-set-sizes} by Claim~\dualref{cl:large-matching} for every $i=0,\ldots, k-1$ and $\u_1,\ldots,\u_s\in \F_s, s=1,\ldots, i$ and $\w\in \F_{i+1}$ there exists a matching $M_i$ of $1-O(k^3\e^{1/2}/\eta)$ fraction of $S_{i}$ to $T_i\setminus T_i^{\u_{i+1}}$.  Furthermore, the set $S_k$ can be perfectly matched to $T_k$ by definition. Let $M$ be a union of these matchings. Note that conditioned on $\E_{balanced-degrees}$ the matching $M$ satisfies
\begin{equation*}
\begin{split}
|M|&\geq (1-O(\xi+k^3\e^{1/2}/\eta))|S|+|T_k|\\
&\geq (1-O(\xi+k^3\e^{1/2}/\eta))\sum_{i=0}^{k-1} |S_i|+|T_k|\\
&\geq (1-O(\xi+k^3\e^{1/2}/\eta))\sum_{i=0}^{k-1} (1-1/k)^i|Y|/k+|T_k|-O(k^4\e/\theta)|Y|\\
&\geq (1-O(\xi+k^3\e^{1/2}/\eta))(1-(1-1/k)^k)|Y|+(1-1/k)^k|Y|-O(k^4\e/\theta)\\
&\geq (1-O(\xi+k^4\e^{1/2}/\eta))|Y|,
\end{split}
\end{equation*}
where we used Lemma~\ref{lm:bounds-on-set-sizes}, {\bf (2)}, in the third transition and Lemma~\ref{lm:bounds-on-set-sizes}, {\bf (1)}, in the forth transition.

 Now recall that $G'$ contains every edge of $M$ with probability at least $1-\xi$, and these events are negatively associated for different edges, since for every $(y, x)\in M\cap H^{\u_{i+1}}$ one has
$(y, x)\in G'$ if and only if $b_y^{\u_{i+1}}=1$, and $b_y^{\u_{i+1}}$ are negatively associated for different $y$ by construction. We thus get by an application of Chernoff bounds that
$$
\prob[|M\cap G'|<(1-2\xi)|M|]<e^{-\Omega(\xi^2 |M|)}<e^{-\Omega(\xi^2 m^{4m})}.
$$
We now define the event $\E_{large-matching}$ to be the intersection of $\E_{balanced-degrees}$ with the success events for $i=0,\ldots, k-1$ above, getting that $\prob[\E_{large-matching}]\geq 1-97/100$ by a union bound.



\end{proof}

\subsection{Bounding performance of a small space algorithm}
\newcommand{\G}{\mathcal{G}}

\newcommand{\pe}{P_e}
\newcommand{\Xc}{{\mathcal X}}

By Yao's minimax principle it is sufficient to upper bound the performance of a deterministic small space algorithm 
that succeeds with probability at least $1/2$.
To do that, we bound the size of the matching that a small space algorithm can output at the end of the stream. Let $M_{ALG}$ denote the matching that the algorithm outputs.
We first upper bound the approximation ratio that the algorithm obtains in terms of the number of edges in $E(H^{\u_{i+1}}_i)\cap M_{ALG}$, for $i=0,1,\ldots, k-1$.

\begin{lemma}\duallabel{lm:match-upper}
For every integer $k\geq 2$, $\e \in (0, 1)$ such that $1/\e^{1/2}$ is an integer, if $W/w$ is an integer multiple of $k/(\e\cdot \theta)$ (see~\eqref{eq:shifts-range}), the following conditions hold for sufficiently large $m$.

If the graph $G'$ is generated as per Definition~\dualref{def:g-prime}, and $M_{ALG}$ is any matching in $G'$, then $|M_{ALG}|\leq (1-1/k)^k |Y|+\sum_{i=0}^{k-1} |E(H^{\u_{i+1}}_i)\cap M_{ALG}|+O(k^3 \e^{1/2}/\theta) \cdot |Y|$.
\end{lemma}
\begin{proof} Consider the cut $(A, B)$, where $A=\left(T\setminus (T_k\cup Z)\right)\cup \bigcup_{i=0}^{k-1} (S_i\setminus S_i^*)$ 
and $B=T_k\cup S_k\cup \bigcup_{i=0}^{k-1} S_i^*\cup Z$. Recall that the sets $T_i, S_i$ are defined in~\eqref{eq:ti-def}, $S_i^j$ is defined in~\eqref{eq:siu-def} and $Z$ is defined in~\eqref{eq:z-def}.

By the maxflow/mincut theorem, the size of the matching output by the algorithm is bounded by 
$|A\cap S|+|B\cap T|+|((A\cap T)\times (B\cap S))\cap M_{ALG}|$.

By Claim~\ref{cl:small-cut} the subgraph $M_{ALG}\cap (T_i\times S_i)$ induced by $(T_i\setminus (T_k\cup Z))\cup S_i^*$ only contains the edges of $H^{\u_{i+1}}_i$ for every $i=0,\ldots, k-1$.  Thus, 
$$
|((A\cap T)\times (B\cap S))\cap M_{ALG}|\leq  \sum_{i=0}^{k-1} |E(H^{\u_{i+1}}_i)\cap M_{ALG}|
$$  and we get
\begin{equation}\label{eq:e-start-ub}
\begin{split}
|M_{ALG}|&\leq \left|\bigcup_{i=0}^{k-1} (S_i\setminus S_i^*)\right|+|T_k|+|Z|+\sum_{i=0}^{k-1} |E(H^{\u_{i+1}}_i)\cap M_{ALG}|\\
&=\left|\bigcup_{i=0}^{k-1} S_i\right|+|T_k|-\left|\bigcup_{i=0}^{k-1} S_i^*\right|+|Z|+\sum_{i=0}^{k-1} |E(H^{\u_{i+1}}_i)\cap M_{ALG}|\\
\end{split}
\end{equation}

Furthermore, again by Claim~\ref{cl:small-cut} one has $|Z|\leq 2k^2(6\e^{1/2}/\theta+4/m)|Y|$. By Lemma~\ref{lm:bounds-on-set-sizes}, {\bf (1)} one has 
$|T_i|=(1-1/k)^i |Y|+\Delta_i$, $|\Delta_i|=O(k^3 \e/\theta)\cdot |Y|$ for every $i\in \{0, 1, 2, \ldots, k\}$.
By Lemma~\ref{lm:bounds-on-set-sizes}, {\bf (2)} one has  $|S_i|=\frac1{k}((1-1/k)^i |Y|+\Delta_i$, $|\Delta_i|= O(k^3 \e/\theta)\cdot |Y|$ for every $i\in \{0, 1, 2,\ldots, k\}$ and by Lemma~\ref{lm:bounds-on-set-sizes}, {\bf (3)} one has $|S_i^*|=\frac1{k}(1-1/k)^k |Y|+\Delta_i$, $|\Delta_i|=O(k^3 \e/\theta)\cdot |Y|$ for every $i\in \{0, 1, 2,\ldots, k\}$. Substituting these bounds into~\eqref{eq:e-start-ub}, we get
\begin{equation*}
\begin{split}
|M_{ALG}|&\leq \sum_{i=0}^{k-1} \frac1{k}(1-1/k)^i |Y|-\sum_{i=0}^{k-1}\frac1{k} (1-1/k)^k |Y|\\
&+(1-1/k)^k\cdot |Y|+\sum_{i=0}^{k-1} |E(H^{\u_{i+1}}_i)\cap M_{ALG}|+O(k^3 \e^{1/2}/\theta )\cdot |Y|.
\end{split}
\end{equation*}
We now use the fact that $\sum_{i=0}^{k-1} \frac1{k}(1-1/k)^i |Y|=(1-(1-1/k)^k)|Y|$ in the upper bound above to get
\begin{equation*}
\begin{split}
|M_{ALG}|&\leq (1-(1-1/k)^k)|Y|+\sum_{i=0}^{k-1} |E(H^{\u_{i+1}}_i)\cap M_{ALG}|+O(k^3 \e^{1/2}/\theta )\cdot |Y|,\\
\end{split}
\end{equation*}
as required. 
\end{proof}

\renewcommand{\b}{\mathbf{b}}
\renewcommand{\B}{\mathbf{B}}

We will use
\begin{lemma} \emph{(Data Processing Inequality)} \label{thm:dpi}
    For any random variables $(X,Y,Z)$ such that $X \to Y \to Z$ forms a Markov chain, we have $I(X;Z) \le I(X;Y)$.
\end{lemma}

\begin{lemma}\label{lm:space-lb}
For every integer $k\geq 2$, if $\e<\eta$, the following conditions hold for sufficiently large $m$. 

Let  $M_{ALG}$ denote the subset of edges of $G'$ output by a space $s$ streaming algorithm after a single pass over the edges of $G'$ presented in the order defined above. If 
$$
\left| M_{ALG}\cap  \bigcup_{i=0}^{k-1} E(H^{\u_{i+1}}_i)\right|>c n
$$ 
with probability more than $1/2$ over the randomness used to generate the graph $G'$, then $s=\Omega_{c, k}(n d)$.
\end{lemma}
\begin{proof}
Recall that we define $M_{ALG}$ to be the empty set if the matching output by ALG contains edges that are not in $G$, and that
\begin{equation}\label{eq:key-edges-reported}
\begin{split}
\prob\left[\left| M_{ALG}\cap  \bigcup_{i=0}^{k-1} E(H^{\u_{i+1}}_i)\right|>c n\right]&>1/2,
\end{split}
\end{equation}
where the probability is over the choice of $\u_{1:k}$ (the vectors defining the host graph $G$) and $\B_{0:k-1}$ (the random variables used to subsample the host graph $G$ to generate $G'$), subsampling $X$ and the shifts $\{U(\w)\}$. Define, for $\v \in \F_{i+1}$, 
$$
M_{ALG}^\v=M_{ALG}\cap E(H_i^\v).
$$
Note that $M_{ALG}^\v$ depends on $i$, since $\v$ uniquely determines $i$ (since it belongs to $\F_{i+1}$ and no other $\F_j, j=0,1,\ldots, k-1$). Also define
$$
\E_{many-edges}(i):=\left\{|M_{ALG}^\v|>cn/(6k)\right\}.
$$ 

We have, using~\eqref{eq:key-edges-reported}, that there exists an index $0\leq i\leq k-1$ such that 
$$
\prob_{\u_{1:k}, \B_{0:k-1}, \mathbf{X}, \{\mathbf{U}(\w)\}}[\E_{many-edges}(i)]\geq 1/(6k).
$$ Fix one such index $i$ in what follows. We have by an averaging argument applied to~\eqref{eq:key-edges-reported} (using Claim~\ref{cl:induced} to conclude that the edge sets of $H_i^\w$ are disjoint for distinct $\w\in \F_{i+1}$) that 
there exists a fixing $F=(u_{1:i}, B_{0:i-1}, X, \{U(\w)\})$ of $\u_{1:i}, \B_{0:i-1}, \mathbf{X}, \{\mathbf{U}(\w)\}$ such that 
$$
\prob_{\u_{i+1:k-1}, \B_{i:k-1}}[\E_{many-edges}(t)| F]\geq 1/(6k).
$$
For every $\v\in \F_{i+1}$ define 
$$
\E_{many-edges}(i, \v):=\E_{many-edges}(i)\wedge \{\u_{i+1}=\v\}.
$$
We have
\begin{equation}\label{eq:many-edges-v}
\begin{split}
1/(6k)\leq &\prob_{\u_{i+1:k-1}, \B_{i:k-1}}[\E_{many-edges}(i)| F]\leq \expect_{\u_{i+1}}[\prob_{\u_{i+2:k}, \B_{i:k-1}}[\E_{many-edges}(i)| F]]\\
&=\frac1{|\F_{i+1}|}\sum_{\v\in \F_{i+1}}\prob_{\u_{i+2:k}, \B_{i:k-1}}[\E_{many-edges}(i, \v)| F].\\
\end{split}
\end{equation}

Let $\Pi$ denote the state of the algorithm after processing 
$$
\left(\bigcup_{j=0}^{i-1} G'(u_{1:j}; B_{0:j})\right)\cup G'(u_{1:i}; B_{0:i-1}, \B_i),
$$
where $G'(u_{1:j}; B_{0:j})$ is given by Definition~\ref{def:g-prime}. We now lower bound $I(\Pi; \B_i)$ (which then gives a lower bound on the entropy of $\Pi$, and therefore on the space $s$). First note that 
$$
\B_i=\{b_i^\v\}_{\v \in \F_{i+1}}\to \Pi \to M_{ALG}
$$ 
forms  a Markov chain, and thus by the data processing inequality (Lemma~\ref{thm:dpi})  we have
\begin{equation}\label{eq:sigma-t-vs-sigma-i}  
I(\Pi; \B_i) \ge I(M_{ALG}; \B_i).
\end{equation}
It thus suffices to lower bound $I(M_{ALG}; \B_i)=H(\B_i)-H(\B_i| M_{ALG})$. We upper bound the second term. First let $E$ denote the indicator random variable of $\E_{many-edges}(i)$ conditioned on $\{\u_{1:i}=u_{1:i}\text{~and~}\B_{0:i-1}=B_{0:i-1}\}$ (note that by choice of the index $i$ we have $\expect_{\u_{i+1:k-1}, \B_{i:k-1}}[E]\geq 1/(6k)$.  Further, for every $\v\in \F_{i+1}$ let $E^\v$  denote the indicator random variable of $\E_{many-edges}(i, \v)$ conditioned on $\{\u_{1:i}=u_{1:i}, \B_{0:i-1}=B_{0:i-1}\}$.  Note that by~\eqref{eq:many-edges-v} we have 
\begin{equation}\duallabel{eq:prob-lb}
\expect_{\v\sim UNIF(\F_{i+1})} [E^\v]\geq 1/(6k). 
\end{equation}

\begin{equation}\label{eq:ent-bi-ub}
\begin{split}
H(\B_i| \Pi)&\leq \sum_{\v \in \F_{i+1}} H(b^\v| \Pi)\text{~~~~~~~~~~~~~~~~~~~~~~~(by subadditivity of entropy)}\\
&\leq \sum_{\v \in \F_{i+1}} H(b^\v, E^\v| \Pi)\\
&= \sum_{\v \in \F_{i+1}} (H(E^\v)+H(b^\v| \Pi, E^\v))\\
&\leq \sum_{\v \in \F_{i+1}} (1+H(b^\v| \Pi, E^\v))\text{~~~~~~~(since $E^\v\in \bool$)}\\
&=d+\sum_{\v \in \F_{i+1}} H(b^\v| \Pi, E^\v=1)\cdot \prob[E^\v=1]+\sum_{\v\in \F_{i+1}}  H(b^\v| \Pi, E^\v=0)\cdot \prob[E^\v=0]\\
\end{split}
\end{equation}

We now bound the terms on the rhs. First, since for every $\v$ one has that $b^\v$ has a fixed number of nonzeros in uniformly random positions by definition,
\begin{equation}
H(b^\v| \Pi, E^\v=0)\leq H(b^\v).
\end{equation}

\newcommand{\wh}[1]{\widehat{#1}}
\renewcommand{\i}{\mathbf{i}}

We now bound the second sum on the last line of~\eqref{eq:ent-bi-ub}. First recall that if $E^\v=1$, then $|M_{ALG}^\v|>cn/(6k)$, and $M_{ALG}^\v$ is a subset of the edges of $G'(u_{1:i-1}, \v; B_{1:i})$. Let $b:=B_i$. Recall that for every vertex $y\in T_i\setminus T_i^\v$ we include edges incident on it in $H^\v_i$ if $b_y^\v=1$ and do not otherwise.  We thus have, for every $y\in T_i\setminus T_i^\v$ such that $\delta(y)\cap M_{ALG}^\v \neq \emptyset$  that $b_y^\v=1$. Define for $\v \in \F_{i+1}$
$$
\gamma^\v:=\frac{|(T_i\setminus T_i^\v)\cap M_{ALG}^\v|}{|T_i\setminus T_i^\v|},
$$
and note that whenever $|M_{ALG}^\v|>cn/(6k)$, we get using Lemma~\ref{lm:bounds-on-set-sizes}, {\bf (1)} and {\bf (4)}, as well as the fact that $\e<1/10$ by our choice of parameters (since $\e\leq (c/k)^C$ for a sufficiently large constant $C\geq 1$), 
\begin{equation}\duallabel{eq:gamma-v-lb}
\gamma^\v=\frac{cn/(6k)}{|T_i\setminus T_i^\v|}\geq c/30.
\end{equation}
We may assume that $\gamma^\v\leq 1/3$ (remove some edges from $M^\v_{ALG}$ otherwise). We have
\begin{equation}
\begin{split}
H(b^\v| M_{ALG}^\v, E^\v=1)&\leq \log_2 {|T_i\setminus T_i^\v|-|M_{ALG}^\v| \choose \lceil (1-\xi)|T_i\setminus T_i^\v|\rceil-|M_{ALG}^\v|}\\
&\leq \log_2 {(1-\gamma^\v)|T_i\setminus T_i^\v| \choose  (1-\xi-\gamma^\v)|T_i\setminus T_i^\v|},
\end{split}
\end{equation}
where we used the assumption that $\gamma^\v\leq 1/3$ and the fact that $\xi<1/3$ by our setting of parameters. We thus get
$$
H(b^\v| M_{ALG}^\v, E^\v=1)\leq (1-\gamma^\v) |T_i\setminus T_i^\v| H_2\left(1-\frac{\xi}{1-\gamma^\v}\right),
$$
since
\begin{equation*}
\begin{split}
\log_2 { (1-\gamma^\v)|T_i\setminus T_i^\v| \choose (1-\xi-\gamma^\v) |T_i\setminus T_i^\v|}&=\log_2 { (1-\gamma^\v) |T_i\setminus T_i^\v| \choose (1-\frac{\xi}{1-\gamma^\v}) (1-\gamma^\v) |T_i\setminus T_i^\v|}\\
&\leq (1-\gamma^\v) |T_i\setminus T_i^\v| H_2\left(1-\frac{\xi}{1-\gamma^\v}\right),
\end{split}
\end{equation*}
where the last transition is by subadditivity of entropy. 
At this point we also note that 
\begin{equation*}
\begin{split}
(1-\gamma^\v) H_2\left(1-\frac{\xi}{1-\gamma^\v}\right)&=\xi\log_2 (1/\xi)+\xi \ln 2-\xi \log\frac1{1-\gamma^\v}+O(\xi^2)\\
&\leq H_2(1-\xi)-\xi \log\frac1{1-\gamma^\v}+O(\xi^2)\\
&\leq H_2(1-\xi)-\Omega(\xi\cdot c)+O(\xi^2)\text{~~~~~~~(by~\dualeqref{eq:gamma-v-lb} and Claim~\dualref{cl:monotonicity-entropy})}.
\end{split}
\end{equation*}
since $H_2(1-\xi)=\xi\log_2 (1/\xi)+\xi \ln 2+O(\xi^2)$ and $\xi$ is smaller than a constant.  Putting the above bounds together, and noting that by subadditivity of entropy
$$
H(b^\v)\leq |T_i\setminus T_i^\v| H_2\left(1-\xi\right)=|T_i\setminus T_i^\v| \cdot (\xi\log_2 (1/\xi)+\xi \ln 2+O(\xi^2)),
$$
we get, since $\xi$ is smaller than $c$ by a large constant factor by our choice of $\xi$, that
 \begin{equation*}
\begin{split}
H(b^\v| M_{ALG}, E^\v=1)&\leq H(b^\v)-\Omega(c\cdot \xi)\cdot |T_i\setminus T_i^\v|\\
&\leq H(b^\v)-\Omega(c\cdot \xi/k)\cdot |T|
\end{split}
\end{equation*}

Using this upper bound in ~\eqref{eq:ent-bi-ub}, we get
\begin{equation*}
\begin{split}
H(\B_i| \Pi)&\leq d+\sum_{\v \in \F_{i+1}} H(b^\v| \Pi, E^\v=1)\cdot \prob[E^\v]+\sum_{\v\in \F_{i+1}}  H(b^\v)\cdot (1-\prob[E^\v])\\
&\leq d+\sum_{\v \in \F_{i+1}} (H(b^\v)-\Omega(c\cdot \xi/k)\cdot |T|)\cdot \prob[E^\v]+\sum_{\v\in \F_{i+1}}  H(b^\v)\cdot (1-\prob[E^\v])\\
&\leq d+\sum_{\v \in \F_{i+1}} H(b^\v)-\Omega(c\cdot \xi/k)\cdot |T| \cdot \sum_{\v\in \F_{i+1}}  \prob[E^\v]\\
&=H(\B_i)-\Omega(c\cdot \xi/k)\cdot |T|\cdot  |\F_{i+1}| \cdot \expect_{\v\sim UNIF(\F_{i+1})} [E^\v]+d\\
&\leq H(\B_i) - \Omega(c\cdot \xi/k^2)\cdot |T| \cdot |\F_{i+1}| +d\text{~~~~~~~~~~~~~~~~~~~(by~\dualeqref{eq:prob-lb})}\\
&\leq H(\B_i)- \Omega(c\cdot \xi/k^2)\cdot |T| \cdot |\F_{i+1}|\\
&\leq H(\B_i)- \Omega_k(c)\cdot dn.\\
\end{split}
\end{equation*}
Using this bound in~\eqref{eq:sigma-t-vs-sigma-i}, we get $I(\B_i: \Pi)\geq \Omega_k(c)\cdot dn$, and therefore
$$
s\geq H(\Pi)\geq I(\B_i; \Pi) \ge \Omega_{c, k}(n d),
$$
as required.
\end{proof}

\begin{claim}\duallabel{cl:monotonicity-entropy}
For every $\xi>0$ the function $(1-\gamma)H_2(1-\frac{\xi}{1-\gamma})$ is decreasing in $\gamma$ for all $\gamma\in (0, 1-\xi)$.
\end{claim}
\begin{proof}
We have
\begin{equation*}
\begin{split}
(1-\gamma)H_2\left(1-\frac{\xi}{1-\gamma}\right)&=\frac1{\ln 2}\cdot(1-\gamma)\left[\left(1-\frac{\xi}{1-\gamma}\right)\ln \frac1{1-\frac{\xi}{1-\gamma}}+\frac{\xi}{1-\gamma}\ln \frac{1-\gamma}{\xi}\right]\\
&=\frac1{\ln 2}\cdot\left[(1-\gamma-\xi)\ln \frac{1-\gamma}{1-\gamma-\xi}+\xi\ln \frac{1-\gamma}{\xi}\right]\\
&=\frac1{\ln 2}\cdot\left[(1-\gamma-\xi)\ln \left(1+\frac{\xi}{1-\gamma-\xi}\right)+\xi\ln \frac{1-\gamma}{\xi}\right]\\
\end{split}
\end{equation*}

Since $\xi\ln \frac{1-\gamma}{\xi}$ is decreasing in $\gamma\in (0, 1)$, it suffices to show that $(1-\gamma-\xi)\ln \left(1+\frac{\xi}{1-\gamma-\xi}\right)$ is decreasing in $\gamma$ for $\gamma\in (0, 1-\xi)$.
Letting $x=1-\gamma-\xi$, it suffices to show that $x \ln (1+\frac{\xi}{x})$ is increasing in $x$ for $x\in (0, 1-\xi)$. Rescaling $x$ by $\xi$, it suffices to show that $x \ln (1+\frac1{x})$ is increasing in $x$ for all $x>0$.
The derivative with respect to $x$ is $\ln (1+\frac1{x})-\frac1{x+1}$, which approaches $0$ as $x\to \infty$. The derivative of this function is  $-\frac{1}{x(x+1)^2}$, which is negative for all $x>0$, and thus $\ln (1+\frac1{x})-\frac1{x+1}>0$ for all $x>0$.
\end{proof}

We can now give

\begin{proofof}{Theorem~\dualref{thm:main}}
Since ALG provides a better than $(1-1/e+c)$-approximation for some constant $c>0$ by assumption, there exists integer $k$  such that 
$1-(1-1/k)^k+c/2\leq 1-1/e+c$ (we assume that $2/c$ is an integer, which can be ensured by reducing $c$ by at most a factor of $2$). 

\paragraph{Setting parameters.} Let $G'$ be generated as per Definition~\ref{def:g-prime} with parameters selected as follows. First let $\xi=\eta=(c/k)^A$ for a sufficiently large integer $A>1$ (recall that $\xi$ is the rate at which we subsample edges of $G$ to obtain $G'$). Then let $\e=(\eta/k)^{2B}$ for a sufficiently large integer  $B>1$ (note that $1/\e^{1/2}$ is an integer). Finally let $m$ be an integer multiple of $1/\e$, let $w=\e m$ and let $W=w\cdot k/(\e\cdot \theta)$, where $\theta=\eta$.

We have by Claim~\ref{claim:large-matching} that the graph $G'$ (as per Definition~\ref{def:g-prime})  contains matching of size at least $(1-O(\xi+k^3\e/\eta))|S|$ with probability at least $97/100$. We also note that
\begin{equation*}
\begin{split}
|S|&=\sum_{i=0}^{k-1} |S_i|+|S|=\sum_{i=0}^{k-1}(1-1/k)^i |Y|/k+(1-1/k)^k|Y|\pm O(k^4 \e/\theta)|Y|\\
&=(1\pm O(k^4 \e/\theta))|Y|\\
&=(1\pm c/100)|Y|\\
\end{split}
\end{equation*}
by Lemma~\ref{lm:bounds-on-set-sizes}, {\bf (1)} and {\bf (2)}, since $O(k^4 \e/\theta)=O(k^4 \e/\eta)<c/100$ when $A$ and $B$ above are larger than an absolute constant (as we verify below in~\eqref{eq:283hg832hg832g}).  Thus, the algorithm must output a matching of size at least 
\begin{equation}\label{eq:msize-lb}
(1-(1-1/k)^k+c/2)(1-c/100)^2|Y| \geq (1-(1-1/k)^k+c/4)|Y|
\end{equation}
with probability at least $1/2$. The inequality above uses the fact that
\begin{equation}\label{eq:283hg832hg832g}
\begin{split}
O(\xi+ k^4 \e/\eta)&=O((c/k)^A+k^4 (\eta/k)^{2B}/\eta)\\
&=O((c/k)^A+(\eta/k)^{2B-4})\\
&=O((c/k)^A+(c/k)^{2B-4})\\
&<c/100
\end{split}
\end{equation}
as long as $A$ and $B$ are larger than an absolute constant.

Now let $M_{ALG}$ be the matching output by a single pass streaming algorithm ALG on the graph $G'$ presented in the order prescribed by our input distribution. For convenience we define $M_{ALG}$ to be the empty set if ALG outputs an edge that was not in $G'$. By Lemma~\ref{lm:match-upper} we have 
\begin{equation}\label{eq:9023823gfsdASD}
|M_{ALG}|\leq (1-1/k)^k |Y|+\sum_{i=0}^{k-1} |E(H^{\u_{i+1}}_i)\cap M_{ALG}|+O(k^3 \e^{1/2}/\theta )|Y|.
\end{equation}
Using this together with~\eqref{eq:9023823gfsdASD}, as well as the fact that $O(k^3 \e^{1/2}/\theta )<c/100$ as long as $A$ and $B$ above are larger than an absolute constant, we get
$$
\prob\left[\sum_{i=0}^{k-1} |E(H^{\u_{i+1}}_i)\cap M_{ALG}|>(c/8)\cdot n\right]> 1/2
$$
by assumption of the theorem. We  now have by Lemma~\ref{lm:space-lb} that the space complexity $s$ of the algorithm satisfies $s=\Omega_{c, k}(nd)$. Since $n=m^{4m}$ and $d=2^{\Omega(\e^2 m)}=2^{\Omega(m)}$ for any fixed $\e$ by Lemma~\ref{lem:code}, we get that $nd=n^{1+\Omega(1/\log\log n)}$, as required.

\end{proofof}

\section{Multipass approximation for matchings}\duallabel{sec:multipass}

In this section we present our algorithm for approximating matchings in multiple passes in the vertex arrival setting, proving Theorem~\ref{thm:main-ubound}.

\subsection{The algorithm}
Let $G=(P, Q, E)$ denote a bipartite graph. We assume that vertices in $P$ arrive in the stream together with all their edges. At each step the algorithm maintains a fractional matching $\{f_e\}_{e\in E}$, where the capacity of each vertex in $Q$ is infinite and the capacity of each vertex $u\in P$ is equal to the number of times it has appeared in so far (i.e. always between $1$ and $k$). The capacity of an edge $e=(u, v), u\in P, v\in Q$ is equal to the capacity of $u$. For a vertex $u\in P$ we write $\delta(u)$ to denote the set of neighbors of $u$ in $G$. 

The fractional matching $f_e$ is initialized at zero, and upon arrival of a vertex $u\in P$ the algorithm continuously assigns a single unit of water to its least loaded neighbors. At the end of the $k$ passes we obtain a bona-fide matching by reducing the load of vertices on the $Q$ side that were assigned more than $k$ units of fractional mass down to $k$ units (simply reduce the load on neighboring edges). Scaling the resulting allocation by $1/k$ gives a feasible fractional matching, which can then be rounded to an integral matching using standard techniques in nearly linear time in the support size of the matching. The algorithm for processing a vertex $u\in P$ upon arrival is summarized in Algorithm~\ref{alg:main} below.

\begin{algorithm}[H]\duallabel{alg:main}
\caption{\textsc{ProcessVertex}($G$, $u$, $\delta(u)$)}
\begin{algorithmic}[1]
\STATE \textsc{WaterFilling}($G', u, \delta(u)$) {~~~~~~~~~~~~~~~~~~~~~~~~~~~~~$\rhd$ Assign one unit of water to least loaded neighbors}
\STATE \textsc{RemoveCycles}($G', f$). 
\end{algorithmic}
\end{algorithm}

 The function \textsc{WaterFilling}($G', u, \delta(u)$) increases the load of the least loaded neighbors of $u$ simultaneously (with other neighbors joining if the load reaches their level) until one unit of water in total is dispensed out of $u$. Here the support of the fractional matching $\{f_e\}_{e\in E}$ maintained by the algorithm is denoted by $G'$. The function \textsc{RemoveCycles}($G', f$) reroutes flow among cycles that could have emerged in the process, ensuring that the flow is supported on at most $|P|+|Q|-1$ edges.  

\paragraph{Efficient implementation.} First note that \textsc{WaterFilling}$(G', u ,\delta(u))$ can be implemented to run in time $O(|\delta(u)| \log n)$. Indeed, we need to find $\theta$ such that 
$$
\sum_{e=(u, v)\in \delta(u)} \max\{\theta-c_v, 0\}=1,
$$
where $c_v$ is the load of $v\in Q$ in the current fractional allocation. The function on the lhs is non-decreasing for all $\theta\geq \min_{e=(u, v)\in \delta(u)} c_v$, so the root can be found to within polynomial precision in $O(\log n)$ time using binary search.

Similarly, the function \textsc{RemoveCycles} can be implemented to run in nearly linear time at the expense of a loss of an $O(\log n)$ factor in space complexity. To achieve this we first buffer incoming vertices until the number of edges received is $\Theta(n)$ and only perform cycle removal after such a batch has been received. Let $f:E\to \mathbb{R}$ denote the allocation corresponding to one such batch. Write $f=\sum_{i=0}^{O(\log n)} 2^{-i} f_i$, where $f_i: E\to \{0, 1\}$ encode the sets of edges whose $i$-th bit in the allocation $f$ is set to $1$. Denote the corresponding edge sets by $E_i\subseteq E$, $i=0,1,\ldots, O(\log n)$.  Now for every $E_i$ run DFS to find cycles, and note that every time a cycle in $E_i$ is found, we zero out half of the edges on the cycle while rerouting flow in $f_i$.  Thus, the amount of work on $E_i$ is indeed linear in its size, resulting in a nearly linear runtime bound overall.

We now turn to analyzing the approximation ratio. We first give a sketch of the proof under additional assumptions on the graph $G$, and then proceed to give the relevant definitions and the complete argument.

\subsection{Analysis in a simple case (when $G$ has a perfect matching)} 

In this section we assume that $G=(P, Q, E)$ has a perfect matching $M$ in order to illustrate the main idea behind our analysis.

We start with 
\begin{definition}[Level sets $b^k$]\duallabel{def:levelsets}
For each $k\geq 1$ and all $x\geq 0$ denote by $b^k(x)$ the number of vertices in $Q$ that have load {\em at least $x$} after $k$ passes in Algorithm~\ref{alg:main}. 
\end{definition}

Note that  $b^k(x)$ is non-increasing in $x$ and $b^k(x)-b^{k-1}(x)\geq 0$ for all $x$. 
Furthermore, we have 
\begin{equation}\duallabel{eq:int-1}
b^k(0)=|M| \text{~~and~~}\int_0^\infty b^k(x)dx=k|M|.
\end{equation}
The first equality holds since  $G$ is assumed to contain a perfect matching, and the second  holds since every vertex $u\in P$ contributed $1$ unit of water, amounting to $|M|=|P|$ amount of water overall, and \dualeqref{eq:int-1} calculates the sum of loads on all $v\in Q$. 
Furthermore, note that the size of the matching constructed by the algorithm after $k$ passes is exactly equal to 
\begin{equation}\duallabel{eq:int-2}
\frac{1}{k}\int_0^k b^k(x)dx,
\end{equation}
since every vertex $v\in Q$ with load $x$ contributes $\frac1{k}\cdot \min\{k, x\}$ to the matching. Hence the approximation ratio after $k$ passes is at least 
\begin{equation}\duallabel{eq:int-3}
1-\frac1{|M|}\cdot \frac1{k}\int_k^\infty b^k(x)dx,
\end{equation}
where we used \dualeqref{eq:int-1} to convert \dualeqref{eq:int-2} into \dualeqref{eq:int-3}.  Thus, it is sufficient to lower bound $\int_0^k b^k(x)dx$ in order to analyze the approximation ratio, and we turn to bounding this quantity.

First consider the case $k=1$. 
For each such vertex $u$ consider its match $M(u)$. Since $u$ ended up at level at least $x$ after the first pass, its match $M(u)$ must be at level at least $x$ after the first pass as well, as levels are non-decreasing. Hence, we have 
\begin{equation}\duallabel{eq:int-4}
\begin{split}
b^1(x)&=\left|\{u\in P: u\text{~is at level~}\geq x\text{~after first pass}\}\right|\\
&\geq \left|\{u\in P: u\text{~allocated some water at level~}\geq x\text{~during first pass}\}\right|\\
&\geq \int_x^\infty b^1(s)ds
\end{split}
\end{equation}
for all $x\geq 0$. This, however, together with \dualeqref{eq:int-1} can be shown to imply that $\int_x^\infty b^1(s)ds\leq |M|\cdot e^{-x}$ for all $x$. We thus get using \dualeqref{eq:int-3} that the approximation ratio after one pass is at least $1-1/e$.

Now suppose that $k>1$ and consider vertices $v\in Q$ that are at level at least $x$ after $k$-th pass, but were at a lower level after $(k-1)$-th pass. There are exactly $b^k(x)-b^{k-1}(x)$ such vertices. Since these vertices $u$ were at level at least $x$ after $k$-th pass, their matches $M(u)$ must have also been at level at least $x$ after the $k$-th pass, implying similarly to the above that 
\begin{equation}\duallabel{eq:int-5}
b^k(x)\geq \int_x^{\infty} (b^k(s)-b^{k-1}(s))ds
\end{equation}
for all $x\geq 0$. The above equation implies that for all $k\geq 1$ 
\begin{equation}\duallabel{eq:int-6}
\int_x^\infty b^{k}(s)ds\leq |M| \cdot \int_x^\infty F^k(s)ds,
\end{equation}
where $1-F^k(x)$ is the cdf of the Gamma distribution with scale $1$ and shape $k$, i.e. 
$F^k(x)=\int_x^\infty e^{-s}s^{k-1}/(k-1)!ds$.
Using this in \dualeqref{eq:int-3} yields the desired bound on the approximation ratio, i.e. $1-e^{-k}k^{k-1}/k!$. 

\subsection{Analysis in a general case}\label{sec:mp-analysis}
The proof sketch we gave in the previous subsection works under the assumption that $G$ has a perfect matching. The general case is more involved. While the analysis above proceeds by showing that not too much mass will be in the tail $\int_k^\infty b^k(x)dx$, here we find it more convenient to show that substantial mass will be in the head of the distribution, i.e. bound $\int_{0}^k b^k(x)dx$ from below.  We extend the argument using a careful reweighting of vertices and scaling of levels guided by the structure of the {\em canonical decomposition} of $G$ introduced in \cite{gkk:streaming-soda12}, which we now define. 

Let $G=(P, Q, E)$ denote a bipartite graph. For a set $S\subseteq P$ we denote the set of neighbors of $S$ by $\Gamma(S)$. For a number $\alpha>0$ the graph $G$ is said to have vertex expansion at least $\alpha$ if $|\Gamma(S)|\geq \alpha |S|$ for all $S\subseteq P$. The canonical decomposition of $G$ is defined as follows:
\begin{definition}[Canonical decomposition]\duallabel{def:decomposition}
Let $G=(P, Q, E)$ denote a bipartite graph. A partition of $Q=\bigcup_{j\in \I} T_j, T_j\cap T_i=\emptyset, j\neq i$ and $P=\bigcup_{j\in \I} S_j, S_j\cap S_i=\emptyset, j\neq i$ together with numbers $\alpha_j>0$, where $\alpha_j\leq 1$ for $j\leq 0$ and $\alpha_j>1$ for $j>0$  is called a {\em canonical partition} if 
\begin{enumerate}
\item for all $i$ one has $\Gamma\left(\bigcup_{j\in \I, j\leq i} S_j\right)\subseteq \bigcup_{j\in \I, j\leq i} T_j$;
\item $|\Gamma(S)\cap T_j|\geq \alpha_j|S|$ for all $S\subseteq S_j$ for all $j\in \I$;
\item $|T_j|/|S_j|=\alpha_j$, for all $j\in \I$.
\end{enumerate}
Here $\I\subset \mathbb{Z}$ is a set of indices. 
\end{definition}
\newcommand{\fmatching}
{
\begin{center}
\tikzstyle{vertex}=[circle,fill=blue!100, minimum size=15pt,inner sep=1pt, shading=ball,ball color      = blue!100]
\tikzstyle{svertex}=[circle,fill=black!100, minimum size=5pt,inner sep=1pt]
\tikzstyle{evertex}=[circle,draw=none, minimum size=25pt,inner sep=1pt]
\tikzstyle{edge} = [draw,-, color=red!100, very  thick]
\tikzstyle{bedge} = [draw,-, color=green!100, very  thick]
\begin{tikzpicture}[scale=0.4, auto,swap]
   
     \draw[fill=none,opacity=1] (14,7) ellipse (1 and 1);
     \draw[fill=none,opacity=1] (14,1) ellipse (0.5 and 0.5);     
     \node[evertex](uiii) at (14, -1) {};
     \draw(uiii) node {$S_2$};     
     \node[evertex](viii) at (14, 9) {};
     \draw(viii) node {$T_2$};
     
      \node[evertex](w) at (12, -3) {};
     \draw(w) node {$\alpha_2=\frac{|T_2|}{|S_2|}$};

     \draw[fill=none,opacity=1] (18,7) ellipse (2 and 1);
     \draw[fill=none,opacity=1] (18,1) ellipse (1.5 and 1);          
     \node[evertex](uii) at (18, -1) {};
     \draw(uii) node {$S_1$};     
     \node[evertex](vii) at (18, 9) {};
     \draw(vii) node {$T_1$};
     
      \node[evertex](w) at (18, -3) {};
     \draw(w) node {$\alpha_1=\frac{|T_1|}{|S_1|}$};

     \draw[fill=none,opacity=1] (23,1) ellipse (2 and 1);
     \draw[fill=none,opacity=1] (23,7) ellipse (2 and 1);
     \node[evertex](ui) at (23, -1) {};
     \draw(ui) node {$S_0$};
     \node[evertex](vi) at (23, 9) {};
     \draw(vi) node {$T_0$};     
     
      \node[evertex](w) at (23, -3) {};
     \draw(w) node {$\alpha_0=\frac{|T_0|}{|S_0|}$};

     \draw (14, 1) -- (14.5, 7);
     \draw (14, 1) -- (14, 7);
     \draw (14, 1) -- (13.5, 7);     
     
     \draw (14, 1) -- (18, 7);     
     
     \draw (17.5, 1) -- (17, 7);
     \draw (17.5, 1) -- (18, 7);
     \draw (18.5, 1) -- (18, 7);     
     \draw (18.5, 1) -- (19, 7);          
     
    \draw (18.5, 1) -- (22, 7);     
    \draw (18.5, 1) -- (24, 7);

     \draw (22, 1) -- (22, 7);
     \draw (23, 1) -- (23, 7);
     \draw (24, 1) -- (24, 7);     
     
     \draw (23, 1) -- (32, 7);     

     \draw (27.5, 7) -- (27, 1);
     \draw (27.5, 7) -- (28, 1);
     \draw (28.5, 7) -- (28, 1);     
     \draw (28.5, 7) -- (29, 1);

     \draw (32, 7) -- (32.5, 1);
     \draw (32, 7) -- (32, 1);
     \draw (32, 7) -- (31.5, 1);


     \draw[fill=none,opacity=1] (28,1) ellipse (2 and 1);
     \draw[fill=none,opacity=1] (28,7) ellipse (1.5 and 1);          
     \node[evertex](uii) at (28, -1) {};
     \draw(uii) node {$S_{-1}$};     
     \node[evertex](vii) at (28, 9) {};
     \draw(vii) node {$T_{-1}$};
     
      \node[evertex](w) at (28, -3) {};
     \draw(w) node {$\alpha_{-1}=\frac{|T_{-1}|}{|S_{-1}|}$}; 


     \draw[fill=none,opacity=1] (32,1) ellipse (1 and 1);
     \draw[fill=none,opacity=1] (32,7) ellipse (0.5 and 0.5);     
     \node[evertex](uiii) at (32, -1) {};
     \draw(uiii) node {$S_{-2}$};     
     \node[evertex](viii) at (32, 9) {};
     \draw(viii) node {$T_{-2}$};
     
      \node[evertex](w) at (34, -3) {};
     \draw(w) node {$\alpha_{-2}=\frac{|T_{-2}|}{|S_{-2}|}$};

\end{tikzpicture}
\end{center}
}

\begin{figure}
\fmatching
\caption{Canonical decomposition of a bipartite graph. Note that edges from $S_i$ only go to $T_j$ with $j\leq i$ (property (1)).}
\duallabel{fig:cp}
\end{figure}

\if 0
\begin{remark}
For $k=1$, the analysis is inspired by the analysis of the round-robin algorithm in \cite{mpx06}.  We note that the difference in our case is that we essentially consider a fractional version of their process, and obtain significantly better bounds on the quality of approximation. In particular, the best approximation factor that follows from the result of \cite{mpx06} is $1/8$ {\em even after any $k$ passes}, while here we get the optimal $1-1/e$ factor for $k=1$, and an approximation of the form $1-O(1/k^{1/2})$ for all $k>0$.
\end{remark}
\fi

See Fig.~\dualref{fig:cp} for an illustration.  

\paragraph{Vertex capacities and canonical matching.} 
First,  define vertex capacities as follows. For $u\in P$ let $j$ be such that $u\in S_j$(see Fig.~\dualref{fig:cp}), and let $c(u):=\min\{1, \alpha_j\}$ . Similarly, for $v\in Q$ let $j$ be such that $v\in T_j$ (see Fig.~\dualref{fig:cp}) and let $c(v):=\min\{1, 1/\alpha_j\}$. 
We will also use
\begin{claimp}[Monotonicity of capacities]\duallabel{cl:monotonicity}
For every $i\leq j$ and every $v\in T_i, w\in T_j$ one has $c(v)\geq c(w)$. Similarly, for every $i\leq j$ and every $v\in S_i, w\in S_j$ one has $c(v)\leq c(w)$.
\end{claimp}
\begin{proof}
Follows by monotonicity of $\alpha_j$'s.
\end{proof}

\begin{definition}[Canonical matching]\duallabel{def:canonical-matching}
Let $M:E\to [0, 1]$ be a (possibly fractional) matching in $G$ such that  $\sum_{e\in \delta(u)} x_e=c(u)$ for all $u\in P=\bigcup_j S_j$ and $\sum_{e\in \delta(v)} x_e=c(v)$ for all $v\in Q=\bigcup_j T_j$. 
\end{definition}
Such a matching exists by properties (2) and (3) of the canonical decomposition. Furthermore, any such $M$ is a maximum matching in $G$, since $|M|=|C|$, where $C=\left(\cup_{j: \alpha_j\geq 1} S_j\right)\bigcup \left(\cup_{j:\alpha_j< 1} T_j\right)$ forms a vertex cover in $G$ by property (1) of the canonical decomposition. For every integer $j=1,\ldots, k$ and $e\in E$ we let $\wt{M}^j_e\in [0, 1]$ denote the load assigned by our algorithm in the $j$-th pass to edge $e$. Note that $\wt{M}$ does not necessarily form a matching, but for every $u\in P$ and every $j$ one has $\sum_{e=\delta(u)} \wt{M}^j(e)=1$, since every vertex on the $P$ side dispenses one unit of water in every pass. We note that
\begin{claimp}\duallabel{cl:capacities-sum}
For every graph $G$, if $(S_j, T_j)$ is the canonical decomposition (as per Definition~\dualref{def:decomposition}),  $M$ a canonical matching in $G$ (as per Definition~\dualref{def:canonical-matching}), and vertex capacities as defined above, then $\sum_{u\in P} c(u)=\sum_{v\in Q} c(v)=|M|$.
\end{claimp}


\paragraph{Shadow allocation and density function $\phi^k_v(x)$.}  We will use the concept of a {\em shadow allocation}, in which whenever $a$ units of water are added to a vertex $v\in Q$ in the original allocation, $a/c(v)$ units of water are added to $v$ in the shadow allocation.  Now whenever water from a vertex $u\in P$ is added to vertex $v\in Q$ at level $x$ during the $j$-th pass {\em in the shadow allocation}, we let $\phi^j_v(x):=c(u)$, where $\phi$ is the {\em density function}.

The following claim is crucial for our analysis: 

\begin{claimp}\duallabel{cl:phi-sum}
For every graph $G$, if $M$ is a maximum matching in $G$, vertex capacities $c$ and density function $\phi$ are defined as above, one has $\sum_{v\in Q} c(v) \int_0^{\infty} \phi^j_v(x)dx=|M|$
for all $j=1,\ldots, k$. 
\end{claimp}
\begin{proof}
\begin{equation*}
\begin{split}
\sum_{v\in Q} c(v) \int_0^{\infty} \phi^j_v(x)dx&=\sum_{v\in Q} c(v) \sum_{e=(u, v)\in \delta(v)} c(u)\cdot \wt{M}^j(e)/c(v)\\
&=\sum_{v\in Q} \sum_{e=(u, v)\in \delta(v)} c(u)\cdot \wt{M}^j(e)\\
&=\sum_{u\in P} c(u)\sum_{e=(u, v)\in \delta(v)} \wt{M}^j(e)\\
&=\sum_{u\in P} c(u)\\
&=|M|,\\
\end{split}
\end{equation*}
where the first equality is by definition of the shadow allocation, the fourth is by definition of $\wt{M}^j$ and the last is by Claim~\dualref{cl:capacities-sum}.
\end{proof}

\paragraph{Load of a vertex and level of an edge.}  The core of our analysis will consist of bounding the distribution of water levels among vertices in $Q$ in the shadow allocation, showing that there cannot be too many highly overloaded vertices. For a vertex $v\in Q$ let $l^k(v)$ denote the load of $v$ in the shadow allocation after the $k$-th pass. For an edge $e=(u, v)$ let $l^k(e)$ denote the load of $v$  in the shadow allocation after $u$ is processed in the $k$-th pass. The key property of $l^k(e)$ that we need is given by
\begin{lemma}\label{lm:shadow-mon}
For every $k\geq 1$, every $e=(u, v)$ such that $\wt{M}^k(e)>0$ and $f=(u, w)$ such that $M(f)>0$ one has $l^k(f)\geq l^k(e)$.
\end{lemma}
\begin{proof}
Denote the load of $v$ in the original (as opposed to shadow) allocation after $u$ is processed during the $k$-th pass by $x$, and denote the load of $w$ in the original (as opposed to shadow) allocation after $u$ is processed during the $k$-th pass by $y$. We have $y\geq x$ by the definition of the waterfilling algorithm. Also note that $l^k(e)=x/c(v)$ and $l^k(f)=y/c(w)$ by definition of the shadow allocation. By the properties of the canonical decomposition one has $v\in T_i, w\in T_j$ for some $i\leq j$, and hence $c(v)\geq c(w)$ by Claim~\dualref{cl:monotonicity}. We therefore have
$$
l^k(f)=y/c(w)\geq y/c(v)\geq x/c(v)=l^k(e),
$$
as required.
\end{proof}

\paragraph{Reweighted level set sizes $b^k$.} For every $x\geq 0$, integer $k\geq 1$ we let $b^k(x)$ denote the (weighted) number of vertices with load at least $x$ in the shadow allocation, defined as follows:
$$
b^k(x)=\sum_{v\in Q} c(v)\cdot {\mathbbm 1}_{l^k(v)\geq x}.
$$
Note that $b^k(0)=\sum_{v\in Q} c(v)=|M|$ for every $k$. We have
\begin{lemma}\label{lm:alg-lb}
Algorithm~\dualref{alg:main} constructs a matching of size at least
$\frac1{k}\int_0^k b^k(x)dx.$
\end{lemma}
\begin{proof}
For a vertex $v\in Q$ let $l_{org}^k(v)$ denote the water level of at $v$ in the original allocation after $k$ passes. Then $v$ contributes $\frac1{k}\min\{k, l_{org}(v)\}$ to the matching. 
At the same time $l^k(v)=l_{org}^k(v)/c(v)$, so 
\begin{equation*}
\begin{split}
\frac1{k}\int_0^k b^k(x)dx&=\frac1{k}\int_0^k \sum_{v\in Q} c(v)\cdot \mathbbm{1}_{l^k(v)\geq x}dx\\
&=\frac1{k}\sum_{v\in Q} c(v)\cdot \min\{k, l^k(v)\}\\
&=\frac1{k}\sum_{v\in Q} c(v)\cdot \min\{k, l^k_{org}(v)/c(v)\}\\
&=\frac1{k}\sum_{v\in Q} \min\{c(v)\cdot k, l^k_{org}(v)\}\\
&\leq \sum_{v\in Q} \frac1{k}\min\{k, l^k_{org}(v)\},\\
\end{split}
\end{equation*}
where we used the fact that $c(v)\leq 1$ for all $v$ in the last step. This completes the proof of the lemma.
\end{proof}

\paragraph{Bounding the evolution of $b^k(x)$.} In what follows we derive bounds on the reweighted level set sizes $b^k(x)$, which then allow us to lower bound $\frac1{k}\int_0^k b^k(x)dx$. We start with
\begin{lemma}\duallabel{lm:rec}
One has for all $x\geq 0$ and all $k\geq 1$
$$
b^k(x)\geq \int_x^\infty \sum_{v\in Q} c(v) \phi^k_v(s)ds.
$$
\end{lemma}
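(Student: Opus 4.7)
The plan is to lift the single-pass, perfect-matching argument to the canonical decomposition setting by exhibiting an explicit fractional assignment from ``water pieces above shadow level $x$ added in pass $k$'' to the vertices counted by $b^k(x)$. First I would unpack the right-hand side. Combining the definition of $\phi^k_v$ with the fact that a piece of original amount $f$ occupies shadow-width $f/w_t(v)$, one obtains
\[
W^k_x \;:=\; \int_x^\infty \sum_{v\in Q} w_t(v)\,\phi^k_v(s)\,ds \;=\; \sum_{u\in P} w_s(u)\,a^k_u,
\]
where $a^k_u\in[0,1]$ is the (original) amount of water that $u$ dispenses during pass $k$ and that lands at shadow level at least $x$.

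Next I would extract the structural consequence of \textsc{Water-Filling}. If some water piece from $u$ lands at $v$ at shadow level $\ge x$ during pass $k$, then $v$ must have been among the lowest-loaded neighbors of $u$ when $u$ arrived, so $v$'s original level at the end of the filling step equals the common fill level $\tau_u$, and $\tau_u = w_t(v)\cdot L^k_v\ge x\,w_t(v)$. By monotonicity of loads, every neighbor $v'$ of $u$ then satisfies $l^k(v')\ge\tau_u\ge x\,w_t(v)$ at the end of pass $k$.

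I would then route the mass via the canonical decomposition. Property (2), combined with the tight ratio $|T_j|/|S_j|=\alpha_j$, allows a straightforward Hall-theorem argument inside each block to produce a block-diagonal fractional matching $\mu\colon P\times Q\to[0,1]$, supported on $\bigcup_j(S_j\times T_j)$, with $\sum_{v'}\mu(u,v')=w_s(u)$ and $\sum_u\mu(u,v')=w_t(v')$. Property (1) forces any $v$ receiving water from $u\in S_j$ to lie in $T_{i^*}$ with $i^*\le j$; coupled with the canonical ordering of the $\alpha_j$'s (which one verifies satisfies $\alpha_{i^*}\le\alpha_j$ for $i^*\le j$), this gives $w_t(v)\ge w_t(v')$ for every canonical partner $v'\in T_j$. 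Hence
\[
\frac{l^k(v')}{w_t(v')}\;\ge\;\frac{x\,w_t(v)}{w_t(v')}\;\ge\;x,
\]
so every such $v'$ lies in $Q^k_x:=\{v'':l^k(v'')/w_t(v'')\ge x\}$, the set counted by $b^k(x)$.

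Setting $\gamma_{u,v'}:=a^k_u\,\mu(u,v')$ yields $\sum_{v'}\gamma_{u,v'}=w_s(u)\,a^k_u$ and $\sum_u\gamma_{u,v'}\le\sum_u\mu(u,v')=w_t(v')\le1$, with every $v'$ receiving positive mass lying in $Q^k_x$. Summing,
\[
W^k_x\;=\;\sum_{u,v'}\gamma_{u,v'}\;\le\;\sum_{v'\in Q^k_x}w_t(v')\;\le\;|Q^k_x|\;\le\;b^k(x),
\]
which is the claimed inequality. The main obstacle I expect is the Hall-theorem bookkeeping to obtain $\mu$ with exact block-diagonal support and the prescribed row and column sums, together with verifying the $\alpha$-monotonicity along the canonical chain; both properties follow from the careful setup of the canonical decomposition, after which the chain of inequalities above closes the proof.
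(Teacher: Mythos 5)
Your proposal is correct and follows essentially the same route as the paper's proof: rewrite the right-hand side as $\sum_u w_s(u)\,a^k_u$, use the water-filling invariant together with property (1) of the canonical decomposition and the monotonicity of the $\alpha_j$'s to show that a positive $a^k_u$ forces all of $u$'s block-$T_j$ partners into $Q^k_x$, and then route the mass through the block-diagonal fractional matching $M_j$ (your $\mu$ is exactly $w_t(v')\,M_j(u,v')$), using $\sum_u M_j(u,v')=1$ and $w_t(v')\le 1$ to close the counting. In fact your explicit $\gamma_{u,v'}$-bookkeeping is a somewhat cleaner rendering of the paper's per-$\Delta_{org}$-slab credit scheme. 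One small wart at the very end: given how $b^k$ is used in the subsequent lemmas (e.g.\ the matching-size bound $\frac1k\int_0^k b^k$), $b^k(x)$ must be the \emph{weighted} count $\sum_{v}w_t(v)\mathbf{1}[l^k(v)/w_t(v)\ge x]$, so your chain should terminate at $W^k_x\le\sum_{v'\in Q^k_x}w_t(v')=b^k(x)$; the extra steps ``$\le|Q^k_x|\le b^k(x)$'' go the wrong way under that reading (since $w_t\le1$ gives $b^k(x)\le|Q^k_x|$). This is harmless because the needed inequality is already established before those steps.
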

\begin{proof}
First note that
\begin{equation}\duallabel{eq:wgog2g2g23g}
\begin{split}
\int_x^\infty \sum_{v\in Q} c(v)\cdot  \phi^k_v(s)ds&= \sum_{v\in Q} c(v)\cdot  \sum_{\substack{e=(u, v)\in \delta(v)\\l^k(e)\geq x}} c(u)\cdot \wt{M}^k(e)/c(v)\\
&= \sum_{v\in Q}  \sum_{\substack{e=(u, v)\in \delta(v)\\l^k(e)\geq x}} c(u)\cdot \wt{M}^k(e)\\
&=\sum_{u\in P}  c(u) \sum_{\substack{e=(u, v)\in \delta(u)\\l^k(e)\geq x}} \wt{M}^k(e)
\end{split}
\end{equation}
by definition of the shadow allocation and density function $\phi$.  Recall that for an edge $e=(u, v)$ we let $l^k(e)$ denote the load of vertex $v$ right after $u$ arrives in the $k$-th pass. 
At the same time, 
\begin{equation}\duallabel{eq:923hg9hg23g3g}
\begin{split}
b^k(x)&=\sum_{v\in Q} c(v)\cdot {\mathbbm 1}_{l^k(v)\geq x}\\
&= \sum_{\substack{v\in Q\\l^k(v)\geq x}} \sum_{e=(u, v)\in E} M(e) \text{~~~~~~~(since $\sum_{e=(u, v)\in E} M(e)=c(v)$ for every $v$)}\\
&\geq \sum_{u\in P} \sum_{\substack{e=(u, v)\in E\\ l^k(e)\geq x}} M(e).\\
\end{split}
\end{equation}

Now recall that by Lemma~\ref{lm:shadow-mon} for every $u\in P$ if  $u$ dispensed some water at level at least $x$ {\em in the shadow allocation} during the $k$-th pass, i.e. if
$$
\sum_{e=(u, v)\in \delta(u):l^k(e)\geq x} \wt{M}^k(e)>0,
$$ 
then its canonical matches, namely vertices $v$ such that $M_{(u, v)}>0$, were at level at least $x$ in the shadow allocation after $u$ was processed during $k$-th pass. In particular, in that case we have
$$
\sum_{e=(u, v)\in E: l^k(e)\geq x} M(e)=c(u).
$$ 
Since $\sum_{e=(u, v)\in \delta(u):l^k(e)\geq x} \wt{M}^k(e)\leq 1$ always, we thus get for every $u\in P$
\begin{equation*}
\begin{split}
c(u) \sum_{e=(u, v)\in \delta(u):l^k(e)\geq x} \wt{M}^k(e)\leq  \sum_{e=(u, v)\in E: l^k(v)\geq x} M(e).\\
\end{split}
\end{equation*}
Indeed, if the sum on the lhs is positive, then the sum on the rhs equals $c(u)$ (which suffices since the lhs is bounded by $c(u)$), and if the sum in the lhs is zero, then the inequality holds trivially since the rhs is nonnegative. Summing over $u\in P$, we get
\begin{equation*}
\begin{split}
\sum_{u\in P}  c(u) \sum_{e=(u, v)\in \delta(u):l^k(e)\geq x} \wt{M}^k(e)\leq  \sum_{u\in P}  \sum_{e=(u, v)\in E: l^k(e)\geq x} M(e).\\
\end{split}
\end{equation*}

This, together with~\dualeqref{eq:wgog2g2g23g} and~\dualeqref{eq:923hg9hg23g3g} yields $b^k(x)\geq \int_x^\infty \sum_{v\in Q} c(v)\cdot  \phi^k_v(s)ds$, as required.


\end{proof}

We now get, letting $b^0\equiv 0$ for convenience,
\begin{lemma}\duallabel{lm:rec-1}
For all $x\geq 0$ and all $k\geq 1$ one has
$|M|-b^k(x)\leq \int_0^x (b^k(s)-b^{k-1}(s))ds.$
\end{lemma}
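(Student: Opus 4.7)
The plan is to deduce the lemma directly from Lemma~\dualref{lm:rec} and the conservation identity \dualeqref{eq:phi-sum}, after establishing a clean pointwise bound on the deposition densities $\phi^k_v$.

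The first step is arithmetic. Applying Lemma~\dualref{lm:rec} at $x$ gives $b^k(x)\geq \int_x^\infty \sum_{v\in Q} w_t(v)\phi^k_v(s)\,ds$, while \dualeqref{eq:phi-sum} states $\sum_{v\in Q} w_t(v)\int_0^\infty \phi^k_v(s)\,ds = |M|$. Subtracting these two expressions yields
\begin{equation*}
|M| - b^k(x) \leq \int_0^x \sum_{v\in Q} w_t(v)\phi^k_v(s)\,ds,
\end{equation*}
so it remains to prove the pointwise bound $\sum_{v\in Q} w_t(v)\phi^k_v(s) \leq b^k(s) - b^{k-1}(s)$ for every $s\geq 0$.

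The pointwise bound I would establish as follows. By definition, $\phi^k_v(s)$ is nonzero only for vertices $v$ whose shadow level crosses $s$ during pass $k$; since levels are monotone non-decreasing in the pass index, the number of such $v$ is exactly $b^k(s)-b^{k-1}(s)$. For each such $v$, one has $w_t(v)\phi^k_v(s) = w_t(v)\, w_s(u)$, where $u\in P$ is the (essentially unique) source depositing water to $v$ at shadow level $s$ during pass $k$. Hence it suffices to show $w_t(v)\, w_s(u) \leq 1$ for every edge $(u,v)\in E$.

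The inequality $w_t(v)w_s(u)\leq 1$ follows from a short case analysis using property (1) of the canonical decomposition (edges from $S_l$ land only in $T_j$ with $j\leq l$) together with the monotonicity of $(\alpha_j)_{j\in \I}$: if $l\leq 0$, then $j\leq l\leq 0$, so $w_s(u)=\alpha_l\leq 1$ and $w_t(v)=1$; if $l>0$ and $j\leq 0$, both weights equal $1$; and if $0<j\leq l$, then $w_s(u)=1$ and $w_t(v)=1/\alpha_j\leq 1$. Integrating the pointwise bound over $[0,x]$ and combining with the first step yields the lemma. The only subtle point is matching the definition of $b^k$ in shadow coordinates with the support of $\phi^k_v$; once that identification is made, the argument is routine.
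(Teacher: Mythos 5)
Your two arithmetic steps are identical to the paper's: combining Lemma~\dualref{lm:rec} with \dualeqref{eq:phi-sum} gives $|M|-b^k(x)\leq\int_0^x\sum_{v\in Q}w_t(v)\phi^k_v(s)\,ds$, and it remains to dominate the integrand by $b^k(s)-b^{k-1}(s)$ (with $b^0\equiv 0$). The paper asserts that domination without comment; you try to justify it, which is the right instinct, but the justification has a gap. You identify $b^k(s)-b^{k-1}(s)$ with the \emph{number} of vertices whose shadow level crosses $s$ in pass $k$. For the surrounding arguments to be consistent --- the lemma bounding the output by $\frac1k\int_0^k b^k(x)\,dx$, and especially the base case of Lemma~\dualref{lm:kpass-rec}, which uses $b^1(0)=|M|$ rather than $|Q|$ --- the quantity $b^k(x)$ must be the $w_t$-weighted total $\sum_{v\in Q}w_t(v)$ taken over vertices $v$ whose shadow level $l^k(v)/w_t(v)$ is at least $x$. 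Under that reading, $b^k(s)-b^{k-1}(s)$ is $\sum_v w_t(v)$ over the crossing vertices, and the needed domination $\sum_v w_t(v)\phi^k_v(s)=\sum_{\text{crossing }v}w_t(v)\,w_s(u(v))\leq\sum_{\text{crossing }v}w_t(v)$ is immediate from $w_s(u)=\min\{1,\alpha_j\}\leq 1$. Your case analysis showing $w_t(v)\,w_s(u)\leq 1$ is itself correct (it does follow from property~(1) of the canonical decomposition and the monotonicity of the $\alpha_j$), but it bounds the integrand by the \emph{unweighted} count of crossing vertices, which is generally larger than the weighted $b^k(s)-b^{k-1}(s)$, so as written your chain of inequalities does not close. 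The fix is small: keep the structure, replace the count identification by the weighted one, and replace the three-case analysis by the one-line observation $w_s(u)\leq 1$.
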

\begin{proof}
By Lemma~\dualref{lm:rec} we have
$b^k(x)\geq \int_x^\infty \sum_{v\in Q} c(v) \phi^k_v(s)ds.$
Putting this together with Claim~\dualref{cl:phi-sum} we get $
|M|-b^k(x)\leq \int_0^x \sum_{v\in Q} c(v) \phi^k_v(s)ds$
for all $x\geq 0$ and $k\geq 1$.
To complete the proof, we note that, since $\phi_v^k(s)\leq 1$ for all $v, k, s$,
\begin{equation*}
\begin{split}
\int_0^x \sum_{v\in Q} c(v) \phi^k_v(s)ds&\leq \int_0^x \sum_{v\in Q} c(v)\cdot \mathbbm{1}[v\text{~is allocated water at level $s$ in pass $k$}]ds\\
&= \int_0^x \sum_{v\in Q} c(v) \cdot (\mathbbm{1}_{l^k(v)\geq s}-\mathbbm{1}_{l^{k-1}(v)<s}) ds\\
&= \int_0^x ( b^k(s)-b^{k-1}(s))ds
\end{split}
\end{equation*}
for all $k\geq 1$ and $x\geq 0$, where we let $ b^0\equiv 0$ for convenience.
\end{proof}

We now prove lower bounds on $b^k(x)$. Recall that for integer $k\geq 1$ 
\begin{equation}\duallabel{eq:923hg9h3g}
\begin{split}
F^k(x)&=\int_x^\infty e^{-s}s^{k-1}/(k-1)!ds=\sum_{i=0}^{k-1} e^{-x} x^i/i!,
\end{split}
\end{equation}
so that $1-F^k(x)$ is the cdf of the Gamma distribution with scale $1$ and shape $k$.  We now prove our main lower bound on $b^k$:

\begin{lemma}\duallabel{lm:kpass-rec}
For every $k\geq 1$  for all $x\geq 0$ one has $\int_0^x b^k(s)ds\geq |M|\cdot \int_0^x F^k(s)ds.$
\end{lemma}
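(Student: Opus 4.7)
The plan is to prove the claim by induction on $k$ via a Grönwall-style integrating-factor argument, treating Lemma~\ref{lm:rec-1} as a differential inequality for the antiderivative $B^k(x) := \int_0^x b^k(s)\,ds$. Set also $\Psi^k(x) := |M|\int_0^x F^k(s)\,ds$, adopt the conventions $b^0 \equiv 0$, $F^0 \equiv 0$ (so $B^0 \equiv \Psi^0 \equiv 0$), and note that the target inequality is exactly $B^k(x) \geq \Psi^k(x)$ for all $x \geq 0$.

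First I would establish the ODE satisfied by $F^k$. A direct integration by parts in the definition of $F^k$ gives $F^k(x) - F^{k-1}(x) = e^{-x} x^{k-1}/(k-1)! = -(F^k)'(x)$, hence
$$
(F^k)'(x) + F^k(x) = F^{k-1}(x),\qquad F^k(0) = 1.
$$
Integrating from $0$ to $x$ and multiplying by $|M|$ yields the clean identity
$$
(\Psi^k)'(x) + \Psi^k(x) = |M| + \Psi^{k-1}(x).
$$

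Next, Lemma~\ref{lm:rec-1} can be rearranged as $b^k(x) \geq |M| - B^k(x) + B^{k-1}(x)$, that is,
$$
(B^k)'(x) + B^k(x) \geq |M| + B^{k-1}(x) \qquad \text{for a.e.\ } x,
$$
where the derivative exists almost everywhere because $B^k$ is absolutely continuous (it is the indefinite integral of the bounded function $b^k$). Setting $D^k(x) := B^k(x) - \Psi^k(x)$ and subtracting the two displays above gives $(D^k)'(x) + D^k(x) \geq D^{k-1}(x)$, with $D^k(0)=0$. Assuming inductively that $D^{k-1} \geq 0$, we obtain $(D^k)'(x) + D^k(x) \geq 0$ a.e., so $\bigl(D^k(x)\, e^x\bigr)' \geq 0$ a.e. Since $D^k(x)\,e^x$ is absolutely continuous with value $0$ at $x=0$, the fundamental theorem of calculus implies $D^k(x) e^x \geq 0$, hence $B^k(x) \geq \Psi^k(x)$, closing the induction. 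The base case $k=0$ is immediate from the conventions, and $k=1$ is the same argument with $D^0 \equiv 0$.

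The only subtlety is that $b^k$ is a step function rather than a smooth function, so strictly speaking the manipulations must be done on the absolutely continuous integrals $B^k$ and $\Psi^k$ rather than on pointwise derivatives; this is the small technical point but causes no real difficulty since absolute continuity of $B^k$ is automatic. The substantive content of the argument is the algebraic fact that the inequality satisfied by $B^k$ (from Lemma~\ref{lm:rec-1}) and the equality satisfied by $\Psi^k$ (from the $F^k$ recursion) are of exactly the same form $y' + y = |M| + \text{(previous level)}$, which is precisely what makes the one-line Grönwall induction go through.
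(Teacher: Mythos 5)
Your proof is correct and takes essentially the same route as the paper: induction on $k$ driven by Lemma~\dualref{lm:rec-1}, recast as a first-order differential inequality for $\int_0^x b^k(s)\,ds$ and compared against the ODE satisfied by $\int_0^x F^k(s)\,ds$. Your write-up is in fact slightly cleaner than the paper's (which writes the inductive inequality as an equality $f'(x)=\cdots$ and solves it without spelling out the comparison step); isolating the difference $D^k=B^k-\Psi^k$ and multiplying by the integrating factor $e^x$ makes the Gr\"onwall step explicit and also handles the a.e.-differentiability point the paper glosses over.
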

\begin{proof}
We prove the claim of the lemma by induction on $k$. 
\begin{description}
\item[Base: $k=1$]  Recall that by Lemma~\dualref{lm:rec-1} one has 
\begin{equation}\duallabel{eq:ineq-1}
|M|-b^1(x)\leq \int_0^x b^1(s)ds.
\end{equation}
Letting $f(x)=\int_0^x b^1(s)ds$, we get by rearranging~\eqref{eq:ineq-1} and noting that $f'(x)=b^1(x)$ that $f'(x)\geq |M|-f(x)$ for all $x\geq 0$. We also have $f(0)=0$. Thus, we have $f(x)\geq |M|\cdot (1-e^{-x})=|M|\cdot \int_0^x F^1(s)ds$, as required.
\if 0
Since $b^1(0)=|M|$, this implies that $\int_0^x b^1(s)ds\geq |M| (1-e^{-x})$. 
\fi
\if 0
Indeed, since $b(x)=|M|(1-e^{-x})$ satisfies \dualeqref{eq:ineq-1} with equality, we get after subtracting
\begin{equation}
(b^1(x)-|M|(1-e^{-x}))\geq -\int_0^x (b^1(s)-|M|(1-e^{-s}))ds, 
\end{equation}

Suppose that $\int_0^{x_0} b^1(s)ds<|M| (1-e^{-x_0})$.
Let $g(x)=\left(\int_0^{x_0} b^1(s)ds\right)\cdot e^{-x+x_0}$ for $x\in [0, x_0]$. Then $g(x)$ satisfies \dualeqref{eq:ineq-1} with equality, and hence $ b^1(x)\geq g(x)$ for all $x\in [0, x_0]$. But $g(0)=\left(\int_{x_0}^\infty  b^1(s)ds\right)\cdot e^{x_0}>|M|$, a contradiction with $ b^1(0)=|M|$.
\fi

\item[Inductive step: $k-1\to k$] We need to prove that 
\begin{equation}
\int_0^x  b^{k}(s)ds\geq |M|\cdot \int_0^x F^{k}(s)ds.
\end{equation}

Using Lemma~\dualref{lm:rec-1} and the inductive hypothesis, we get for all $x\geq 0$ 
\begin{equation}\duallabel{eq:ineq-k}
\begin{split}
 b^k(x)&\geq |M|-\int_0^x (b^{k}(s)-b^{k-1}(s))ds\\
&= |M|-\int_0^x b^{k}(s)ds+\int_0^x b^{k-1}(s)ds\\
 &\geq |M|-\int_0^x b^{k}(s)ds+|M|\cdot \int_0^x F^{k-1}(s)ds.\text{~~~~~~(by the inductive hypothesis)}
 \end{split}
\end{equation}

Let $f(x)=\int_0^x b^{k}(s)ds$ (note that $f(0)=0$). We have from \dualeqref{eq:ineq-k} that
\begin{equation*}
\begin{split}
f'(x)\geq |M|-f(x)+|M|\cdot \int_0^x F^{k-1}(s)ds.
\end{split}
\end{equation*}
Thus, for all $x\geq 0$ one has $f(x)\geq  g(x)$, where $g(x)$ is given by the solution of 
\begin{equation*}
\begin{split}
g'(x)=|M|-g(x)+|M|\cdot \int_0^x F^{k-1}(s)ds,
\end{split}
\end{equation*}
which we now solve. The latter implication holds by Claim~\ref{cl:monotone} applied to $|M|-f(x)$. Note that since $g(0)=0$, we have by the above that $g'(0)=|M|$. Thus, $h(x)=g'(x)$ satisfies
\begin{equation}\duallabel{eq:k-pass-g}
h'(x)=-h(x)+|M|\cdot F^{k-1}(x), h(0)=|M|.
\end{equation}
The solution to \dualeqref{eq:k-pass-g} is given by 
\begin{equation}\duallabel{eq:g-eq}
\begin{split}
h(x)=|M|\cdot e^{-x}\left(\int_0^x e^s F^{k-1}(s)ds+1\right).\\
\end{split}
\end{equation}
Calculating the integral in \dualeqref{eq:g-eq} using the expression for $F^{k-1}(s)$ given by ~\dualeqref{eq:923hg9h3g} yields 
\begin{equation}
\begin{split}
\int_0^x e^s F^{k-1}(s)ds=\int_0^x e^s \int_s^\infty \frac1{(k-2)!}z^{k-2}e^{-z}dzds=\int_0^x \sum_{j=0}^{k-2} \frac1{j!}s^{j}ds=\sum_{j=1}^{k-1}\frac1{j!}x^j, 
\end{split}
\end{equation}
and hence 
$$
h(x)=|M|\cdot e^{-x}\left(\int_0^x e^s F^{k-1}(s)ds+1\right)=|M|\cdot e^{-x}\left(\sum_{j=1}^{k-1}\frac1{j!}x^j+1\right)=|M|\cdot F^{k}(x)
$$ 
by~\dualeqref{eq:923hg9h3g}. We thus get $g(x)=\int_0^x h(s)ds=|M|\cdot \int_0^x F^k(s)ds$, and therefore  $\int_0^x b^k(s)ds\geq f(x)\geq |M|\cdot \int_0^x F^k(s)ds$ as required.

\if 0
where the constant $c$  depends on $\gamma$. Note that $g(0)=c$, and recalling that $g$ lower bounds $ b^k(x)$, which equals $|M|$ at $x=0$, we have that $c\leq |M|$.

Rewriting, we get
\begin{equation}\duallabel{eq:g-eq}
\begin{split}
g(x)=e^{-x}\left(|M|\int_0^x e^s (1+\int_0^sF^{k-1}(z)dz)ds\right)=e^{-x}\left(|M|(e^x-1 +\int_0^x\int_0^sF^{k-1}(z)dzds)\right)\\
e^{-x}\left(|M|(e^x-1 +\int_0^x(x-z)F^{k-1}(z)dz)\right)\\
\end{split}
\end{equation}

The solution of \dualeqref{eq:k-pass-g} is given by 
\begin{equation}\duallabel{eq:g-eq}
\begin{split}
g(x)=e^{-x}\left(|M|\int_0^x e^s (1+\int_0^sF^{k-1}(z)dz)ds\right),\\
\end{split}
\end{equation}

\begin{equation}\duallabel{eq:g-eq}
\begin{split}
g'(x)=-g(x)+e^{-x}\left(|M|e^x (1+\int_0^xF^{k-1}(z)dz)\right)\\
=-g(x)+|M|(1+\int_0^xF^{k-1}(z)dz)\\
\end{split}
\end{equation}

where the constant $c$  depends on $\gamma$. Note that $g(0)=c$, and recalling that $g$ lower bounds $ b^k(x)$, which equals $|M|$ at $x=0$, we have that $c\leq |M|$.

Rewriting, we get
\begin{equation}\duallabel{eq:g-eq}
\begin{split}
g(x)=e^{-x}\left(|M|\int_0^x e^s (1+\int_0^sF^{k-1}(z)dz)ds\right)=e^{-x}\left(|M|(e^x-1 +\int_0^x\int_0^sF^{k-1}(z)dzds)\right)\\
e^{-x}\left(|M|(e^x-1 +\int_0^x(x-z)F^{k-1}(z)dz)\right)\\
\end{split}
\end{equation}

Calculating the integral in \dualeqref{eq:g-eq} yields 
\begin{equation}
\begin{split}
\int_0^x e^s F^{k-1}(s)ds=\int_0^x e^s \int_s^\infty \frac1{(k-1)!}z^{k-1}e^{-z}dzds=\int_0^x \sum_{j=0}^{k-1} \frac1{j!}s^{j}ds=\sum_{j=1}^{k}\frac1{j!}x^j, 
\end{split}
\end{equation}
and hence 
$$
g(x)=(c+|M|\sum_{j=1}^{k}e^{-x}x^j/j!)=|M|\cdot F^{k}(x)+(c-|M|).
$$
In particular, it follows that $\gamma=g(x_0)=|M|\cdot F^k(x)+(c-|M|)\leq |M|\cdot F^k(x)$, completing the proof of the inductive step.
\fi
\end{description}

\end{proof}

Given Lemma~\dualref{lm:kpass-rec}, we immediately obtain 
\begin{theorem}
Algorithm~\dualref{alg:main} achieves a $(1-e^{-k}\frac{k^{k-1}}{(k-1)!})$-approximation to maximum matchings in $k$ passes over the input stream.
\end{theorem}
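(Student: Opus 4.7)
The plan is to chain together the two lemmas already established---that the size of the matching output by Algorithm~\ref{alg:main} is at least $\frac{1}{k}\int_0^k b^k(s)ds$ and that $\int_0^x b^k(s)ds\geq |M|\cdot \int_0^x F^k(s)ds$ for all $x\geq 0$---and then reduce the theorem to the explicit evaluation of $\int_0^k F^k(s)ds$. Setting $x=k$ in Lemma~\ref{lm:kpass-rec} gives directly
\[
\frac{1}{k}\int_0^k b^k(s)ds \;\geq\; \frac{|M|}{k}\int_0^k F^k(s)ds,
\]
so the approximation ratio is at least $\frac{1}{k}\int_0^k F^k(s)ds$, and it remains to identify this quantity as $1-e^{-k}k^{k-1}/(k-1)!$.

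To evaluate the integral I will exploit the closed form $F^k(x)=\sum_{i=0}^{k-1}e^{-x}x^i/i!$ and the recursion $G_i(x):=\int_0^x e^{-s}s^i/i!\,ds$ satisfies, namely, by integration by parts,
\[
G_i(x) \;=\; G_{i-1}(x) - e^{-x}x^i/i!, \qquad G_0(x) = 1-e^{-x}.
\]
Telescoping gives $G_i(x)=(1-e^{-x})-\sum_{j=1}^{i}e^{-x}x^j/j!$, and summing over $i=0,\dots,k-1$ yields, after rearranging the double sum by swapping the order of summation,
\[
\int_0^x F^k(s)ds \;=\; k(1-e^{-x}) - \sum_{j=1}^{k-1}(k-j)\,e^{-x}x^j/j!.
\]
Plugging $x=k$ causes the terms involving $k$ and $j$ to collapse: writing $(k-j)/j!=k/j!-1/(j-1)!$ and shifting the index in the second sum produces a cancellation that leaves only the ``boundary'' term $-e^{-k}k^k/(k-1)!$, so that $\int_0^k F^k(s)ds = k - e^{-k}k^k/(k-1)!$.

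Dividing by $k$ gives the claimed approximation ratio $1-e^{-k}k^{k-1}/(k-1)!$, completing the proof. The only nontrivial step is the arithmetic in the previous paragraph, and in particular the observation that the middle terms cancel so that only the endpoint contribution $e^{-k}k^k/(k-1)!$ survives; no new combinatorial insight beyond the lemmas already proved is needed.
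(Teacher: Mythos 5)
Your proposal is correct and follows essentially the same path as the paper: both reduce the theorem to evaluating an integral of $F^k$ via the closed form $F^k(x)=\sum_{j=0}^{k-1}e^{-x}x^j/j!$ with integration by parts and a telescoping cancellation, the only cosmetic difference being that you compute the head integral $\int_0^k F^k$ directly while the paper first uses $\int_0^\infty F^k=k$ and computes the tail $\int_k^\infty F^k$. One small imprecision in your narration: the telescoping actually leaves \emph{two} boundary terms, $-e^{-k}k^k/(k-1)!$ and $+ke^{-k}$, and the latter cancels against the $-ke^{-k}$ hiding in $k(1-e^{-k})$; your final identity $\int_0^k F^k(s)\,ds = k - e^{-k}k^k/(k-1)!$ is nonetheless correct.
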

\begin{proof}
By Lemma~\ref{lm:alg-lb} together with Lemma~\dualref{lm:kpass-rec} the approximation ratio is at least 
$$
\frac1{|M|}\cdot \frac1{k}\int_0^k  b^k(x)dx\geq \frac1{k}\int_0^k F^k(x)dx=1-\frac1{k}\int_k^\infty F^k(x)dx.
$$

We now recall (by~\dualeqref{eq:923hg9h3g}) that $F^k(x)=\sum_{j=0}^{k-1} e^{-x} x^j/j!$.  Integrating by parts, we have 
$$
\int_k^\infty e^{-x} x^j/j!dx=\left.-e^{-x} x^j/j!\right|_k^\infty+\int_k^\infty e^{-x} x^{j-1}/(j-1)!dx,
$$
and hence 
$$
\frac1{k}\int_k^\infty F^k(x)dx=\frac1{k}\int_k^\infty\sum_{j=0}^{k-1} e^{-x} x^j/j!dx=\frac1{k}\sum_{j=0}^{k-1} (k-j)e^{-k}k^{j}/j!.
$$
Since 
$$
\frac1{k}\sum_{j=0}^{k-1} (k-j)e^{-k}k^{j}/j!=\sum_{j=0}^{k-1} e^{-k}k^j/j!-\sum_{j=1}^{k-1}e^{-k}k^{j-1}/(j-1)!=e^{-k}k^{k-1}/(k-1)!,
$$
we thus get
\begin{equation*}
\begin{split}
\frac1{|M|}\cdot \frac1{k}\int_0^k  b^k(x)dx\geq1-e^{-k}k^{k-1}/(k-1)!,
\end{split}
\end{equation*}
as required.
\end{proof}

\begin{remark}
We note that the approximation ratio satisfies $\frac{e^{-k}k^{k-1}}{(k-1)!}=\frac{1}{\sqrt{2\pi k}}+O(k^{-3/2})$.
\end{remark}

\section{Gap-existence}\duallabel{sec:gap}

In this section we show how our techniques yield an efficient algorithm for Gap-existence, thereby proving Theorem~\dualref{thm:gap}.

We now describe \textsc{DiscretizedWaterfilling}, which is a version of Algorithm~\dualref{alg:main}. We will explicitly maintain a subset $I^*\subset I$ of size $O(|A|/\e)$ while relying on an oracle \textsc{NewNeighbor}$(a, I^*)$ that, given any set $I^*\subseteq I$, outputs any node $i\in I\setminus I^*$ that 
$a$ is connected to or $\emptyset$ if all neighbors of $a$ are in $I^*$.  The difference 

\begin{algorithm}[H]\duallabel{alg:sparse}
\caption{\textsc{DiscretizedWaterfilling}($G, a, \e, k$)}
\begin{algorithmic}[1]
\STATE  $I^*\leftarrow \emptyset$~~~~~~~~~~~~~~~~~~~~~~~~~~~~~~~~~~~~~~~~~~~~~~~~~~~$\rhd N(a)\subseteq I$ stands for the vertex neighborhood of $a\in A$
\WHILE{$\leq 1$ unit of water allocated}
\WHILE{$\exists$ $i\in N(a)\cap I^*$ with level $<(\e/4) k$ and $\leq 1$ unit of water has been allocated}
\STATE Allocate water to $i$ until it is at level $(\e/4) k$
\IF{one unit of water has been allocated from $a$}
\STATE {\bf return}
\ENDIF
\STATE $i\gets \textsc{NewNeighbor}(a, I^*)$ ~~~~~~~~~$\rhd \textsc{NewNeighbor}(a, I^*)$ returns $\emptyset$ if all neighbors of $a$ are in $I^*$
\IF{$i\neq \emptyset$}
\STATE $I^*\leftarrow I^*\cup \{i\}$
\ELSE
\STATE {\bf break} from both loops
\ENDIF

\ENDWHILE
\ENDWHILE
\STATE Perform water filling on neighbors in $I^*$.
\end{algorithmic}
\end{algorithm}

First we prove
\begin{lemma}\duallabel{lm:space}
The space used by Algorithm~\dualref{alg:sparse} is $O(|A|/\e)$.
\end{lemma}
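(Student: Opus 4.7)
The plan is to bound $|I^*|$ by $O(|A|/\eps)$ via a charging argument against the total water ever dispensed over the course of the algorithm. Since each vertex $a\in A$ (recall that budgets have been simulated by instantiating $B_a$ copies of $a$) is processed once per pass and dispenses at most one unit of water per call, the total water ever deposited into vertices of $I^*$ during the execution of Algorithm~\dualref{alg:sparse} is at most $|A|\cdot k$.

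The structural property I would exploit is the invariant enforced by the while-loop on lines~2--4 of Algorithm~\dualref{alg:sparse}: a new vertex is inserted into $I^*$ via the call NEW-NEIGHBOR$(a,I^*)$ only after every neighbor of the currently processed $a$ that already lies in $I^*$ has been raised to water level at least $(\eps/4)k$. To convert this invariant into a space bound I would introduce the potential $\Phi = \sum_{i\in I^*}\min\{l(i),(\eps/4)k\}$, where $l(i)$ is the current level of $i$, and establish two facts: (a) $\Phi$ is upper bounded by the total water dispensed, hence $\Phi\le |A|\cdot k$; and (b) at termination, essentially every vertex $i\in I^*$ has level at least $(\eps/4)k$, so $\Phi \ge (|I^*| - O(|A|))\cdot (\eps/4)k$. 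Combining these yields $|I^*|=O(|A|/\eps)$, which bounds the explicit storage used by the algorithm.

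The main obstacle lies in establishing (b). When $i$ is first added to $I^*$ during processing of $a$, the subsequent water-filling step on line~6 must allocate to $i$ first, since $i$ sits at level~$0$ and water-filling prioritizes the lowest neighbor; it therefore either raises $i$ to the level of the next-lowest neighbor of $a$, which is at least $(\eps/4)k$ by the invariant above, or exhausts $a$'s remaining budget for this pass. In the latter, under-filled case, the very next invocation of Algorithm~\dualref{alg:sparse} on any $a'$ adjacent to $i$ will find $i$ below $(\eps/4)k$ and top it up inside the while-loop before doing anything else with $a'$. The technical point is then to bound the number of vertices that remain under-filled when the entire $k$-pass execution terminates; since at most one vertex is inserted into $I^*$ per call and each such in-flight vertex can be charged to a distinct call that has not yet been followed by another processing of one of its neighbors, this count is at most $|A|$, which is absorbed as a lower-order term in the final bound.
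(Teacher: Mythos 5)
Your proof is correct and follows essentially the same route as the paper's: you split $I^*$ into vertices that have already absorbed $\Omega(\e k)$ water (bounded by $O(|A|/\e)$ via conservation of the $k|A|$ units dispensed) and the remaining ``in-flight'' vertices (bounded by $|A|$ via the observation that the while-loop tops up any previously-added neighbor before NEW-NEIGHBOR is invoked, so only a last-pass insertion can remain under-filled at termination). The potential $\Phi$ is a cosmetic repackaging of the paper's direct two-way count; in fact you spell out the ``one in-flight vertex per $a$'' step more explicitly than the paper, which merely asserts it, and you correctly use the threshold $(\e/4)k$ from the algorithm rather than the $\e k$ the paper's proof writes (a harmless typo there).
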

\begin{proof}
Call a vertex {\em saturated} if the amount of water in it is at least $\e k$. The number of saturated vertices is $O(|A|/\e)$ since there are $k|A|$ units of water in the system, and each saturated vertex accounts for at least $\e k$. We say that an unsaturated vertex $i$ belongs to $a\in A$ if $i$ was added to $I^*$ when \textsc{NewNeighbor} was called from $a$. Note that for each $a\in A$ only one $i\in I$ belongs to $a$. Thus, this amounts to at most $|A|$ additional vertices.
\end{proof}

Our algorithm for Gap-Existence is as follows:

\begin{algorithm}[H]\duallabel{alg:gap}
\caption{\textsc{GapExistence}($G$, $\e$)}
\begin{algorithmic}[1]
\STATE Run \textsc{DiscretizedWaterfilling}($G$) with $k=O(\log (\sum_{a\in A}B_a/\e)/\e^2)$.
\STATE Output {\bf YES} if at most $\e/2$ water is allocated above level $k/(1-\e/2)$, and {\bf NO} otherwise.
\end{algorithmic}
\end{algorithm}

We now assume that we are in the {\bf YES} case, i.e. there exists a matching with budgets $B_a$, and prove that the algorithm will find a matching with budgets $\lfloor (1-\e) B_a\rfloor$.

We recall definitions of levels and level set sizes below.
\begin{definition}
Define $l^k(i)$ to be the level of water at vertex $i$ after the $k$-th pass  (here we refer to the level of water in the actual allocation constructed by waterfilling, not the shadow allocation used for analysis purposes in Section~\ref{sec:mp-analysis}). 
\end{definition}

\begin{definition}[Level set sizes]
For each $k\geq 1$ and all $x\geq 0$ denote by $b^k(x)$ the number of vertices in $I$ that have load {\em at least $x$} after $k$ passes, i.e. the number of vertices $i\in I$ with $l^k(i)\geq x$.
\end{definition}

Note that $b^k(x)$ is non-increasing in $x$ and $b^k(x)-b^{k-1}(x)\geq 0$ for all $x$, and $\int_0^\infty b^k(x)dx=k|A|$, since every vertex dispenses one unit of water in every pass.

We now note that the allocation constructed by \textsc{DiscretizedWaterfilling} can be used to obtain a matching as follows: we first scale the allocation by a factor of $1-\e/2$, then take all water allocated below level $k$. Dividing by $1/k$ gives a matching where every vertex in $A$ is assigned at least 
\begin{equation}\duallabel{eq:int-2r}
(1-\e/2)\cdot B_a-\frac{1}{k}\int_{k/(1-\e/2)}^\infty b^k(x)dx
\end{equation}
fractional mass. Thus, if the second term is bounded by $\e/2$, then the graph contains a matching with budgets $\lfloor (1-\e)B_a\rfloor, a\in A$, i.e. if the algorithm outputs {\bf YES}, it is correct. In what follows we show that in the {\bf YES} case, i.e. when the input graph admits a matching with budgets $B_a, a\in A$, the second term is indeed bounded by $\e/2$.

For simplicity of notation we assume from now on that every $a\in A$ is replaced with $B_a$ unit demand copies (and we use $A$ to denote the set of those copies, abusing notation somewhat). We assume that we are in the {\bf YES} case, i.e. the original graph contains a matching with budgets $B_a$, and thus the new graph admits a perfect matching of the $A$ side -- denote this matching by $M$. For every edge $e$ of $G$, every $k$ we let $\wt{M}^k(e)$ denote the amount of fractional mass allocated along edge $e$ in the $k$-th pass. For an edge $e=(a, i)$ we let $l^k(e)$ denote the load of vertex $i$ right after $a$ arrives in the $k$-th pass, and let $l^k(i)$ denote the load of $i$ after the $k$-th pass.

\begin{lemma}\duallabel{lm:gap-rec}
One has for all $k\geq 1$  and $x\geq (\e/4) \cdot k$
\begin{equation}\duallabel{eq:int-5r}
b^k(x)\geq \int_x^{\infty} (b^k(s)-b^{k-1}(s))ds.
\end{equation}
where $b^0\equiv 0$.
\end{lemma}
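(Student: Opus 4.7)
The plan is to adapt the argument underlying \dualeqref{eq:int-5} from the simple matching case, accounting for the discretization and the \textsf{NEW-NEIGHBOR} mechanism of Algorithm~\dualref{alg:sparse}. First I would interpret both sides as water mass: $\int_x^\infty b^k(s)\,ds$ is the total water placed strictly above level $x+\epsilon k$ across all of $I$ after pass $k$, so $\int_x^\infty (b^k(s)-b^{k-1}(s))\,ds$ is the mass newly added above level $x+\epsilon k$ during pass $k$. Because the fill-up loop (lines 2--4 of Algorithm~\dualref{alg:sparse}) never raises any vertex above $(\epsilon/4)k<x+\epsilon k$, all of this new mass is contributed by the water-filling step (line 6), summed over the copies $a'$ processed in pass $k$.

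Next I would exploit the YES hypothesis to fix a complete matching $M$ from the $\sum_a B_a$ copies of $A$ onto distinct vertices in $I$, and let $A_x$ denote the set of copies $a'$ whose water-filling step in pass $k$ placed at least some water above level $x+\epsilon k$. The crux of the argument is the claim that for every $a'\in A_x$ one has $M(a')\in I^*$ and $\ell^k(M(a'))\geq x+\epsilon k$. Granted this, injectivity of $M$ (on copies) yields $|A_x|\leq b^k(x)$, and since each copy dispenses at most one unit in its water-filling step, the total new mass above $x+\epsilon k$ in pass $k$ is at most $|A_x|\leq b^k(x)$, which is exactly the claimed inequality.

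To establish the claim I would proceed as follows. Consider the processing of $a'$ during pass $k$. After the fill-up loop, every existing $I^*$-neighbor of $a'$ sits at level at least $(\epsilon/4)k$. Suppose that \textsf{NEW-NEIGHBOR} returned a fresh neighbor $i'\notin I^*$; since $k\geq 4/\epsilon$ for the choice $k=\Theta(\log(|I|\cdot \sum_a B_a)/\epsilon^2)$, one has $(\epsilon/4)k\geq 1$, so the single unit of water dispensed by water-filling is entirely absorbed by the lowest-loaded neighbor $i'$ before $i'$ reaches level $(\epsilon/4)k$. Thus no water crosses the threshold $(\epsilon/4)k$, contradicting $a'\in A_x$. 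Consequently \textsf{NEW-NEIGHBOR} must have returned $\emptyset$, every neighbor of $a'$ is already in $I^*$, and in particular $M(a')\in I^*$. Given this, water-filling operates on $I^*$-neighbors all at level $\geq(\epsilon/4)k$ and by definition places water starting from the current minimum; the assumption that water was placed at some level $y\geq x+\epsilon k$ then forces every $I^*$-neighbor of $a'$, including $M(a')$, to be at level at least $y\geq x+\epsilon k$ at that moment, and loads are monotone non-decreasing afterwards, so $\ell^k(M(a'))\geq x+\epsilon k$.

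The main obstacle is the adversarial nature of \textsf{NEW-NEIGHBOR}: it need not return $M(a')$, so one cannot ensure $M(a')\in I^*$ on a per-call basis by fiat. The way around this is exactly the observation above --- adding any fresh neighbor in a call where only one unit of water is dispensable, together with the floor $(\epsilon/4)k\geq 1$, confines that unit entirely to the newly added vertex, so such a call cannot contribute to water above $x+\epsilon k$. This is the role of the $(\epsilon/4)k$ discretization threshold built into Algorithm~\dualref{alg:sparse}, and it is what allows the same charging scheme used in the simple matching analysis to go through in the lop-sided setting.
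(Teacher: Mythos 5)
Your argument takes the same route as the paper's (omitted) proof: charge each unit of water placed above the threshold during pass $k$ to the match $M(a')$ of the dispensing copy $a'$, and appeal to monotonicity of levels. The paper states only the one-line claim that $M(a)$ must already be at level $\geq x$ when $a$ is processed, deferring details to Lemma~\dualref{lm:rec}; you correctly identify the subtlety this glosses over in the discretized setting---namely that water-filling in Algorithm~\dualref{alg:sparse} runs only over $I^*$, so one must argue $M(a')\in I^*$---and you supply the missing justification via the observation that if \textsf{NEW-NEIGHBOR} returns a fresh vertex, the $(\epsilon/4)k\geq 1$ threshold forces the entire unit into that vertex below $(\epsilon/4)k<\epsilon k\leq x+\epsilon k$, so such a call cannot land in $A_x$. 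This is exactly the role the paper's discretization threshold is designed to play, and your write-up makes that explicit where the paper leaves it implicit.
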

\begin{proof}
Intuitively, the lemma follows since if a vertex $a\in A$ ended up allocating water at level at least $x$ during  the $k$-th pass, its match must have been at level at least $x$ when $a$ arrived. Together with the fact that levels are monotone increasing this gives the result. We now give the details.

First note that
\begin{equation}\duallabel{eq:wgog2g2g23g-uebrg}
\begin{split}
\int_x^\infty \sum_{i\in I} (b^k(s)-b^{k-1}(s)) ds&= \sum_{i\in I} \sum_{\substack{e=(a, i)\in \delta(i)\\l^k(e)\geq x}} \wt{M}^k(e)\\
&=\sum_{a\in A}  \sum_{\substack{e=(a, i)\in \delta(a)\\l^k(e)\geq x}} \wt{M}^k(e)
\end{split}
\end{equation}

At the same time
\begin{equation}\duallabel{eq:923hg9hg23g3g-uebrg}
\begin{split}
b^k(x)&=\sum_{i\in I} {\mathbbm 1}_{l^k(i)\geq x}\\
&\geq \sum_{\substack{i\in I\\l^k(i)\geq x}} \sum_{e=(a, i)\in E} M(e) \text{~~~~~~~(since $\sum_{e=(a, i)\in E} M(e)\leq 1$ for every $i\in I$)}\\
&\geq \sum_{a\in A} \sum_{\substack{e=(a, i)\in E\\ l^k(e)\geq x}} M(e).\\
\end{split}
\end{equation}

Now note that if  $a\in A$ dispensed some water at level at least $x$ during the $k$-th pass, i.e. if
$$
\sum_{e=(a, i)\in \delta(a):l^k(e)\geq x} \wt{M}^k(e)>0,
$$ 
then vertices $i$ such that $M_{(a, i)}>0$ were at level at least $x$ after $a$ was processed during $k$-th pass. In particular, in that case we have
$$
\sum_{e=(a, i)\in E: l^k(e)\geq x} M(e)=1.
$$ 
Since $\sum_{e=(a, i)\in \delta(a):l^k(e)\geq x} \wt{M}^k(e)\leq 1$ always, we thus get for every $a\in A$
\begin{equation*}
\begin{split}
\sum_{e=(a, i)\in \delta(a):l^k(e)\geq x} \wt{M}^k(e)\leq  \sum_{e=(a, i)\in E: l^k(i)\geq x} M(e).\\
\end{split}
\end{equation*}
Indeed, if the sum on the lhs is positive, then the sum on the rhs equals $1$ (which suffices since the lhs is bounded by $1$), and if the sum in the lhs is zero, then the inequality holds trivially since the rhs is nonnegative. Summing over $a\in A$, we get
\begin{equation*}
\begin{split}
\sum_{a\in A}   \sum_{e=(a, i)\in \delta(a):l^k(e)\geq x} \wt{M}^k(e)\leq  \sum_{a\in A}  \sum_{e=(a, i)\in E: l^k(e)\geq x} M(e).\\
\end{split}
\end{equation*}

This, together with~\dualeqref{eq:wgog2g2g23g-uebrg} and~\dualeqref{eq:923hg9hg23g3g-uebrg} yields $b^k(x)\geq \int_x^\infty \sum_{i\in I} (b^k(s)-b^{k-1}(s))ds$, as required.
\end{proof}

We now get, letting $\Delta=(\e/4)\cdot k$ to simplify notation, 
\begin{lemma}\duallabel{lm:gap-q}
For all $k\geq 1$  and all $x\geq \Delta$, then
\begin{equation}\duallabel{eq:int-6r}
\int_x^\infty b^{k}(s)ds\leq |A|\cdot \int_{x-\Delta}^\infty F^k(s)ds.
\end{equation}
\end{lemma}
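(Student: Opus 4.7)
The plan is to parallel the ODE-based argument from the proof of Lemma~\ref{lm:kpass-rec}, but tracking the tail $g_k(x) := \int_x^\infty b^k(s)\,ds$ instead of the head, and comparing it to the candidate $G^k(x) := |I|\int_x^\infty F^k(s)\,ds$. I induct on $k$. The base case $k=0$ is trivial, since $b^0\equiv 0$ and $F^0\equiv 0$ give $g_0 \equiv G^0 \equiv 0$.

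For the inductive step, first rewrite Lemma~\ref{lm:gap-rec} as $-g_k'(x) = b^k(x) \ge g_k(x) - g_{k-1}(x)$, which is the differential inequality $g_k'(x) + g_k(x) \le g_{k-1}(x)$. A short computation, using the identity $F^{k}(x)-F^{k-1}(x)=e^{-x}x^{k-1}/(k-1)!$ (which follows from the explicit formula $F^k(x)=\sum_{i=0}^{k-1} e^{-x}x^i/i!$), shows that $G^k$ satisfies the matching \emph{equality} $(G^k)'(x) + G^k(x) = G^{k-1}(x)$. Combining this with the inductive hypothesis $g_{k-1}\le G^{k-1}$, the function $w := g_k - G^k$ satisfies $w'(x)+w(x)\le 0$, i.e.\ $(e^{x}w(x))' \le 0$, so $e^x w(x)$ is non-increasing on $[0,\infty)$.

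To conclude $g_k \le G^k$ on $[0,\infty)$ it then suffices to check the boundary condition $w(0)\le 0$. Over $k$ passes the algorithm dispenses at most $k|A|$ total units of water, so $g_k(0) = \int_0^\infty b^k(s)\,ds \le k|A|$. Since we are in the YES case of Gap-Existence, a complete matching exists, which forces $|A|\le |I|$ (after splitting budgets into copies); hence $g_k(0)\le k|I| = |I|\int_0^\infty F^k(s)\,ds = G^k(0)$. Monotonicity of $e^x w$ then gives $e^x w(x) \le w(0) \le 0$ for every $x\ge 0$, i.e.\ $g_k(x)\le G^k(x)$, closing the induction.

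The main delicate point is locating the correct boundary condition. Both $g_k$ (which has compact support) and $G^k$ (which decays like $e^{-x}\mathrm{poly}(x)$) tend to $0$ at infinity, and in fact $G^k$ has the strictly heavier tail, so the inequality cannot be pinned down at $x=\infty$ — the comparison must be anchored at $x=0$ via the total-water budget $k|A|\le k|I|$. Once that is in place, the rest is a routine Grönwall-style argument using the integrating factor $e^x$, exactly analogous to (but dual to) the calculation in the proof of Lemma~\ref{lm:kpass-rec}.
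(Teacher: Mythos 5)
Your proof is correct and uses the same ODE / Gr\"onwall-comparison backbone as the paper's, but anchors the comparison differently. The paper argues by contradiction and pins down the comparison via the unconditional fact $b^k(0)\le |I|$ (at most $|I|$ vertices can have load at least $\e k$), whereas you prove $g_k(0)\le k|I|$ directly by combining the total-water budget $k\sum_a B_a$ with the \textbf{YES}-case inequality $\sum_a B_a\le |I|$. Both anchors are legitimate under the standing \textbf{YES} assumption (which is already implicit in the proof of Lemma~\ref{lm:gap-rec}, since it uses the complete matching $M$), but it is worth noting the paper's anchor holds unconditionally while yours does not: $\int_0^\infty b^k(s)\,ds$ is only bounded by the total amount of water dispensed, which can exceed $k|I|$ when $\sum_a B_a>|I|$. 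Your version has two clear presentational advantages: you start the induction at $k=0$ with the natural convention $F^0\equiv 0$ rather than reproving the $k=1$ case from scratch, and you explicitly isolate the ODE identity $(G^k)'+G^k=G^{k-1}$ for $G^k(x)=|I|\int_x^\infty F^k(s)\,ds$, which follows from $\int_x^\infty F^k(s)\,ds-\int_x^\infty F^{k-1}(s)\,ds=F^k(x)$. This last point is worth emphasizing, because the inductive step as written in the paper conflates $F^{k-1}(x)$ with $\int_x^\infty F^{k-1}(s)\,ds$ (compare \eqref{eq:gap-k-pass-g} and \eqref{eq:gap-g-eq} against what the differential inequality for $\int_x^\infty b^k(s)\,ds$ actually demands), and the clean identity you state is exactly what is needed to make the computation go through. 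Your remark on why the comparison cannot be anchored at $x=\infty$ (because $G^k$ has the strictly heavier tail) is a helpful sanity check absent from the paper's argument.
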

\begin{proof}
We prove the lemma by induction on $k$. 
\begin{description}
\item[Base: $k=1$]   Recall that by Lemma~\dualref{lm:gap-rec} one has 
\begin{equation}\duallabel{eq:gap-ineq-1}
 b^1(x)\geq \int_x^\infty  b^1(s)ds, 
\end{equation}
for all $x\geq \Delta$. Let $f(x)=\int_x^\infty b^1(s)ds$, so that $f(x)\leq |A|$ for every $x$, and note that for $x\geq \Delta$
$$
f'(x)=-b^1(x)\leq -\int_x^\infty  b^1(s)ds=-f(x).
$$
Let $g(x)$ be a function such that $g(\Delta)=|A|$ and $g'(x)=-g(x)$ for all $x\geq \Delta$. Then we have $f(x)\leq g(x)$ for all $x\geq \Delta$, and therefore for all $x\geq \Delta$
$$
\int_x^\infty b^1(s)ds=f(x)\leq g(x)=|A|e^{-x+\Delta}=|A|\cdot \int_{x-\Delta}^\infty e^{-s}ds=|A|\cdot \int_{x-\Delta}^\infty F^1(s)ds.
$$

\item[Inductive step: $k-1\to k$] We need to prove that 
\begin{equation}
\int_x^\infty  b^{k}(s)ds\leq |A|\cdot \int_x^\infty F^{k}(s)ds.
\end{equation}

Using Lemma~\dualref{lm:gap-rec} we get for all $x\geq \Delta$ 
\begin{equation}\duallabel{eq:gap-ineq-k}
 b^k(x)\geq \int_x^{\infty} ( b^{k}(s)- b^{k-1}(s))ds=\int_x^{\infty}  b^{k}(s)ds-|A|\cdot \int_{x-\Delta}^\infty F^{k-1}(x),
\end{equation}
where we used the inductive hypothesis to upper bound $\int_x^{\infty}  b^{k-1}(s)ds$ with $|A|\cdot \int_{x-\Delta}^\infty F^{k-1}(s)ds$. We thus have that the function $f(x)=\int_x^\infty b^k(s)ds$ satisfies 
$$
f'(x)\leq -f(x)+|A|\cdot \int_{x-\Delta}^\infty F^{k-1}(x), f(\Delta)\leq k|A|,
$$
where the last condition comes from the fact that every vertex in $A$ dispenses one unit of water in every pass overall, so the total amount of water dispensed at level $x$ or above is bounded by $k|A|$. 

Let $g$ satisfy 
\begin{equation}\duallabel{eq:gap-k-pass-g}
g'(x)=-g(x)+|A|\cdot \int_{x-\Delta}^\infty F^{k-1}(s)ds, g(\Delta)=k|A|,
\end{equation}
so that $g(x)\geq f(x)=\int_x^\infty b^k(s)ds$ for $x\geq \Delta$ (by Claim~\ref{cl:monotone} below).  Let $h(x)=g'(x)$, so that 
\begin{equation}\duallabel{eq:gap-k-pass-h}
h'(x)=-h(x)-|A|\cdot F^{k-1}(x-\Delta)
\end{equation}
and $h(\Delta)=g'(\Delta)=-g(\Delta)+|A|\cdot \int_0^\infty F^{k-1}(s)ds=-k|A|+(k-1)|A|=-|A|$. The second to last equality holds since $\int_0^\infty F^{k-1}(s)ds$ equals the expectation of the sum of $k-1$ exponentially distributed variables of unit scale, which is $k-1$.

The solution to \dualeqref{eq:gap-k-pass-h} is given by
\begin{equation}\duallabel{eq:gap-h-eq}
\begin{split}
h(x)=e^{-x+\Delta}\left(-|A|\cdot\int_0^{x-\Delta} e^s F^{k-1}(s)ds-|A|\right).\\
\end{split}
\end{equation}
Calculating the integral in \dualeqref{eq:gap-h-eq} yields 
\begin{equation}
\begin{split}
\int_0^{x-\Delta} e^s F^{k-1}(s)ds&=\int_0^{x-\Delta} e^s \int_s^\infty \frac1{(k-2)!}z^{k-2}e^{-z}dzds\\
&=\int_0^{x-\Delta} \sum_{j=0}^{k-2} \frac1{j!}s^{j}ds\\
&=\sum_{j=1}^{k-1}\frac1{j!}(x-\Delta)^j, 
\end{split}
\end{equation}
and hence 
\begin{equation*}
\begin{split}
h(x)&=e^{-x+\Delta}\left(-|A|\cdot\sum_{j=1}^{k-1}\frac1{j!}(x-\Delta)^j-|A|\right)\\
&=e^{-x+\Delta}\left(-|A|\cdot\sum_{j=0}^{k-1}\frac1{j!}(x-\Delta)^j\right)\\
&=-|A|\cdot F^k(x-\Delta)\\
\end{split}
\end{equation*}
by~\dualeqref{eq:923hg9h3g}.  Therefore 
\begin{equation*}
\begin{split}
g(x)&=g(\Delta)+\int_\Delta^x h(s)ds\\
&=k|A|+\int_\Delta^x h(s)ds\\
&=-\int_x^\infty h(s)ds\\
&=|A|\cdot \int_{x-\Delta}^\infty F^k(s-\Delta)ds,
\end{split}
\end{equation*}
and $\int_x^\infty b^k(s)ds=f(x)\leq g(x)=|A|\cdot \int_{x-\Delta}^\infty F^k(s-\Delta)ds$, as required.

\end{description}
\end{proof}

\begin{claim}\label{cl:monotone}
For every $g: \mathbb{R}\to \mathbb{R}$, if $f:\mathbb{R}\to \mathbb{R}$ satisfies $f'(x)\leq -f(x)+g(x), f(0)=a$ for some $a=0$, then 
 $f(x)\leq h(x)$ for $h: \mathbb{R}\to \mathbb{R}$ that satisfies $h(x)=-h'(x)+g(x), h(0)=a$ and is pointwise non-decreasing in $a$.
\end{claim}
\begin{proof}
Let $q(x)=e^x f(x)$, so that
$$
q'(x)=e^x f(x)+e^x f'(x)\leq e^x f(x)+e^x (-f(x)+g(x))=e^x g(x).
$$
Integrating from $0$ to $x$, we get $q(x)\leq q(0)+\int_0^x e^s g(s)ds=f(0)+\int_0^x e^s g(s)ds$. Letting $h(x):=e^{-x}(\int_0^x e^s g(s)ds+f(0))$, we note that
$$
f(x)=e^{-x} q(x)\leq e^{-x} (f(0)+\int_0^x e^s g(s)ds)=h(x).
$$
It remains to note that $h'(x)=-h(x)+g(x)$ for all $x\geq 0$, $h(0)=f(0)=a$, and $h(x)$ is non-decreasing in $f(0)=a$, as required.
\end{proof}

We will need
\begin{lemma}\duallabel{lm:z}
For all $k\geq 1$ and $\delta\geq 0$ 
\begin{equation*}
\frac1{k}\int_{k(1+\delta)}^\infty F^k(x)dx\leq k\cdot e^{-\delta k}(1+\delta)^{k}
\end{equation*}
\end{lemma}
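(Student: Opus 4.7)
The plan is to reduce the inequality to a clean algebraic comparison by computing the tail integral $\int_{k(1+\e^*)}^\infty F^k(x)\,dx$ in closed form, and then comparing term by term with the target bound.

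First I would reuse the closed form $F^k(x)=\sum_{i=0}^{k-1}e^{-x}x^i/i!$ (derived in the proof of the $k$-pass approximation theorem) together with the standard integration-by-parts identity $\int_a^\infty e^{-x}x^i/i!\,dx=\sum_{j=0}^{i}e^{-a}a^j/j!$. Swapping the two summations (exactly as done for $a=k$ in the earlier computation) yields
$$
\int_a^\infty F^k(x)\,dx=\sum_{j=0}^{k-1}(k-j)\,e^{-a}\,\frac{a^{j}}{j!}.
$$
Specializing to $a=k(1+\e^*)$ and factoring out $e^{-k(1+\e^*)}$ gives
$$
\int_{k(1+\e^*)}^\infty F^k(x)\,dx = e^{-k(1+\e^*)}\sum_{j=0}^{k-1}(k-j)\frac{k^j(1+\e^*)^j}{j!}.
$$

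Next, since $e^{-\e^* k}\cdot e^{-k}=e^{-k(1+\e^*)}$, multiplying the claimed inequality through by $k$ reduces it to
$$
\sum_{j=0}^{k-1}(k-j)\frac{k^{j}(1+\e^*)^{j}}{j!}\ \leq\ (1+\e^*)^{k}\cdot\frac{k^{k}}{(k-1)!}.
$$
Now I would invoke the identity $\sum_{j=0}^{k-1}(k-j)k^{j}/j!=k^{k}/(k-1)!$ (the telescoping sum already used to evaluate $\int_k^\infty F^k(x)\,dx$ in the earlier theorem), rewriting the target as
$$
\sum_{j=0}^{k-1}(k-j)\frac{k^{j}(1+\e^*)^{j}}{j!}\ \leq\ (1+\e^*)^{k}\sum_{j=0}^{k-1}(k-j)\frac{k^{j}}{j!}.
$$

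Finally, the inequality holds termwise: the coefficients $(k-j)k^{j}/j!$ are non-negative for $0\le j\le k-1$, and since $\e^*\ge 0$ and $j\le k-1<k$ we have $(1+\e^*)^{j}\le (1+\e^*)^{k}$. Summing these termwise inequalities gives the result.

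There is no real obstacle here; the only thing to notice is that once the tail integral is written in the combinatorial closed form $\sum_j(k-j)e^{-a}a^j/j!$, the bound reduces to a trivial monomial comparison $(1+\e^*)^j\le(1+\e^*)^k$, with the constant $k^k/(k-1)!$ appearing automatically from the telescoping identity already proved in the preceding theorem.
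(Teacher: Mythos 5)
Your proof is correct and takes essentially the same route as the paper's: both expand $F^k(x)=\sum_{j=0}^{k-1}e^{-x}x^j/j!$, compute the tail integral via integration by parts to get $\sum_{j=0}^{k-1}(k-j)e^{-a}a^j/j!$, bound termwise using $(1+\e^*)^j\le(1+\e^*)^k$, and close with the telescoping identity $\sum_{j=0}^{k-1}(k-j)k^j/j!=k^k/(k-1)!$. The only cosmetic difference is that you factor out $e^{-k(1+\e^*)}$ and phrase the step as a clean algebraic comparison before invoking the identity, whereas the paper folds these into a single chain of inequalities.
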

\begin{proof}
Recalling that $F^k(x)=\sum_{j=0}^{k-1} e^{-x} x^j/j!$ and using integration by parts
$$
\int_{k(1+\delta)}^\infty e^{-x} x^j/j!dx=\left.-e^{-x} x^j/j!\right|_{k(1+\delta)}^\infty+\int_{k(1+\delta)}^\infty e^{-x} x^{j-1}/(j-1)!dx,
$$
we get
\begin{equation}
\begin{split}
\int_{k(1+\delta)}^\infty F^k(x)dx&=\int_{k(1+\delta)}^\infty\sum_{j=0}^{k-1} e^{-x} x^j/j!dx\\
&=\sum_{j=0}^{k-1} (k-j)e^{-k(1+\delta)}(k(1+\delta))^{j}/j!\\
&\leq e^{-\delta k}(1+\delta)^{k} \cdot e^{-k}\sum_{j=0}^{k-1} k^{j+1}/j!\\
&\leq e^{-\delta k}(1+\delta)^{k} \cdot k e^{-k}\sum_{j=0}^\infty k^j/j!\\
&=k\cdot e^{-\delta k}(1+\delta)^{k}.
\end{split}
\end{equation}
\end{proof}

We now use Lemma~\dualref{lm:gap-q} to upper bound the second term in~\dualeqref{eq:int-2r} by $\e/2$, as required. By Lemma~\ref{lm:gap-q} we have
\begin{equation}\duallabel{eq:gap-bound}
\frac1{k}\int_{k/(1-\e/2)}^{\infty} b^{k}(x)dx\leq |A|\cdot  F^{k}(k/(1-\e/2)-(\e/4)k).
\end{equation}
Since
\begin{equation*}
\begin{split}
k/(1-\e/2)-(\e/4)k&= k\cdot \frac{1-\e/2+\e/2}{1-\e/2}-k\cdot \frac{(\e/4)(1-\e/2)}{1-\e/2}\\
&= k\left(1+\frac{\e/2-(\e/4)(1-\e/2)}{1-\e/2}\right)\\
&\geq k\left(1+\e/4\right)\\
\end{split}
\end{equation*}
when $\e$ is smaller than an absolute constant, we get, letting $\delta=\e/4$ for convenience of notation, that  by Lemma~\dualref{lm:z}
\begin{equation}
\begin{split}
\frac1{k}\int_{k(1+\delta)}^{\infty}  b^k(x)dx&\leq k\cdot e^{-k(\delta-\ln (1+\delta))}\leq k\cdot e^{-\Omega(\e^2 k)}
\end{split}
\end{equation}
as long as $\delta=\Theta(\e)$ is smaller than an absolute constant.  Hence, letting $k=C\ln (\frac1{\e}\cdot \sum_{a\in A} B_a)/\e^2$ for a sufficiently large constant $C>0$, we get by~\dualeqref{eq:int-2r} that every advertizer is satisfied with budget at least 
\begin{equation*}
\begin{split}
(1-\e/2)\cdot B_a-\frac{1}{k}\int_{k/(1-\e/2)}^\infty b^k(x)dx&\geq \alpha\cdot B_a-\left(\sum_{a\in A} B_a\right)\cdot ke^{-\Omega(\e^2 k)}\\
&\geq (1-\e/2)\cdot B_a-\e/2\\
&\geq (1-\e)\cdot B_a.\\
\end{split}
\end{equation*}
 This completes the proof of Theorem~\dualref{thm:gap}.

\newpage
\pdfbookmark[1]{\refname}{My\refname} 

\newcommand{\etalchar}[1]{$^{#1}$}

\begin{appendix}
\section{Proofs omitted from Section~\ref{sec:main}} \label{app:aux}

\noindent{\em {\bf Lemma~\ref{lm:intersection-size}} (Restated)
For every $m\geq 2$, integer $W\geq 1$ and $\delta'\in (0, 1)$ such that $1/\delta'$ is an integer, if $Y=[m^4]^m$ and the set $\S$ is defined by 
$$
\S=\{y\in Y: (y, \u) +\Delta_\u \mod W\in [a_\u, b_\u)\cdot W, \text{~for all~}\u\in \mathcal{U}\},
$$ where $\mathcal{U}$ is a collection of binary vectors of fixed length $w$ and $a_\u, b_\u\in [0, 1]$  are constant integer multiples of $1/L$ for an integer $L$, the following conditions hold if $W$ is an integer multiple of $w\cdot \text{lcm}(L, 1/\delta')$, $\Delta_\u/W$ are multiples of $1/L$  and $m$ is sufficiently large. 

 If $\max_{\substack{\u\in \mathcal{U}, \v\in \mathcal{U}\\\u\neq \v}} (\u,\v)/|\v|\leq \delta'$, then 
$$
\left||\S|-|Y|\cdot \prod_{\u\in \mathcal{U}} (b_\u-a_\u)\right|\leq |\mathcal{U}|^2(6L\delta'  +4/m)\cdot |Y|.
$$
}

Before proving the lemma we introduce some definitions. Throughout this section we use the notation  $Y=[m^4]^m$ for integer $m$. First define
\begin{definition}[Bad vertices] \label{def:bad-vertices}
We let $B\subseteq Y$ denote the set of {\em bad} vertices, i.e. vertices with at least one coordinate close to $0$ or $m^4$:
\begin{equation*}
B:=\{x\in Y: \exists i\in [m] \text{~such that~}x_i<m^2\text{~or~}x_i>m^4-m^2\}.
\end{equation*}
\end{definition}

We will use 
\begin{lemma}[The hypercube $Y={[m^4]}^m$ contains few bad vertices] \label{lm:few-bad-vertices}
For every integer $m>1$, if $Y=[m^4]^m$, then $|B|\leq  (2/m) |Y|$.
\end{lemma}
\begin{proof}
Follows directly by a union bound
$$
|B|\leq \sum_{i\in [m]} |\{x\in Y: x_i<m^2\text{~or~}x_i>m^4-m^2\}|\leq m\cdot (2m^2/m^4)\cdot |Y|\leq (2/m) |Y|.
$$
\end{proof}

We will extensively use the notion of  a discretization of the cube $Y=[m^4]^m$:
\begin{definition}[Discretization with precision $L$]\label{def:discretization}
For every integer $m, W, L>1$, $\delta\in (0, 1)$, every collection $\mathcal{U}$ of binary vectors of length $m$, every $q\in [L]^{\mathcal{U}}$  define
\begin{equation*}
S(q):=\left\{y\in Y: (y, \u) \mod W\in \left[\frac{q_\u-1}{L}, \frac{q_\u}{L}\right)\cdot W, \text{~for all~}\u\in \mathcal{U}\right\},
\end{equation*}
and
\begin{equation*}
\text{Int}_\delta(S(q)):=\left\{y\in Y: (y, \u) \mod W\in \left[\frac{q_\u-1}{L}+\delta, \frac{q_\u}{L}-\delta\right]\cdot W, \text{~for all~}\u\in \mathcal{U}\right\}.
\end{equation*}
\end{definition}

We will also use
\begin{definition}[Shifting map $\psi$]\label{def:shifting-map}
For every integer $m, W, L>1$, every collection $\mathcal{U}$ of binary vectors of weight $w$ such that $W/w$ is an integer, for every pair $q, r\in [L]^{\mathcal{U}}$ let 
$$
\psi_{r\to q}(y):=y+\sum_{\u\in \mathcal{U}}  \frac{W}{L\cdot w} (q-r)_{\u}\cdot \u.
$$

\end{definition}

We will use 

\begin{lemma}\label{lm:inclusion}
For every $\delta\in (0, 1)$, integer $m, W, L>1$, every collection $\mathcal{U}$ of binary vectors of weight $w$, if $Y=[m^4]^m$, $B\subseteq Y$ is the set of bad vertices (as per Definition~\ref{def:bad-vertices}), then the following conditions hold. If $|\mathcal{U}|\cdot (W/w)<m^2$, $\max_{\substack{\u\in \mathcal{U}, \v\in \mathcal{U}\\\u\neq \v}} (\u,\v)/|\v|\leq \delta'$ for some $\delta'\in (0, \delta/|\mathcal{U}|)$, then for every pair $q, r\in [L]^{\mathcal{U}}$ we have
$$
\psi_{r\to q}(\text{Int}_\delta(S(r))\setminus B)\subseteq S(q).
$$
\end{lemma}
\begin{proof}
First note that for every $y\in \text{Int}_\delta(S(r))\setminus B$ one has for every $q\in [L]^{\mathcal{U}}$
$$
\psi_{r\to q}(y)=y+\sum_{\u\in \mathcal{U}}  (q-r)_{\u}\cdot \u\cdot \frac{W}{L\cdot w}\in Y,
$$
since
\begin{equation*}
\begin{split}
\left\|\sum_{\u\in \mathcal{U}}  (q-r)_{\u}\cdot \u\cdot \frac{W}{L\cdot w}\right\|_\infty&\leq \sum_{\u\in \mathcal{U}}  \left\|(q-r)_{\u}\cdot \u\cdot \frac{W}{L\cdot w}\right\|_\infty\\
&\leq |\mathcal U| \frac{W}{L\cdot w} ||q-r||_\infty ||\u||_\infty \leq  |\mathcal{U}|\cdot (W/w)< m^2.
\end{split}
\end{equation*}
To obtain the last inequality we used the assumption that $\u$ is a binary vector, and the assumption of the lemma that $|\mathcal{U}|(W/w)<m^2$.

The rest of the proof proceeds in two steps. We first prove basic bounds on the dot product of $\psi_{r\to q}(y)$ with vectors $\u\in \mathcal{U}$, and then put these bounds together to obtain the result of the lemma. We have for every $y\in Y$ and $\u\in \mathcal{U}$
\begin{equation}\label{eq:23gwgewg}
\begin{split}
(\psi_{r\to q}(y), \u)&=(y, \u)+(q-r)_{\u} \cdot |\u|\cdot \frac{W}{L\cdot w}\text{~~~~~~~~~(since $\u$ is a binary vector)}\\
&+\sum_{\v \in \mathcal{U}, \v\neq \u} (q-r)_{\v} \cdot (\v, \u)\cdot \frac{W}{L\cdot w}\\
&=(y, \u)+\frac{(q-r)_{\u}}{L} \cdot W\text{~~~~~~~~~~~~~~~~~~~~~~~(intended shift in direction of $\u$)}\\
&+\sum_{\v \in \mathcal{U}, \v\neq \u} (q-r)_{\v} \cdot (\v, \u)\cdot \frac{W}{L\cdot w}\text{~~~~~~(small error term from near-orthogonality)}\\
\end{split}
\end{equation}
We now bound the error term (the last line) in the previous equation. We have, using the assumption that  $\max_{\substack{\u\in \mathcal{U}, \v\in \mathcal{U}\\\u\neq \v}} (\u,\v)/|\v|\leq \delta'$ as well as the assumption that all vectors in $\mathcal U$ have the same Hamming weight $w$, that
\begin{equation}\label{eq:23gwgegg34gwg}
\begin{split}
\left|\sum_{\v \in \mathcal{U}, \v\neq \u} (q-r)_{\v} (\v, \u)\cdot \frac{W}{L\cdot w}\right|&\leq \sum_{\v \in \mathcal{U}, \v\neq \u} \left|(q-r)_{\v}\right|\cdot  \delta' \cdot |\u|\cdot \frac{W}{L\cdot w}\\
&\leq \sum_{\v \in \mathcal{U}, \v\neq \u} \left|(q-r)_{\v}\right| \delta' \frac{W}{L}\\
&\leq \sum_{\v \in \mathcal{U}, \v\neq \u}  \delta' W\text{~~~~~~(since $||q-r||_\infty \leq L$)}\\
&\leq |\mathcal{U}|  \delta' W\\
\end{split}
\end{equation}

Combining ~\eqref{eq:23gwgewg} and~\eqref{eq:23gwgegg34gwg},  we thus get 
\begin{equation}\label{eq:13onjfngege}
\begin{split}
\left|(\psi_{r\to q}(y), \u)-((y, \u)+\frac{(q-r)_{\u}}{L} \cdot W)\right|\leq |\mathcal{U}|  \delta' W.
\end{split}
\end{equation}
Equipped with the bound above, we now proceed to complete the proof of the lemma. 

We now show that for every $y\in \text{Int}_\delta(S(r))\setminus B$ one has $\psi_{r\to q}(y)\in S(q)$. Indeed, for each $y\in \text{Int}_\delta(S(r))$ one has  by definition of $\text{Int}_\delta(S(q))$ (Definition~\ref{def:discretization})
\begin{equation*}
\begin{split}
((y, \u)+\frac{(q-r)_{\u}}{L} \cdot W) \mod W&\in   \left(\left[\frac{r_\u-1}{L}+\delta, \frac{r_\u}{L}-\delta\right]+\frac{(q-r)_{\u}}{L} \cdot W\right)\mod W\\
&\in   \left[\frac{q_\u-1}{L}+\delta, \frac{q_\u}{L}-\delta\right]\mod W
\end{split}
\end{equation*}

Combining the equation above with \eqref{eq:13onjfngege}, we thus get for every $y\in Y\setminus B$
$$
(\psi_{r\to q}(y), \u) \mod W\in \left[\frac{q_\u-1}{L}, \frac{q_\u}{L}\right)\mod W
$$
since $\delta'<\delta/|\mathcal{U}|$ by assumption of the lemma. We have thus proved that for every $q, r\in [L]^{\mathcal U}$ one has 
\begin{equation*}
\psi_{r\to q}(\text{Int}_\delta(S(r))\setminus B)\subseteq S(q),
\end{equation*}
as required.

\end{proof}

We will also use 
\begin{lemma}\label{lm:thin-slice}
For integer $m, w, W, L>1$ such that $m^2>W/w$, every vector $\u\in \{0, 1\}^m$ of Hamming weight $w$, if $Y=[m^4]^m$, $B\subseteq Y$ is the set of bad vertices (as per Definition~\ref{def:bad-vertices}), then the following conditions hold. If $W/(L\cdot w)$ is a positive integer, then for every $\mathcal{I}\subseteq [L]$ we have
$$
\left|\left\{y\in Y: (y, \u) \mod W\in \left[\frac{j-1}{L}, \frac{j}{L}\right]\cdot W, j\in \mathcal{I}\right\}\right|\leq (|\mathcal{I}|/L+2/m)|Y|.
$$
\end{lemma}
\begin{proof}
Consider a discretization of the cube (similarly to Definition~\ref{def:discretization}) with  $\mathcal{U}=\{\u\}$. Specifically, for $j\in [L]$ let
$$
Z_j:=\left\{y\in Y: (y, \u) \mod W\in \left[\frac{j-1}{L}, \frac{j}{L}\right]\cdot W\right\}
$$
and 
$$
Z'_j:=\left\{y\in Y\setminus B: (y, \u) \mod W\in \left[\frac{j-1}{L}, \frac{j}{L}\right]\cdot W\right\}.
$$
We also let $z_j:=|Z_j|$ and $z'_j:=|Z'_j|$ for every $j\in [L]$.

To bound the size of $Z_j$ and $Z_j'$, we use the shifting map $\psi$ from Definition~\ref{def:shifting-map} with $\mathcal{U}=\{\u\}$, so that for every $i, j\in [L]$
$$
\psi_{j\to i}(y)=y+(i-j)\cdot \u\cdot \frac{W}{L\cdot |\u|}.
$$
Note that for every $y\in Y$ one has\footnote{Note that here we prove  stronger properties of the shifting map than those proved in Lemma~\ref{lm:inclusion}, but only for the special case of $\mathcal{U}$ containing a single element.}
\begin{equation*}
\begin{split}
(\psi_{j\to i}(y), \u)&=(y, \u)+(i-j) \cdot |\u|\cdot \frac{W}{L\cdot |\u|}=(y, \u)+\frac{i-j}{L} \cdot W,
\end{split}
\end{equation*}
and for every coordinate $s\in [m]$ 
\begin{equation*}
\begin{split}
(\psi_{j\to i}(y))_s&=\left(y+(i-j) \cdot \u \cdot \frac{W}{L\cdot |\u|}\right)_s=y_s+(i-j) \cdot \u_s\cdot  \frac{W}{L\cdot |\u|}.
\end{split}
\end{equation*}

Since for every $y\in Y\setminus B$ and every coordinate $s\in [m]$ we have $y_s\in [m^2, m^4-m^2]$,  $|i-j|\leq L$ and $\u$ is a binary vector, we get that 
$$
y_s+(i-j) \cdot \u_s\cdot  \frac{W}{L\cdot |\u|}\in [m^2-W/|\u|, m^4-m^2+W/|\u|]\subseteq [m^4]
$$
since $m^2>W/|\u|$ by assumption of the lemma, as required. We thus conclude that for every $i, j\in [L]$ one has 
$$
\psi_{j\to i}(Z(j)\setminus B)\subseteq Z(i).
$$

Since $\psi_{j\to i}$ is injective for all $i, j\in [L]$, we therefore have that $z'_j=|Z'_j|\leq |Z_{i}|=z_i$ for all $i, j\in [L]$. We thus have, for any subset $\mathcal{I}\subseteq [L]$ of indices
\begin{equation*}
\begin{split}
\sum_{j\in \mathcal{I}} z_j&\leq \sum_{j\in \mathcal{I}} z'_j+\sum_{j\in \mathcal{I}} (z_j-z'_j)\\
&\leq (|\mathcal{I}|/L)\sum_{j\in [L]} z_j+\sum_{j\in \mathcal{I}} (z_j-z'_j)\text{~~~~~(since $z'_j\leq z_i$ for all $i\in [L]$)}\\
&\leq (|\mathcal{I}|/L)\sum_{j\in [L]} z_j+|B|\text{~~~~~~~~~~~~~~~~~~~~(since $Z_j\setminus Z'_j\subseteq B$ and $Z_j$'s are disjoint)}\\
&\leq (|\mathcal{I}|/L)\sum_{j\in [L]} z_j+(2/m) |Y|\text{~~~~~~~~~(by~Claim~\ref{def:bad-vertices})}\\
\end{split}
\end{equation*}
as required.
\end{proof}

\begin{lemma}\label{lm:delta-slice}
For every $\delta\in (0, 1)$ such that $1/\delta$ is an integer, integer $m, W, L>1$ such that $W/(\text{lcm}(L, 1/\delta)\cdot w)$ is an integer, every vector $\u\in \{0, 1\}^m$ of weight $w$, if $Y=[m^4]^m$, $B\subseteq Y$ is the set of bad vertices (as per Definition~\ref{def:bad-vertices}), then for every $\mathcal{I}\subseteq [L]$ we have
$$
\sum_{q\in [L]^{\mathcal{U}}} |S(q)\setminus \text{Int}_\delta(S(q))|\leq |\mathcal{U}| (3\delta L+2/m)\cdot |Y|
$$
\end{lemma}
\begin{proof}
One has using  Definition~\ref{def:discretization}
\begin{equation}\label{eq:thin-slice-union-bound}
\begin{split}
&\sum_{q\in [L]^{\mathcal{U}}} |S(q)\setminus \text{Int}_\delta(S(q))|\\
 \leq & |\mathcal{U}| \max_{\u\in \mathcal{U}} \left\{ y\in Y: (y, \u) \mod W\in \left[\frac{q}{L}-\delta, \frac{q}{L}+\delta\right]\cdot W, \text{~~~for some~}q\in [L]\right\},
\end{split}
\end{equation}

Let $L'$ be the least integer multiple of $1/\delta$ and $L$. By Lemma~\ref{lm:thin-slice} with parameter $L'$ (note that the preconditions as satisfied since $W/(\text{lcm}(L, 1/\delta)\cdot w)$ is an integer by assumption of the lemma) and
\begin{equation*}
\begin{split}
\mathcal{I}&:=\left\{q'\in [L']: \frac{q'}{L'}\in \left[\frac{q}{L}-\delta, \frac{q}{L}+\delta\right] \text{~for some~}q\in [L]\right\}\\
&=\left\{q'\in [L']: q'\in \left[q\cdot \frac{L'}{L}-\delta\cdot L', q\cdot \frac{L'}{L}+\delta L'\right] \text{~for some~}q\in [L]\right\},
\end{split}
\end{equation*}
we get, using the fact that $|\mathcal{I}|\leq (2\delta L'+1)\cdot L$, that 
\begin{equation*}
\begin{split}
&\left|\left\{y\in Y: (y, \u) \mod W\in \left[\frac{j-1}{L'}, \frac{j}{L'}\right]\cdot W, j\in \mathcal{I}\right\}\right|\\
&\leq (|\mathcal{I}|/L'+2/m)|Y|\\
&\leq ((2\delta+1/L')L+2/m)|Y|\\
&\leq (3\delta L+2/m)|Y|\text{~~~~(since $L'\geq 1/\delta$)}
\end{split}
\end{equation*}
Putting this together with~\eqref{eq:thin-slice-union-bound} yields the result.

\end{proof}

\begin{proofof}{Lemma~\ref{lm:intersection-size}}
Consider a discretization of the cube with parameters $L$ and $\delta\in (0, 1)$ (see Definition~\ref{def:discretization}). We use $\delta=2|\mathcal{U}|\cdot \delta'$.  Let  $A_\u=a_\u\cdot L, B_\u=b_\u\cdot L, \u\in \mathcal{U}$ be integers such that $a_\u=A_\u/L, b_\u=B_\u/L$. Recall that per Definition~\ref{def:discretization} we have
\begin{equation*}
S(q)=\left\{y\in Y: (y, \u) \mod W\in \left[\frac{q_\u-1}{L}, \frac{q_\u}{L}\right)\cdot W, \text{~for all~}\u\in \mathcal{U}\right\},
\end{equation*}
and
\begin{equation*}
\text{Int}_\delta(S(q))=\left\{y\in Y: (y, \u) \mod W\in \left[\frac{q_\u-1}{L}+\delta, \frac{q_\u}{L}-\delta\right]\cdot W, \text{~for all~}\u\in \mathcal{U}\right\}.
\end{equation*}
We let 
\begin{equation*}
\begin{split}
\mathcal J&:=\left\{q\in [L]^{\mathcal U}: ((q_\u/L)\cdot W+\Delta_\u) \mod W\in [A_\u/L, B_\u/L)\cdot W \right\}\\
&=\left\{q\in [L]^{\mathcal U}: ((q_\u/L+r_\u/L)\cdot W) \mod W\in [A_\u/L, B_\u/L)\cdot W \right\},\\
\end{split}
\end{equation*}
where $r_\u:=\Delta_\u\cdot L, \u\in \mathcal U$, are integers by assumption of the lemma. Note that 
$$
\S=\bigcup_{q\in \mathcal{J}} S(q),
$$
and hence, since $S(q)\cap S(q')=\emptyset$ for $q\neq q'$, we have
$$
|\S|=\sum_{q\in \mathcal{J}} |S(q)|.
$$
Also note that  $|\mathcal J|=\prod_{\u\in \mathcal U} (B_\u-A_\u)=L^{|\mathcal U|}\cdot \prod_{\u\in \mathcal U} (b_\u-a_\u)$. 

The proof proceeds in two steps. We first lower bound the size of $\S$ and then upper bound it. The arguments are quite similar, and rely on technical lemmas derived in the rest of this section.   
 
\noindent{\bf Lower bound.}  First,  by Lemma~\ref{lm:inclusion} that for every $q, r\in [L]^{\mathcal{U}}$ one has 
\begin{equation}\label{eq:shifing-map-incl}
\psi_{q\to r}(\text{Int}_\delta(S(q))\setminus B)\subseteq S(r).
\end{equation}

 Note that the preconditions of the lemma are satisfied since $|\mathcal{U}|\cdot (W/w)<m^2$ ($m$ is sufficiently large as function of other parameters) and we set $\delta=2|\mathcal{U}|\cdot \delta'$.  Applying~\eqref{eq:shifing-map-incl} for every $q\in [L]^{\mathcal U}$ and $r\in \mathcal{J}$   and noting that the mapping $\psi_{q\to r}$ is injective gives

\begin{equation*}
\begin{split}
|{\mathcal J}| \cdot \sum_{q\in [L]^{\mathcal U}} |\text{Int}_\delta(S(q))\setminus B| \leq L^{|{\mathcal U}|} \sum_{q\in \mathcal{J}} |S(q)|.
\end{split}
\end{equation*}

We thus get  
\begin{equation}\label{eq:lb-12fng}
\begin{split}
\sum_{q\in \mathcal{J}} |S(q)|&\geq \left(\prod_{\u\in {\mathcal U}} (b_\u-a_\u)\right)\cdot \sum_{q\in [L]^{\mathcal U}} |\text{Int}_\delta(S(q))\setminus B|\\
&\geq \left(\prod_{\u\in {\mathcal U}} (b_\u-a_\u)\right)\cdot \sum_{q\in [L]^{\mathcal U}} (|S(q)\setminus B|-|S(q)\setminus \text{Int}_\delta(S(q))|)\\
&\geq \left(\prod_{\u\in {\mathcal U}} (b_\u-a_\u)\right)\cdot (|Y|-|B|-\sum_{q\in [L]^{\mathcal U}} |S(q)\setminus \text{Int}_\delta(S(q))|)\\
&\geq \left(\prod_{\u\in {\mathcal U}} (b_\u-a_\u)\right)\cdot |Y|-|\mathcal{U}| (3\delta  L+4/m)|Y|.\text{~~~~~(by Lemma~\ref{lm:delta-slice} and Lemma~\ref{lm:few-bad-vertices})}\\
\end{split}
\end{equation}
We used Lemma~\ref{lm:delta-slice} and the fact that $\prod_{\u\in {\mathcal U}} (b_\u-a_\u)\leq 1$ since $a_\u, b_\u\in [0, 1]$ by assumption of the lemma to go from line~3 to line~4, and Lemma~\ref{lm:few-bad-vertices} to go from line~4 to line~5.

\noindent{\bf Upper bound.} At the same time we also get, using again that by Lemma~\ref{lm:inclusion} that for every $q, r\in [L]^{\mathcal{U}}$ one has 
 $$
\psi_{q\to r}(\text{Int}_\delta(S(q))\setminus B)\subseteq S(r),
 $$
 that 
\begin{equation*}
\begin{split}
|{\mathcal J}| \cdot \sum_{q\in [L]^{\mathcal U}} |S(q)| \geq L^{|{\mathcal U}|} \sum_{q\in \mathcal{J}} |\text{Int}_\delta(S(q))\setminus B|.\\
\end{split}
\end{equation*}
The above bound follows by noting that for every $q\in \mathcal{J}$ and $r\in [L]^{\mathcal U}$  one has 
 $\psi_{q\to r}(\text{Int}_\delta(S(q))\setminus B)\subseteq S(r)$, and the mapping $\psi_{q\to r}$ is injective. We thus get 
\begin{equation}\label{eq:ub-13ronergergrweg}
\begin{split}
\sum_{q\in \mathcal{J}} |\text{Int}_\delta(S(q))\setminus B|&\leq \left(\prod_{\u\in {\mathcal U}} (b_\u-a_\u)\right)\cdot \sum_{q\in [L]^{\mathcal U}} |S(q)|= \left(\prod_{\u\in {\mathcal U}} (b_\u-a_\u)\right)\cdot |Y|.\\
\end{split}
\end{equation}

We also have by Lemma~\ref{lm:delta-slice}  (applied with $\delta'$) and Lemma~\ref{lm:few-bad-vertices}
\begin{equation*}
\begin{split}
\sum_{q\in \mathcal{J}} |\text{Int}_\delta(S(q))\setminus B|&\geq \sum_{q\in \mathcal{J}} |S(q)|- \sum_{q\in [L]^{\mathcal{U}}}|S(q)\setminus \text{Int}_\delta(S(q))|-|B|\\
&\geq \sum_{q\in \mathcal{J}} |S(q)|- |\mathcal{U}|(3\delta L+4/m)|Y|.
\end{split}
\end{equation*}

Substituting this into~\eqref{eq:ub-13ronergergrweg},  we get 
\begin{equation}\label{eq:ub-12fng}
\begin{split}
\sum_{q\in \mathcal{J}} |S(q)|&\leq \left(\prod_{\u\in {\mathcal U}} (b_\u-a_\u)\right)\cdot |Y|+ |\mathcal{U}|(3\delta L +4/m) |Y|\\
\end{split}
\end{equation}

Finally, putting~\eqref{eq:ub-12fng} together with~\eqref{eq:lb-12fng}, we obtain the bound

\begin{equation*}
\begin{split}
\left||\S|-|Y|\cdot \prod_{\u\in \mathcal{U}} (b_\u-a_\u)\right|&\leq |\mathcal{U}|(3\delta L +4/m)\cdot |Y|\\
&\leq |\mathcal{U}|^2(6L\delta'  +4/m)\cdot |Y|
\end{split}
\end{equation*}
as required.
 \end{proofof}

\end{appendix}

\end{document}